\newif\ifanonym
\lstdefinestyle{cppstyle}{
  language=C++,
  basicstyle=\ttfamily\small,
  keywordstyle=\color{blue}\bfseries,
  commentstyle=\color{gray},
  stringstyle=\color{orange!70!black},
  numbers=left,
  numberstyle=\tiny\color{gray},
  stepnumber=1,
  numbersep=8pt,
  tabsize=2,
  showspaces=false,
  showstringspaces=false,
  breaklines=true,
  frame=single,
  rulecolor=\color{black!30},
  backgroundcolor=\color{gray!5},
  upquote=true,
  literate={"}{{"}}1
}
\newcounter{HALG@line}
\renewcommand{\theHALG@line}{\thealgorithm.\arabic{ALG@line}}
\let\epsilon\varepsilon
\newtheorem{theorem}{Theorem}[section]
\newtheorem{proposition}[theorem]{Proposition}
\newtheorem{lemma}[theorem]{Lemma}%
\newtheorem{corollary}[theorem]{Corollary}%
\newtheorem{example}[theorem]{Example}
\newtheorem{definition}[theorem]{Definition}
\newtheorem{claim}[theorem]{Claim}
\newcommand{\bv}{\begin{array}}
\newcommand{\Omit}[1]{}
\newcommand{\Mwin}{\mathcal{M}_k}
\newcommand{\Mlose}{\mathcal{M}_{\ell}}
\newcommand{\topp}{t_{\ell}}
\newcommand{\ntopp}{n_{\ell}}
\newcommand{\utopp}{u_{\ell}}
\DeclareMathOperator{\CH}{CH}
\begin{document}

\title{Likelihood of the Existence of Average Justified Representation}
\ifanonym
\author{Anonymous Authors}
\else
\author{%
  Qishen Han \\
  Rutgers University \\
  \texttt{hnickc2017@gmail.com} \\
  \and
  Biaoshuai Tao \\
  Shanghai Jiao Tong University \\
  \texttt{bstao@sjtu.edu.cn} \\
  \and
  Lirong Xia \\
  Rutgers University \\
  \texttt{xialirong@gmail.com} \\
  \and
  Chengkai Zhang \\
  Rutgers University \\
  \texttt{chengkai.zhang@rutgers.edu} \\
  \and
  Houyu Zhou \\
  UNSW Sydney \\
  \texttt{houyu.zhou@unsw.edu.au} \\
}
\fi
\date{}
\maketitle

\begin{abstract}
We study the approval-based multi-winner election problem where $n$ voters jointly decide a committee of $k$ winners from $m$ candidates. We focus on the axiom \emph{average justified representation} (AJR) proposed by Fern{\'{a}}ndez, Elkind, Lackner, Garc{\'{\i}}a, Arias{-}Fisteus, Basanta{-}Val, and Skowron (2017). AJR postulates that every group of voters with a common preference should be sufficiently represented in that their average satisfaction should be no less than their Hare quota. Formally, for every group of $\lceil\ell\cdot\frac{n}{k}\rceil$ voters with $\ell$ common approved candidates, the average number of approved winners for this group should be at least $\ell$. It is well-known that a winning committee satisfying AJR is not guaranteed to exist for all multi-winner election instances. In this paper, we study the likelihood of the existence of AJR under the Erd\H{o}s--R\'enyi model. We consider the Erd\H{o}s--R\'enyi model parameterized by $p\in[0,1]$ that samples multi-winner election instances from the distribution where each voter approves each candidate with probability $p$ (and the events that voters approve candidates are independent), and we provide a clean and complete characterization of the existence of AJR committees in the case where $m$ is a constant and $n$ tends to infinity. We show that there are two phase transition points $p_1$ and $p_2$ (with $p_1\leq p_2$) for the parameter $p$ such that: 1) when $p<p_1$ or $p>p_2$, an AJR committee exists with probability $1-o(1)$, 2) when $p_1<p<p_2$, an AJR committee exists with probability $o(1)$, and 3) when $p=p_1$ or $p=p_2$, the probability that an AJR committee exists is bounded away from both $0$ and $1$.
\end{abstract}

\section{Introduction}
\label{sec:intro}
In an approval-based multi-winner election, $n$ voters jointly decide a committee of $k$ winners from $m$ candidates by casting approval ballots (i.e., each voter's ballot contains a binary string of length $m$ indicating whether or not she approves each candidate).
The set of $k$ winners is called the winning committee.
Approval-based multi-winner election has been extensively studied in the past literature (see the recent book by \citet{lackner2023multiwinner} for a comprehensive survey).
Fairness in multi-winner approval voting is crucial for democratic governance, ensuring that diverse voter preferences are adequately reflected in elected committees. 
Among those fairness notions, \emph{proportional representation} is arguably the most important consideration, and it postulates that every party of the voters should receive a number of seats proportional to its population.
For example, to proportionally represent a party of $n/2$ voters, at least $k/2$ candidates approved by them should be selected as winners.
For elections where voters are not pre-specified into parties/groups, recent work has been considering \emph{justified representation (JR)} and its variants \cite{aziz2017jr} which generalize the concept of proportional representation.
Along this line of fairness axioms, the notion of ``party'' is naturally generalized to the notion of \emph{cohesive group}---a group of voters with similar preferences reflected by their approval ballots, and the winning committee should ensure that every cohesive group that justifies $\ell$ \emph{Hare quotas} should somehow have a satisfaction level of $\ell$ in some sense. Here, the Hare quota~\cite{pukelsheim2017quota}, first proposed by Alexander Hamilton for use in United States congressional apportionment, is defined as $n/k$---the minimum number of voters in a group that justifies one seat in the winning committee. 

Formally, we say that a set of voters $V$ is an \emph{$\ell$-cohesive group} if $|V|\geq\ell\cdot\frac nk$ and voters in $V$ approve at least $\ell$ candidates in common.
To formulate what it means by saying an $\ell$-cohesive group has a satisfaction level of $\ell$, different criteria have been proposed that yield different variants of JR.
Naturally, the satisfaction of a single voter can be defined as the number of candidates in the winning committee that she approves.
It is then natural to define the satisfaction of an $\ell$-cohesive group to be the average satisfaction of the voters in this group.
As an $\ell$-cohesive group deserves $\ell$ quotas in the spirit of proportional/justified representation, we would like to require that \textbf{every $\ell$-cohesive group has the average satisfaction at least $\ell$}.
This is exactly the notion of \emph{average justified representation} (AJR) proposed by~\citet{fernandez2017pjr}.
The formal mathematical definition of AJR is deferred to Definition~\ref{def:AJR}.

While AJR is a natural axiom aligning with the concept of proportional/justified representation, a winning committee satisfying AJR may not exist, as demonstrated in the example in Fig.~\ref{fig:AJR_not_exist}.
Subsequent work by \citet{aziz2018ejr,skowron2021proportionality} further explored this notion, showing that the well-known \emph{Proportional Approval Voting} (PAV) rule guarantees an average satisfaction of $\ell-1$ for every $\ell$-cohesive group,\footnote{More precisely, PAV guarantees an average satisfaction of $\ell - 1 + o(1/k)$. Moreover, for every constant $c > 0$, no rule can guarantee average satisfaction at least $(\ell - 1 + c)$ in general. Hence, we refer to $\ell - 1$ as its average satisfaction for simplicity.} one below the proportional ideal of $\ell$.

In addition, those worst-case scenarios where AJR committees fail to exist may be rare in practice. 
Indeed, empirical evidence suggests that AJR might often hold: \citet{brill2022individual,brill2025individual} studies a stronger representation axiom (requiring every voter in an $\ell$-cohesive group to individually approve at least $\ell$ winners). 
Their experiments, across multiple probabilistic models, show that for large enough committee sizes $k$, there exist winning committees satisfying this stronger condition (that implies AJR). 
This disparity between worst-case impossibility result and empirical results motivates a theoretical investigation of AJR under probabilistic assumptions.

\begin{figure}[!htb]
    \centering
    \begin{tikzpicture}[scale=0.4]
        \definecolor{fancyblue}{RGB}{127,172,204}  
        \definecolor{fancyyellow}{RGB}{246,189,78}    
        \definecolor{fancyred}{RGB}{233,108,102}  
        \definecolor{fancygreen}{RGB}{31,145,158}    
        \draw[fancyblue, line width=0.5mm, rounded corners] (0.2,0) rectangle ++(1.6,8);
        \draw[fancyyellow, line width=0.5mm, rounded corners] (0,0.2) rectangle ++(8,1.6);
        \draw[fancyred, line width=0.5mm, rounded corners] (0,6.2) rectangle ++(8,1.6);
        \draw[fancygreen, line width=0.5mm, rounded corners] (6.2,0) rectangle ++(1.6,8);
        \draw[black, line width=0.25mm] (1,7) circle (0.6);
        \draw[black, line width=0.25mm] (3,7) circle (0.6);
        \draw[black, line width=0.25mm] (5,7) circle (0.6);
        \draw[black, line width=0.25mm] (7,7) circle (0.6);
        \draw[black, line width=0.25mm] (1,1) circle (0.6);
        \draw[black, line width=0.25mm] (3,1) circle (0.6);
        \draw[black, line width=0.25mm] (5,1) circle (0.6);
        \draw[black, line width=0.25mm] (7,1) circle (0.6);
        \draw[black, line width=0.25mm] (1,3) circle (0.6);
        \draw[black, line width=0.25mm] (1,5) circle (0.6);
        \draw[black, line width=0.25mm] (7,3) circle (0.6);
        \draw[black, line width=0.25mm] (7,5) circle (0.6);
    \end{tikzpicture}
    \caption{An example where AJR committees do not exist. In this example, each black circle stands for a voter. The colored rectangles stand for candidates. If a circle is in a rectangle, it means that this voter approves this candidate. In this example, $n=12$ and $m=4$, and we are to select a winning committee of size three (i.e., $k=3$). In this case, there are four $1$-cohesive groups, each corresponding to one side of the large square. Selecting any three of these four candidates will lead to some group's average satisfaction being only $0.5$, which is less than the requirement $1$.}
    \label{fig:AJR_not_exist}
\end{figure}
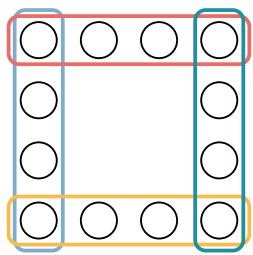

We initiate the study of the likelihood of AJR by using a simple, well-established model in multi-winner voting: \emph{the Erd\H{o}s--R\'enyi bipartite random graph} where each voter approves each candidate independently with probability $p$ (see, e.g., \citet{PetersP0021,SzufaFJLSST22,brill2022individual,elkind2023justifying}). 
This model captures a neutral ``base case'' where voter preferences lack inherent structure—a natural starting point for theoretical analysis.
We study the following problem in this paper.

\begin{quote}
    \emph{Under what values of $p$ does an AJR committee exist with high probability? Conversely, when is an AJR committee unlikely to exist?}
\end{quote}

Motivated by that the number of candidates $m$ is typically small while the number of voters $n$ can be very large (e.g., a political election in a country), we assume $m$ is a constant (and so is the size of the winning committee $k$) and study the above problem by analyzing the probabilities asymptotically with $n\rightarrow\infty$.
Under the Erd\H{o}s--R\'enyi model, AJR holds trivially at the extremes: when $p=1$, all committees satisfy AJR; when $p=0$, no cohesive group exists, so every winning committee automatically satisfies AJR. 
Naturally, we may expect that AJR committees exist with probability $1-o(1)$ for small enough $p$ (in which case no cohesive group exists with high probability) and large enough $p$ (in which case, with high probability, every cohesive group has a high enough satisfaction for an arbitrary winning committee).
The core challenge lies in the intermediate regime: is there any intermediate regime of $p$ where AJR committees are unlikely to exist? 
Does the likelihood for the existence of AJR committees change smoothly as $p$ changes? Or, are there any phase transition points? If so, how many?

\paragraph{On Other Justified Representation Variants.}
Many variants along the line of justified representation that are weaker than AJR have been proposed, with the most notable examples of \emph{JR}~\cite{aziz2017jr}, \emph{extended JR} (EJR)~\cite{aziz2018ejr}, and \emph{proportional JR} (PJR)~\cite{fernandez2017pjr,brill2017phragmen,brill2024phragmen}.
For EJR, it is required that every $\ell$-cohesive group contains \emph{one} voter with a satisfaction of at least $\ell$.
JR is an even weaker notion than EJR with the same requirement, except that only $1$-cohesive groups are concerned.
PJR is also a weaker notion than EJR that requires each $\ell$-cohesive group must have a \emph{total} satisfaction of at least $\ell$ (or equivalently, an average satisfaction of $\ell/|V|$ for every $\ell$-cohesive group $V$).
All these three axioms are guaranteed to be satisfiable, and, for each of them, a winning committee satisfying the axiom can be found in polynomial time~\cite{aziz2017jr,aziz2018complexity,peters2021proportional}.
It is easy to see that EJR (and thus JR) is substantially weaker than AJR: it only ensures the satisfaction of \emph{one} voter in each $\ell$-cohesive group (instead of the \emph{average} satisfaction of the whole group) being at least $\ell$. 
PJR, on the other hand, places requirements with respect to average satisfactions, but it is even weaker than EJR, as mentioned above.
In Example~\ref{example:EJR}, we demonstrate that the requirement postulated by EJR (and thus PJR and JR) can be very lenient that admits both natural winning committees and not-so-natural committees, while the requirement of AJR refines the set of valid winning committees by only keeping those natural ones.

\begin{example}
\label{example:EJR}
    Consider a voting instance with $n=k^2$ and $m=2k$ where $k$ is the size of the winning committee.
    The voters are indexed by $\{v_{ij}\}_{i=1,\ldots,k;j=1,\ldots,k}$.
    There are $k$ ``row'' candidates $r_1,\ldots,r_k$ and $k$ ``column'' candidates $c_1,\ldots,c_k$.
    Each voter $v_{ij}$ approves exactly two candidates: $r_i$ and $c_j$.

    In this instance, each of the $2k$ candidates ``induces'' a $1$-cohesive group, i.e., each ``row'' and each ``column'' of $\{v_{ij}\}_{i=1,\ldots,k;j=1,\ldots,k}$ form a $1$-cohesive group.
    It is easy to see that any set of $k$ candidates gives an EJR (and thus PJR and JR) committee.
    In this case, the satisfactions for some of the voters can be $0$.
    On the other hand, there are only two AJR committees: $\{r_1,\ldots,r_k\}$ and $\{c_1,\ldots,c_k\}$.
    In any of the two AJR committees, every voter's satisfaction is $1$.
\end{example}

In addition to those JR variants mentioned above that are weaker than AJR, other JR variants that are incomparable to AJR have also been proposed.
For example, \citet{peters2021proportional} introduced full JR (FJR), which considers not only cohesive groups but also near-cohesive ones. Building on this idea, \citet{KalayciLK2025} extended PJR to that setting and proposed full PJR (FPJR). \citet{brill2023ejr+} introduced EJR+ and PJR+, which offer an alternative weakening of cohesiveness.
The relationship between these notions is given in Fig.~\ref{fig:jr}.

All these variants of JR mentioned above are \emph{yes-or-no} measurements, i.e., a winning committee either satisfies the requirement or not.
Previous literature has also proposed many \emph{quantitative} measurements on the fairness of a winning committee, such as \emph{proportionality degree}~\cite{aziz2018ejr, skowron2021proportionality,janeczko2022proportionality} and \emph{EJR degree}~\cite{TaoZZ24}.
Roughly speaking, given a winning committee, its proportionality degree is the average satisfaction for the cohesive group with the least average satisfaction (a more formal definition is in the next paragraph), and its EJR degree is the minimum (taking over all cohesive groups) number of voters in an $\ell$-cohesive group whose satisfactions are at least $\ell$.
The notion of EJR degree is a quantitative generalization of EJR: instead of just \emph{requiring} at least \emph{one} voter in an $\ell$-cohesive group with satisfaction at least $\ell$, this notion further investigates if ``one'' is the best number possible and tries to \emph{optimize} this number.

The notion of proportionality degree, on the other hand, is more aligned with the notion of AJR by putting the average satisfaction as an objective to be maximized.
More formally, given a function $f:\mathbb{Z}^+\to\mathbb{R}_{\geq0}$, the proportionality degree of a winning committee is $f$ if the average satisfaction of each $\ell$-cohesive group is at least $f(\ell)$.
In terms of this, AJR exactly requires a proportionality degree of $f(\ell)=\ell$, which, as we have seen in Fig.~\ref{fig:AJR_not_exist}, may not always be satisfiable.
As we have also mentioned before, the PAV rule can guarantee a proportionality degree of $f(\ell)=\ell-1$.
Nevertheless, we would also like to remark that the particular choice $f(\ell)=\ell$ as the requirement in the definition of AJR has its special meaning: it faithfully reflects the idea of proportional representation as $\ell$ is exactly the Hare quota for an $\ell$-cohesive group.

\paragraph{Core Stability.}
Along the line of defining fairness based on proportional representation, another criterion is \emph{core stability}~\cite{gillies1959solutions,shapley1969market} which has also been studied widely~\cite{JiangMW20,MunagalaSWW22,MavrovMS23,Xia2025linear}.
In the context of multi-winner election, a winning committee $W$ of size $k$ is \emph{core stable} if there is no \emph{blocking coalition}, where a subset $V$ of voters is said to form a blocking coalition if there is another winning committee $W'$ of size $\lfloor k\frac{|V|}n\rfloor$ such that every voter in $V$ has a strictly higher satisfaction for $W'$ compared to $W$.
Consider a group of voters $V$ with size $|V|=\ell\cdot\frac nk$.
Instead of imposing a constraint $\ell$ on the satisfaction of $V$ like in those JR notions, core stability requires that there is no way to select $\ell$ candidates such that every voter in $V$ is strictly happier compared to the current winning committee $W$.
Notice also that there is no requirement that those voters in $V$ must approve $\ell$ common candidates (i.e., be an $\ell$-cohesive group).
Unlike the case with AJR that is known to be unsatisfiable for some instances, it is an open problem if a core stable committee always exists.

AJR and core stability do not imply each other (see Fig.~\ref{fig:jr} where we have also included core stability).
The example in Fig.~\ref{fig:AJR_not_exist} shows that a core stable committee may not satisfy AJR: it is easy to verify that every set of $3$ candidates forms a core stable winning committee, while we have seen that no AJR committee exists in this instance.
To see that an AJR commitee may not be core stable, consider an instance with $n=8$, $k=4$, and candidates $\{x_1,x_2,x_3,x_4,y,z\}$.
Voters 1, 2, and 3 approve $\{x_1,y\}$, voters 4, 5, and 6 approve $\{x_1,z\}$, and voters 7 and 8 approve $\{x_2,x_3,x_4\}$.
Committee $W=\{x_1,x_2,x_3,x_4\}$ satisfies AJR.
However, it does not satisfy core stability, as the set of voters $\{1,2,3,4,5,6\}$ can deviate to $W'=\{x_1,y,z\}$.

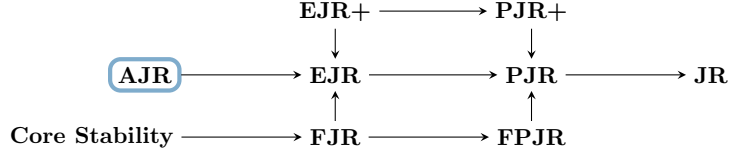
\begin{figure}[!htb]
        \centering
        \begin{tikzpicture}[
        scale=0.8,
        transform shape,
        node distance=0.5cm and 2cm,
        every node/.style={font=\bfseries},
        ->, >=stealth]
        \definecolor{fancyblue}{RGB}{127,172,204}  
        \definecolor{fancyyellow}{RGB}{246,189,78}    
        \definecolor{fancyred}{RGB}{233,108,102}  
        \definecolor{fancygreen}{RGB}{31,145,158}    
            \node (CS) {Core Stability};
            \node[right=of CS] (FJR) {FJR};
            \node[above=of FJR] (EJR) {EJR};
            \node[right=of FJR] (FPJR) {FPJR};
            \node[above=of FPJR] (PJR) {PJR};
            \node[right=of PJR] (JR) {JR};
            \node[left=of EJR, draw=fancyblue, ultra thick, rounded corners] (AJR) {AJR};
            \node[above=of EJR] (EJRplus) {EJR+};
            \node[above=of PJR] (PJRplus) {PJR+};
            \draw (CS) -> (FJR);
            \draw (FJR) -> (FPJR);
            \draw (FJR) -> (EJR);
            \draw (FPJR) -> (PJR);
            \draw (EJR) -> (PJR);
            \draw (PJR) -> (JR);
            \draw (EJRplus) -> (EJR);
            \draw (EJRplus) -> (PJRplus);
            \draw (PJRplus) -> (PJR);
            \draw (AJR) -> (EJR);
        \end{tikzpicture}
        \caption{Diagram of relationships among different variants of JR. Arrows indicate implications.}
        \label{fig:jr}
\end{figure}

\section{Overview of Contributions}

For an integer $m$, we use $[m]$ to denote the set $\{1, 2, \ldots, m\}$.

\paragraph{Approval-Based Multi-Winner Election.} There are $n$ voters who cast approval votes among a group of candidates $M = [m]$. We use $i$ to denote a generic candidate and $v$ to denote a generic voter. Each voter can approve an arbitrary number of candidates. For a voter $v$, we use $A(v)$ to denote the set of candidates approved by $v$, i.e., approval ballot of voter $v$.  
The goal is to elect a winner {\em committee} of size $k$, denoted by $W$. 

\begin{definition}[Average Justified Representation (AJR)] \label{def:AJR}
A winning committee $W$ of size $|W| = k$ provides AJR if for every $\ell \in [k]$ there does not exist a group of voters $V$ satisfying 
\begin{enumerate}
    \item Size Constraint: $|V| \ge \ell \cdot \frac{n}{k}$;
    \item Cohesiveness Constraint: $|\bigcap_{v\in V} A(v)| \ge \ell$;
    \item Underrepresented Constraint: $\sum_{v\in V} |A(v) \cap W| <\ell\cdot |V|$. 
\end{enumerate}
\end{definition}

For each $\ell$, we call a set/group of voters $V$ an \emph{$\ell$-cohesive group towards} $L$ if $|V| \ge \ell\cdot \frac{n}{k}$, $L \subseteq |\bigcap_{v\in V} A(v)|$, and $|L| = \ell$ (corresponding to the first two constraints in the definition above). We call an $\ell$-cohesive group $V$ \emph{underrepresented} if it satisfies Constraint (3) (in addition to satisfying (1) and (2)).  Let $U_v(W) = |A(v) \cap W|$ be the \emph{utility} of voter $v$ achieved by the winning committee $W$, $U(V,W) = \sum_{v\in V} |A(v) \cap W|$ be the total utility of $V$ achieved by $W$, and $u(U, V) = \frac{1}{|V|} U(V, W)$ be the average utility of $V$ achieved by $W$.
Then, an $\ell$-cohesive group $V$ is underrepresented if the average utility of $V$ is less than $\ell$.

\paragraph{Erd\H{o}s--R\'enyi Bipartite Model.} 
The approval ballot $A(v)$ of each voter $v$ is generated i.i.d.. A voter $v$ has a probability of $p \in [0, 1]$ to approve each candidate $i \in M$. The events of $v$ approving different candidates are independent. Therefore, the probability that $v$'s approval ballot is $A(v)$ is $p^{|A(v)|} \cdot (1 - p)^{m - |A(v)|}$. We use $\pi$ to denote the distribution of $A(v)$.


\subsection{Our Results}
\label{sect:ourresults}
We fully characterize the existence of an AJR committee under the Erd\H{o}s--R\'enyi bipartite model. This paper presents the following result.

\begin{theorem}\label{thm:main}
    For any constant $m$, $k$, and $p$ with $m> k\geq 2$ and $p\in[0,1]$, and for $n\to\infty$, the likelihood that there exists a committee $W$ that provides AJR has the following trichotomy.
    Let $p_1^\ast=\frac1k$ and $p_2^\ast$ be the maximum $x\in[0,1]$ such that
    \begin{equation}\label{eqn:thmmain}
        k(2x(1-x)^k+kx^2(1-x)^{k-1})=1.
    \end{equation}
    Then the likelihood that an AJR committee exists is
    \begin{enumerate}
        \item $1-o(1)$ if $p<p_1^\ast$ or $p>p_2^\ast$;
        \item $o(1)$ if $p_1^\ast<p<p_2^\ast$;
        \item $\Theta(1)$ and $1-\Theta(1)$ (i.e., bounded away from $0$ and $1$) if $p=p_1^\ast$ or $p=p_2^\ast$.
    \end{enumerate}
\end{theorem}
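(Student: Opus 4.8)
The plan is to reduce the existence of an AJR committee to a finite family of deterministic one‑variable inequalities via concentration, and then read off the trichotomy. \textbf{Reduction and concentration.} Fix a committee $W$. An $\ell$-cohesive group toward an $\ell$-set $L$ exists iff $s_L:=|\{v:L\subseteq A(v)\}|\ge\lceil\ell n/k\rceil$, and among all such groups the one minimizing average utility takes the $\lceil\ell n/k\rceil$ voters in $\{v:L\subseteq A(v)\}$ with smallest $|A(v)\cap W|$. So $W$ provides AJR iff for every $\ell$ and every $L$ with $s_L\ge\lceil\ell n/k\rceil$ this minimum average is $\ge\ell$, and an AJR committee exists iff some $W$ does; since $m$ is constant there are $O(1)$ triples $(\ell,L,W)$ and a union bound suffices. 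Now $s_L\sim\mathrm{Bin}(n,p^\ell)$ concentrates around $p^\ell n$, and because approvals of distinct candidates are independent, conditioned on which voters approve $L$ their utilities $|A(v)\cap W|$ are i.i.d.\ copies of $j+\mathrm{Bin}(k-j,p)$ with $j:=|L\cap W|$; hence the worst-$\lceil\ell n/k\rceil$-average concentrates, to within $o(1)$, around $\phi_{\ell,j}(p):=j+g^{(p)}_{k-j}\!\big(\min\{1,\tfrac{\ell}{kp^\ell}\}\big)$, where $g^{(p)}_t(q)$ is the mean of the lowest $q$-mass quantile of $\mathrm{Bin}(t,p)$ (so $g^{(p)}_t(1)=tp$). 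Thus, up to an event of probability $o(1)$, an AJR committee exists iff some $W$ has $\phi_{\ell,|L\cap W|}(p)\ge\ell$ for every $(\ell,L)$ with $p^\ell>\ell/k$; the borderline behavior, where a $\phi$-value sits exactly at its threshold, is the source of case~3.

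\textbf{The deterministic inequalities.} One checks: (i) $(\ell/k)^{1/\ell}$ is increasing on $\ell\in[1,k]$, so $\ell$-cohesive groups with $\ell\ge2$ occur only when $p\ge\sqrt{2/k}$; and a brief analysis of the unimodal polynomial $h(x):=kx(1-x)^{k-1}(2+(k-2)x)$, whose roots in $[0,1]$ are $p_1'\le p_2^*$, gives $p_2^*<\sqrt{2/k}$ and, for $k\ge3$, $p_1'<\tfrac1k=p_1^*<p_2^*$ (with $p_1'=p_2^*=\tfrac12$ when $k=2$). (ii) If $L\subseteq W$ the utility is $\ge\ell$ deterministically; for $\ell\ge2$ and $j<\ell$, whenever $p\ge(\ell/k)^{1/\ell}$ one has $\phi_{\ell,j}(p)\ge\ell$ with a strict (hence constant) gap --- the key point at $p=(\ell/k)^{1/\ell}$ being $g^{(p)}_{k-j}(1)=(k-j)(\ell/k)^{1/\ell}\ge\ell-j$, which follows from $k(\ell/k)^{1/\ell}>\ell$ (valid for $2\le\ell<k$) together with $(\ell/k)^{1/\ell}\le1$, extended to $p>(\ell/k)^{1/\ell}$ by a routine single‑variable calculation using the piecewise‑linear form of $q\mapsto g^{(p)}_{k-j}(q)$. (iii) For $\ell=1$: $\phi_{1,j}(p)\ge1$ always if $j\ge1$, while $\phi_{1,0}(p)=g^{(p)}_k(1/(kp))$ equals $1$ at $p=\tfrac1k$, is strictly below $1$ on $(\tfrac1k,p_2^*)$, equals $1$ at $p_2^*$, and exceeds $1$ on $(p_2^*,1]$. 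This last fact is exactly where the equation defining $p_2^*$ comes from: determining when the lowest-$\tfrac1{kp}$-mass mean of $\mathrm{Bin}(k,p)$ first reaches $1$ yields $\tfrac1{kp}=(1-p)^{k-1}(2+(k-2)p)$, i.e.\ $h(p)=1$, after a short case check on the ordering of the probabilities $\Pr[\mathrm{Bin}(k,p)=j]$ to identify which linear piece of $g^{(p)}_k$ is active at the solution.

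\textbf{Assembling the trichotomy.} If $p<p_1^*=\tfrac1k$, then $p^\ell<\ell/k$ for all $\ell$ by (i), so whp no cohesive group exists and AJR holds vacuously. If $p_1^*<p<p_2^*$, then by (i) whp no $\ell\ge2$ group exists, while $p>\tfrac1k$ forces a $1$-cohesive group toward every candidate whp; since $m>k$, every $W$ omits some candidate $i$, and by (iii) the worst subgroup toward $i$ has average $\to\phi_{1,0}(p)<1$, so every $W$ fails and no AJR committee exists whp. If $p>p_2^*$, fix any $W$ (e.g.\ $[k]$): by (iii) the $\ell=1$ constraints hold with a constant gap for $i\notin W$ and deterministically for $i\in W$, and by (ii) the $\ell\ge2$ constraints hold (constant gap when the group exists, deterministically when $L\subseteq W$, vacuously otherwise), so $W$ provides AJR whp. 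Finally, at $p=p_1^*$ or $p=p_2^*$ the only binding quantity $\phi_{1,0}$ sits exactly at $1$, so the relevant worst-subgroup averages equal $1+\Theta(n^{-1/2})$; rescaling by $\sqrt n$, a joint central limit theorem for the approval counts $s_{\{i\}}$ and the utility sums (using, for $i\notin W$, the independence of "$v$ approves $i$" from $|A(v)\cap W|$) shows that $\sqrt n\,\big(\max_W\min_{i\notin W}[\text{worst-subgroup-avg toward }i\text{ w.r.t.\ }W]-1\big)$ converges to a non-degenerate random variable; since AJR-existence is, up to $o(1)$, the event that this quantity is $\ge0$, its probability is bounded away from both $0$ and $1$, giving case~3.

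\textbf{Main obstacle.} The two delicate points are (a) the single-variable inequalities in (ii)--(iii): pinning down $\phi_{1,0}$ precisely so as to extract the formula for $p_2^*$ (including the case analysis on the ordering of binomial probabilities), and verifying the $\ell\ge2$ inequalities never dip below $\ell$ anywhere above their existence thresholds; and (b) the boundary analysis at $p_1^*,p_2^*$, where one must cope with the fact that whether a candidate induces an underrepresented group depends on the committee, and must establish non-degeneracy of the limiting Gaussian fluctuations that drive $\max_W\min_{i\notin W}$.
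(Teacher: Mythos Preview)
Your overall route mirrors the paper's: reduce to deterministic inequalities via concentration, show $\ell\ge2$ is never binding, show $\ell=1$, $j=0$ yields the two thresholds, and handle the boundary separately. Two points deserve emphasis.

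First, the ``routine'' steps in (ii)--(iii) are exactly where the paper does real work. Your argument for (ii) only covers the endpoint $p=(\ell/k)^{1/\ell}$, where indeed $\phi_{\ell,0}=kp>\ell$; but as $p$ increases, the quantile fraction $\ell/(kp^\ell)$ drops below $1$, so $\phi_{\ell,0}$ can \emph{decrease}, and it is not at all clear it stays above $\ell$ throughout. The paper handles this by lower-bounding $u_\ell\ge U(\ell+1)$ and reducing to the inequality $\binom{k}{\ell}\bigl(1-(\ell/k)^{1/\ell}\bigr)^{k-\ell}<1$ for $2\le\ell<k$, which is proved by splitting into three analytic regimes plus a computer check over the finitely many remaining $(k,\ell)$. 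Similarly, your ``short case check'' in (iii)---that the crossing of $\phi_{1,0}$ through $1$ occurs precisely in the $t_\ell=2$ piece and nowhere else---is the content of the paper's Proposition~4.3, established through five separate claims about the family $\{U(T)\}$, one of which again requires a computer verification. So the proposal is correct in shape but skates over the two places where the argument is genuinely delicate; calling them routine is optimistic.

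Second, your boundary treatment via a joint CLT is a genuinely different route from the paper, which instead encodes ``AJR fails for every $W$'' and ``some fixed $W$ satisfies AJR'' as membership of the full empirical histogram in explicit polyhedra and invokes Xia's PMV-in-polyhedron theorem to get $\Theta(1)$ directly. Your CLT approach is plausible in principle, but two subtleties are not addressed: at $p=p_1^\ast=1/k$ the \emph{existence} of a $1$-cohesive group toward each candidate is itself borderline (mean exactly $n/k$), so two $\sqrt n$-scale fluctuations interact, not just the average; and the event ``some AJR committee exists'' is a $\max_W\min_{i\notin W}$ over correlated statistics, whose Gaussian limit you must show places strictly positive mass on both sides of $0$---this is not automatic from non-degeneracy of the covariance alone. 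The polyhedron machinery sidesteps both issues by working with the entire $2^m$-dimensional histogram rather than one-dimensional projections.
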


As the first remark, the case with $k=1$, which is excluded from Theorem~\ref{thm:main}, is trivial: an AJR committee always exists.
To see this, in the case $k=1$, the only possible cohesive group is the set of all voters.
If they approve a common candidate, then selecting this candidate to be the winning committee gives an AJR committee.
Otherwise, there is no cohesive group at all, in which case an arbitrary winning committee automatically achieves the AJR requirement.

Theorem~\ref{thm:main} captures all the remaining cases.
The case for $k=2$ is also special, Equation~(\ref{eqn:thmmain}) only has one solution $x=\frac12$, in which case $p_1^\ast=p_2^\ast$.
In this case, case (2) of the theorem never applies, and case (3) applies only when $p=\frac12$.
For general cases $k\geq3$, the following proposition shows that $p_2^\ast>p_1^\ast=\frac1k$ and $p_2^\ast$ is of the same order of $\frac1k$.
Notice that the upper bound $\frac5k$ is not tight, but it shows that $p_2^\ast\rightarrow 0$ as $k\rightarrow\infty$.

\begin{proposition}\label{prop:p2}
    For $k\geq3$, we have $\frac1k<p_2^\ast<\min\{1,\frac5k\}$ for $p_2^\ast$ defined in Theorem~\ref{thm:main}.
\end{proposition}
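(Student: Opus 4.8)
The plan is to study the single-variable function $g_k(x) := k\bigl(2x(1-x)^k + k x^2 (1-x)^{k-1}\bigr)$ appearing in \eqref{eqn:thmmain}. The first step is to factor it into the more transparent form $g_k(x) = k x (1-x)^{k-1}\bigl(2 + (k-2)x\bigr)$, from which $g_k(0) = g_k(1) = 0$ and $g_k(x) > 0$ on $(0,1)$ are immediate. The second step is to show $g_k$ is unimodal on $[0,1]$: differentiating $\ln g_k$ gives $g_k'(x)/g_k(x) = \tfrac1x - \tfrac{k-1}{1-x} + \tfrac{k-2}{2+(k-2)x}$, and for $k \ge 2$ each of the three summands is nonincreasing in $x$ on $(0,1)$ (the first two strictly), so $g_k'/g_k$ is strictly decreasing; since it runs from $+\infty$ at $0^+$ to $-\infty$ at $1^-$, it has a unique zero $x^\ast \in (0,1)$, and $g_k$ strictly increases on $[0,x^\ast]$ and strictly decreases on $[x^\ast,1]$, attaining its unique maximum at $x^\ast$.

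The crux is to show this maximum exceeds $1$; I would do so by proving $g_k(1/k) > 1$ for every integer $k \ge 3$. Plugging in, $g_k(1/k) = (1-1/k)^{k-1}\cdot\tfrac{3k-2}{k}$, and I would view its logarithm $H(k) := (k-1)\ln(1-1/k) - \ln k + \ln(3k-2)$ as a function of a real variable $k \ge 2$. A short computation collapses $H'(k)$ to $\ln(1-1/k) + \tfrac{3}{3k-2}$; using the elementary bounds $\ln(1-1/k) \ge -\tfrac1k - \tfrac{1}{2k(k-1)}$ and $\tfrac{3}{3k-2} = \tfrac1k + \tfrac{2}{k(3k-2)}$ one obtains $H'(k) \ge \tfrac{k-2}{2k(k-1)(3k-2)} \ge 0$ for $k \ge 2$. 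Hence $H$ is nondecreasing on $[2,\infty)$, so for integers $k \ge 3$ we get $H(k) \ge H(3) = \ln\tfrac{28}{27} > 0$, i.e.\ $g_k(1/k) > 1$.

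Putting the pieces together: since $g_k$ is unimodal with $g_k(0) = g_k(1) = 0$ and global maximum $g_k(x^\ast) \ge g_k(1/k) > 1$, the equation $g_k(x) = 1$ has exactly two roots $a < x^\ast < b$ in $[0,1]$, with $g_k > 1$ precisely on $(a,b)$; in particular the maximal root is $p_2^\ast = b$, and $p_2^\ast$ is well defined. As $g_k(1/k) > 1$ forces $1/k \in (a,b)$, we immediately get $p_2^\ast = b > 1/k$. For the upper bound, $b \in (0,1)$ already gives $p_2^\ast < 1$, which settles $k \in \{3,4,5\}$ since $\min\{1, 5/k\} = 1$ there. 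For $k \ge 6$ I would instead show $g_k(5/k) < 1$: writing $g_k(5/k) = 5(1-5/k)^{k-1}(7-10/k)$ and using $(1-5/k)^{k-1} \le e^{-5}/(1-5/k) = e^{-5}k/(k-5)$ gives $g_k(5/k) < 35 e^{-5} k/(k-5) < 1$ whenever $k > 5/(1-35e^{-5}) \approx 6.54$, i.e.\ for $k \ge 7$; the lone remaining case $k = 6$ follows from the direct evaluation $g_6(5/6) = 5(1/6)^5(7-\tfrac53) < 1$. Since $g_k > 1$ only on $(a,b)$ and $1/k \in (a,b)$ with $1/k < 5/k$, the bound $g_k(5/k) < 1$ rules out $5/k \le a$ and thus forces $5/k > b = p_2^\ast$, completing $p_2^\ast < \min\{1,5/k\}$.

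The step I expect to be the real obstacle is the uniform lower bound $g_k(1/k) > 1$ for all $k \ge 3$: the two competing factors $(1-1/k)^{k-1}$ (decreasing to $1/e$) and $\tfrac{3k-2}{k}$ (increasing to $3$) make the monotonicity of their product not obvious at a glance, so the delicate point is extracting a clean enough lower bound on $H'(k)$ to conclude $H$ is nondecreasing from $k=2$ onward, which then reduces everything to the single numerical check $H(3) = \ln(28/27) > 0$. The remaining ingredients are routine: unimodality is a one-line ``sum of decreasing functions'' argument, and $g_k(5/k) < 1$ is a crude application of $1-x \le e^{-x}$ together with one small-case verification.
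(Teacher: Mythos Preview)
Your proof is correct, and it takes a genuinely different, more self-contained route than the paper's. The paper leverages machinery already developed for the main theorem: it observes that $\phi(x) = 2 - U(2)$ (with $\ell = 1$), then invokes the proof of the second part of Proposition~\ref{prop:ell=1} to get unimodality of $\phi$, and the third part of Proposition~\ref{prop:ell=1} (which says $U(2) < 1$ at $p = 1/k$) to get $\phi(1/k) > 1$. This makes the lower bound $p_2^\ast > 1/k$ essentially free given the surrounding context. You instead prove everything from scratch: the factorization $g_k(x) = kx(1-x)^{k-1}(2+(k-2)x)$, unimodality via the log-derivative, and the key inequality $g_k(1/k) > 1$ via the clean monotonicity argument on $H(k)$. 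The benefit of the paper's approach is economy within the broader narrative; the benefit of yours is that it stands on its own with only elementary calculus, and the reduction of the lower bound to the single numerical value $H(3) = \ln(28/27)$ is rather elegant.

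For the upper bound, both arguments evaluate the function at $5/k$ and use $1-x \le e^{-x}$, though with slightly different bookkeeping. Your treatment is actually a bit more careful: you explicitly separate $k \in \{3,4,5\}$ (where $\min\{1,5/k\} = 1$ and $p_2^\ast < 1$ is immediate) from $k \ge 6$, and you handle the borderline case $k = 6$ by direct evaluation. The paper's writeup applies $(1-5/k)^{k-1} < e^{-5(k-1)/k}$ under the blanket ``since $k \ge 3$,'' which strictly speaking requires $1 - 5/k > 0$, i.e., $k \ge 6$; the small-$k$ cases are covered implicitly by the earlier observation that $p_2^\ast \in (1/k,1)$.
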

The proof of this proposition uses some observations in the proof of Theorem~\ref{thm:main}, and is deferred to Sect.~\ref{sect:p2}.

Therefore, for $k\geq 3$, Theorem~\ref{thm:main} identifies two phase-transition points for $p$, namely, $p_1^\ast$ and $p_2^\ast$.
For $p<p_1^\ast$ and $p>p_2^\ast$, an AJR committee exists with high probability.
For $p_1^\ast<p<p_2^\ast$, no AJR committee exists with high probability.
At the two phase-transition points, the probability for the existence of an AJR committee is bounded away from $0$ and $1$.
It does not come as a surprise that an AJR committee exists with high probability for $p$ being either too small or too large.
Intuitively, for very small $p$, it is likely that there is no cohesive group at all, so every winning committee automatically satisfies AJR; for very large $p$, every candidate is approved by a large fraction of voters, and the AJR condition is satisfied for every committee of $k$ candidates.
It is much less intuitively clear what the story is for $p$ being neither very small nor very large, and our Theorem~\ref{thm:main} states that the AJR committee is unlikely to exist for $p$ falling into a \emph{single} intermediate interval.
In addition, the length of this intermediate interval decreases to $0$ (see Proposition~\ref{prop:p2}) as $k\rightarrow\infty$.

In Fig.~\ref{fig:p1p2}, we plot the values of $p_1^\ast$ and $p_2^\ast$ for $k=2,\ldots,10$.

\begin{figure}[!htb]
    \centering
    \begin{tikzpicture}
        \draw[->] (0,0)--(6,0);
        \draw[->] (0,0)--(0,6);
        \draw[dashed] (0,1)--(5.5,1);
        \draw[dashed] (0,1.5)--(5.5,1.5);
        \draw[dashed] (0,2)--(5.5,2);
        \draw[dashed] (0,2.5)--(5.5,2.5);
        \draw[dashed] (0,3)--(5.5,3);
        \draw[dashed] (0,3.5)--(5.5,3.5);
        \draw[dashed] (0,4)--(5.5,4);
        \draw[dashed] (0,4.5)--(5.5,4.5);
        \draw[dashed] (0,5)--(5.5,5);
        \filldraw (0,0) circle (1pt);
        \filldraw (5,0) circle (1pt);
        \node[anchor=east] at (0,1) {$k=2$};
        \node[anchor=east] at (0,1.5) {$k=3$};
        \node[anchor=east] at (0,2) {$k=4$};
        \node[anchor=east] at (0,2.5) {$k=5$};
        \node[anchor=east] at (0,3) {$k=6$};
        \node[anchor=east] at (0,3.5) {$k=7$};
        \node[anchor=east] at (0,4) {$k=8$};
        \node[anchor=east] at (0,4.5) {$k=9$};
        \node[anchor=east] at (0,5) {$k=10$};
        \node[anchor=north] at (0,0) {$0$};
        \node[anchor=north] at (5,0) {$0.5$};
        \node[anchor=north] at (6,0) {$p$};
        \filldraw (10/2,1) circle (2pt);
        \filldraw (10/3,1.5) circle (2pt);
        \filldraw (10/4,2) circle (2pt);
        \filldraw (10/5,2.5) circle (2pt);
        \filldraw (10/6,3) circle (2pt);
        \filldraw (10/7,3.5) circle (2pt);
        \filldraw (10/8,4) circle (2pt);
        \filldraw (10/9,4.5) circle (2pt);
        \filldraw (10/10,5) circle (2pt);
        \draw (10/2,1)--(10/3,1.5)--(10/4,2)--(10/5,2.5)--(10/6,3)--(10/7,3.5)--(10/8,4)--(10/9,4.5)--(10/10,5);
        \filldraw (5,1) circle (2pt);
        \filldraw (4.51333,1.5) circle (2pt);
        \filldraw (3.8,2) circle (2pt);
        \filldraw (10*0.327,2.5) circle (2pt);
        \filldraw (10*0.286667,3) circle (2pt);
        \filldraw (10*0.254857,3.5) circle (2pt);
        \filldraw (10*0.23,4) circle (2pt);
        \filldraw (10*0.209111,4.5) circle (2pt);
        \filldraw (10*0.192,5) circle (2pt);
        \draw (5,1)--(4.51333,1.5)--(3.8,2)--(10*0.327,2.5)--(10*0.286667,3)--(10*0.254857,3.5)--(10*0.23,4)--(10*0.209111,4.5)--(10*0.192,5);
        \node[anchor=south] at (10/10,5.3) {$p_1^\ast$};
        \node[anchor=south] at (10*0.192,5.3) {$p_2^\ast$};
    \end{tikzpicture}
    \caption{The values of $p_1^\ast$ and $p_2^\ast$ for $k=2,\ldots,10$.}
    \label{fig:p1p2}
\end{figure}
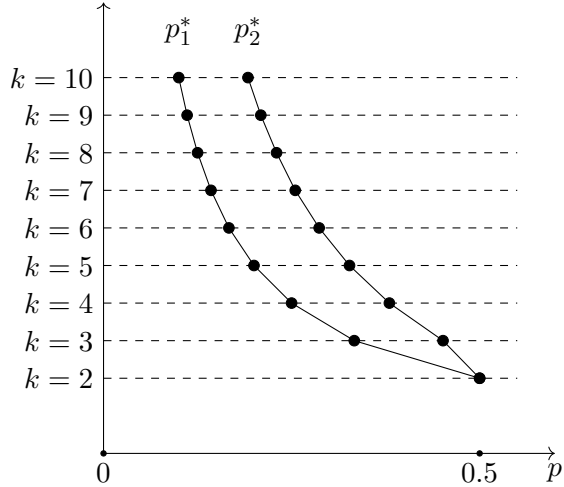

\subsection{Comparisons with Other Work}
\label{sect:compare}
Our result theoretically verifies the empirical findings in~\citet{brill2022individual,brill2025individual}: for large $k$, most values of $p$ would guarantee the existence (with high probability) of AJR committees.
Moreover, our result provides a fine-grained characterization of the existence of AJR committees by precisely describing the relationship between $p$ and the likelihood of AJR committees.

Empirical studies for EJR, PJR, and JR (discussed in Sect.~\ref{sec:intro}) have also been performed, with the main focus on the \emph{number} of valid committees (since we know valid committees always exist for these notions).
For EJR and PJR, under the Erd\H{o}s--R\'enyi model, \citet{bredereck2019experimental} shows that the number of valid committees goes down and then up as $p$ increases, and, for JR, \citet{elkind2023justifying} shows that this number is constantly $\binom{m}{k}$ (i.e., all committees are valid) for $n\rightarrow\infty$.
Our theoretical result is similar in that all committees satisfy AJR for small or large $p$.
The difference is that, for values of $p$ that are neither too small nor too large, \citet{bredereck2019experimental} shows that the number of EJR committees goes down, while ours shows that no AJR committee exists at all.
This makes a distinction between AJR and EJR/PJR/JR.
Notice also that both \citet{bredereck2019experimental} and \citet{elkind2023justifying} also consider the setting where fewer than $k$ candidates are selected and whether those JR notions are still satisfiable, which is not the focus of this paper.

Theoretically, \citet{Xia2025linear} studies the similar problem on the likelihood of existence of committees satisfying these axioms. 
Our result for AJR makes a sharp contrast with Xia's result on core stability.
In Xia's setting, each candidate $c_j$ is independently approved with probability $p_j$, and Xia finds that a committee satisfying core stability is very likely to exist regardless of the choice of $p_j$. 
In our work, we consider a more specialized model in which each candidate is approved independently with the same probability $p$. 
Despite that our model is the symmetric special case, we show that there is a regime where AJR does not exist with high probability, suggesting that AJR may be harder to satisfy than CS.
For example, for some natural parameters such as $m=10$, $k=4$, and $p=1/3$, for sufficiently large $n$, almost every committee satisfies core stability, while none of them satisfies AJR.

\subsection{Structure of Our Paper}
The remaining part of our paper is organized as follows.
In Sect.~\ref{sect:lemproof}, we provide an intermediate characterization (Lemma~\ref{lem:main}) that characterizes the phase-transition points for $p$ for the existence of AJR committees. The characterization is cumbersome but more intuitive for analysis based on Hoeffding's inequality.
In addition, the characterization itself does not have a closed form that identifies the number of phase-transition points.
In Sect.~\ref{sect:proofofmaintheorem}, we investigate the characterization obtained in Sect.~\ref{sect:lemproof} and show how it implies the clean characterization in our main result in Theorem~\ref{thm:main}.
The corner case where $p$ is exactly at one of the phase-transition points is analyzed in a completely different way.
In Sect.~\ref{sect:polyhedron}, we discuss the polyhedron approach proposed by~\citet{xia2021likely} and show how it can be applied to analyze the corner case when $p$ is at a phase-transition point.
Finally, we conclude our paper in Sect.~\ref{sect:conclusion}.

\section{An Intermediate Characterization}
\label{sect:lemproof}

In this section, we provide an intermediate characterization (Lemma~\ref{lem:main}) on the likelihood of AJR.
This characterization will eventually lead to Theorem~\ref{thm:main}, which is discussed in Sect.~\ref{sect:proofofmaintheorem}.

To introduce the intermediate characterization, we begin by describing some very high-level ideas.
Firstly, the likelihood of the existence of an AJR committee is $1-o(1)$ for $p$ being sufficiently small or sufficiently large.
For small $p$, with high probability, there is no cohesive group at all, so every winning committee $W$ automatically satisfies AJR.
For large $p$, every winning committee $W$ is likely to be approved by many voters, making the average utility for every cohesive group large, in which case AJR is also satisfied.

\paragraph{Identifying the threshold for the existence of cohesive groups.}
It is easy to find the threshold for $p$ below which no cohesive group exists.
Consider a set $L$ of $\ell$ candidates.
Each voter approves all the candidates in $L$ with probability $p^\ell$.
Thus, the expected number of voters approving all candidates in $L$ is $np^\ell$.
When $p < \sqrt[\ell]{\frac{\ell}{k}}$, we have $np^\ell< \ell\cdot \frac{n}{k}$. 
By Hoeffding's inequality, the probability that there exists a group of $\ell \cdot \frac{n}{k}$ voters that approve every candidate in $L$ is $\Theta(\exp(-n))$.
By applying a union bound over all those $\binom{m}{\ell}$ sets of $\ell$ candidates and noticing that $\binom{m}{\ell}$ is a constant, the probability an $\ell$-cohesive group exists is $o(1)$.
On the other hand, when $p > \sqrt[\ell]{\frac{\ell}{k}}$, we have $np^\ell>\ell\cdot\frac nk$.
By Hoeffding's inequality, the number of voters approving all candidates in $L$ is at least $\ell\cdot\frac nk$ with probability $1-\Theta(\exp(-n))$, in which case $\ell$-cohesive groups exist with high probability.
Therefore, the value $\sqrt[\ell]{\frac{\ell}{k}}$ is the threshold for the existence of $\ell$-cohesive groups.
Notice that this value minimizes at $\ell=1$.
The threshold for the existence of cohesive groups (with all $\ell$'s) is $\frac1k$.
This is exactly where the value $p_1^\ast$ in Theorem~\ref{thm:main} comes from. 

\paragraph{Analysis for the case where $p$ is above the threshold $\sqrt[\ell]{\frac{\ell}{k}}$.}
Following our analysis in the previous paragraph, when $p \ge \sqrt[\ell]{\frac{\ell}{k}}$, for every set $L$ of $\ell$ candidates, there is at least $\Theta(1)$ probability that an $\ell$-cohesive group towards $L$ exists. 
To characterize how large the value $p$ must be to guarantee the existence of an AJR committee, let us consider the worst-case scenario where $W\cap L=\emptyset$, i.e., consider every $\ell$-cohesive group where none of its common approved candidates is selected in $W$.
Intuitively, this is the ``worst-case'' as such a cohesive group is least satisfied and most likely to violate the AJR condition.
The analysis with $W\cap L\neq\emptyset$ requires some extra analysis, and we skip it at this moment for the purpose of a more intuitive discussion.
In the following, we fix $W=\{1,\ldots,k\}$ and $L=\{k+1,\ldots,k+\ell\}$. 

Among those voters who approve $L$, every $\ell\cdot\frac nk$ of them form an $\ell$-cohesive group.
To make sure AJR is satisfied, we need that those $\ell\cdot\frac nk$ voters in the group with \emph{minimum utilities} have an average utility of at least $\ell$.
To form such a group, we iteratively ``pick'' voters with minimum utilities among those voters who approve $L$.
We first pick those voters with utility $0$.
These voters approve all candidates in $L$ and no candidate in $W$.
The expected number of this type of voters is $np^\ell(1-p)^k$.
Next, we pick those voters with utility $1$.
These are the voters who approve all candidates in $L$ and exactly one candidate in $W$.
The expected number of them is $np^\ell\cdot\binom{k}{1}p(1-p)^{k-1}=\binom{k}1p^{\ell+1}(1-p)^{k-1}\cdot n$.
In general, the expected number of voters who approve all candidates in $L$ and exactly $t$ candidates in $W$ is $\binom{k}{t}\cdot p^{\ell + t}\cdot (1 - p)^{k - t}\cdot n$.
We keep this kind of greedy iterative selections until the total number of voters selected meets $\ell\cdot\frac nk$.
This process will stop at some $\topp$ where the number of voters in the $\ell$-cohesive group with utilities at most $\topp$ exceeds $\ell\cdot\frac nk$.
More precisely, $\topp$ is the smallest integer such that
\begin{equation}\label{eqn:tlandnl}
\ntopp := \sum_{t = 0}^{\topp} \binom{k}{t}\cdot p^{\ell + t}\cdot (1 - p)^{k - t}\cdot n \ge \ell \cdot \frac{n}{k}.
\end{equation}
Here, $\ntopp$ is the expected number of voters in the cohesive group with utilities at most $\topp$.
Finally, we need to remove $\ntopp-\ell\cdot\frac nk$ voters from those voters with utility $\topp$ to make the number of voters exactly $\ell\cdot\frac nk$.
This completes the selection of $\ell\cdot\frac nk$ voters in the $\ell$-cohesive group with minimum utilities.
Fig.~\ref{fig:jun6_1} illustrates the selection and the notations $\topp$ and $\ntopp$.

\begin{figure}[htbp]
    \centering
    \begin{tikzpicture}[scale=1]
    \definecolor{fancyblue}{RGB}{127,172,204}  
    \definecolor{fancyyellow}{RGB}{246,189,78}  
    \definecolor{fancyred}{RGB}{233,108,102}  
    \definecolor{fancygreen}{RGB}{31,145,158} 
    \draw[fancyblue, <->] (0,0.75) -- (4.2*1.5,0.75);
    \draw[fancyblue, <->] (0,-0.25) -- (5*1.5,-0.25);
    
    \draw[line width=0.1mm, rounded corners] (0,0) rectangle ++(6*1.5,0.5);
    \fill[fancyblue, rounded corners, opacity=0.5] (0,0) rectangle ++(5*1.5,0.5);

    \draw[black, line width=0.25mm] (6*1.5,-0.5) -- (6*1.5,1);
    \draw[black, line width=0.25mm] (-0.0*1.5,-0.5) -- (-0.0*1.5,1);

    \draw[black, line width=0.25mm] (5*1.5,-0.5) -- (5*1.5,0.5);
    \draw[black, line width=0.25mm] (4.2*1.5,0) -- (4.2*1.5,1);
    \draw[black, line width=0.25mm] (1*1.5,0) -- (1*1.5,0.5);
    \draw[black, line width=0.25mm] (2*1.5,0) -- (2*1.5,0.5);
    \draw[black, line width=0.25mm] (3*1.5,0) -- (3*1.5,0.5);
    \draw[black, line width=0.25mm] (4*1.5,0) -- (4*1.5,0.5);
    
    \node[above] at (2.1*1.5,0.75) {$\displaystyle |V| = \ell \cdot \frac{n}{k}$};
    \node[below] at (2.5*1.5,-0.25) {$\displaystyle n_\ell$};

    \node[above] at (0.5*1.5,0) {$\displaystyle 0$};
    \node[above] at (1.5*1.5,0) {$\displaystyle 1$};
    \node[above] at (2.5*1.5,0) {$\displaystyle 2$};
    \node[above] at (3.5*1.5,0) {$\displaystyle \cdots$};
    \node[above] at (4.5*1.5,0) {$\displaystyle t_\ell$};
    \node[above] at (5.5*1.5,0) {$\displaystyle t_\ell+1$};
\end{tikzpicture}
    \caption{Illustration of construction group $V$. Numbers in the chunk represent the number of approved candidates in $W$ for each chunk of voters.}
    \label{fig:jun6_1}
\end{figure}
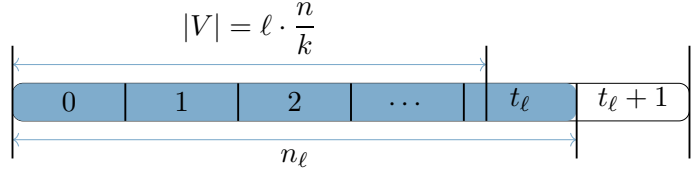

Notice that the above analysis and the voters' distribution shown in Fig.~\ref{fig:jun6_1} are all with respect to \emph{expected} numbers of voters.
However, given that the maximum value of $\ell$ (which is $k$) is a constant, we have a constant number of voter chunks in Fig.~\ref{fig:jun6_1}.
By a combination of Hoeffding's inequality and union bounds, the actual voter distribution follows the description of Fig.~\ref{fig:jun6_1} with high probability.

Lastly, we compute the expected average utility of those $\ell\cdot\frac nk$ selected voters.
As mentioned before, the expected number of voters who approve all candidates in $L$ and exactly $t$ candidates in $W$ is $\binom{k}{t}p^{\ell+t}(1-p)^{k-t}n$, which gives the expected total utility of $t\cdot\binom{k}{t}p^{\ell+t}(1-p)^{k-t}n$.
The last set of voters with utility exactly $\topp$ is ``truncated'' such that  $\ntopp-\ell\cdot\frac nk$ voters with utility $\topp$ are removed.
Putting together, the expected average utility for those $\ell\cdot\frac nk$ agents approving $L$ with minimum utilities is given by
\begin{equation*}
    \utopp = \frac{k}{\ell \cdot n}\cdot \left( \sum_{t = 0}^{\topp} t \cdot \binom{k}{t}\cdot p^{\ell + t}\cdot (1 - p)^{k - t}\cdot n - \topp\cdot (\ntopp - \ell\cdot n/k)\right). 
\end{equation*} 

If we have a large enough $p$ such that $u_\ell>\ell$ holds for all $\ell \in [k]$, then the least satisfied $\ell\cdot\frac nk$ voters who approve $L$ meets the AJR requirement with high probability.
Precisely, this probability is $1-\Theta(\exp(-n))$ by Hoeffding's inequality, as $u_\ell>\ell$ implies that the \emph{total} utility of these $\ell\frac nk$ voters is greater than $\ell\cdot\ell\frac nk$ by $\Theta(n)$.
By taking a union bound over all sets of $L$, we conclude that $W=\{1,\ldots,k\}$ is an AJR committee with high probability.

On the other hand, if $p$ is not large enough such that $u_\ell<\ell$ for some $\ell$, then, with high probability, those $\ell\cdot\frac nk$ voters approving $L$ with minimum utilities, which form an $\ell$-cohesive group, fail to meet the AJR requirements.
By taking a union bound over the choice of $W$, we conclude that an AJR committee does not exist with high probability.

Putting the above intuitions together, we have the following intermediate characterization for the likelihood of an AJR committee.

\begin{lemma}\label{lem:main}
    For any constant $m, k$, and $p$, the likelihood that there exists an AJR committee $W$  has the following trichotomy. 
    Let $\mathcal{L}=\{\ell\mid 1\leq\ell\leq \min\{k,m-k\}, p^{\ell}\geq\frac\ell{k}\}$.
    Let $\topp$ be the smallest positive integer such that $\ntopp = \sum_{t = 0}^{\topp} \binom{k}{t}\cdot p^{\ell + t}\cdot (1 - p)^{k - t}\cdot n \ge \ell \cdot \frac{n}{k}$. Let 
    \begin{equation}\label{eqn:lemmain}
        \utopp = \frac{k}{\ell \cdot n}\cdot \left( \sum_{t = 0}^{\topp} t \cdot \binom{k}{t}\cdot p^{\ell + t}\cdot (1 - p)^{k - t}\cdot n - \topp\cdot (\ntopp - \ell \cdot n/k)\right).
    \end{equation} 
    Then,
    \begin{enumerate}
        \item If $p < \frac{1}{k}$ (which means $\mathcal{L}=\emptyset$), $p = 1$, or $\utopp > \ell$ for every $\ell\in\mathcal{L}$, then the likelihood for the existence of an AJR committee is $1 - o(1)$;
        \item If there exists an $\ell\in\mathcal{L}$ such that $\utopp < \ell$, then the likelihood for the existence of an AJR committee is $o(1)$; 
        \item Otherwise, when $\utopp \ge \ell$ for every $\ell\in\mathcal{L}$ and there exists an $\ell\in\mathcal{L}$ such that $\utopp = \ell$, the likelihood for the existence of an AJR committee is $\Theta(1)$ and $1 - \Theta(1)$. 
    \end{enumerate}
\end{lemma}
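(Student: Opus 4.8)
The plan is to formalize the three-part intuition already sketched in the text, turning each heuristic argument about expected numbers of voters into a rigorous statement via Hoeffding's inequality and union bounds over the constantly many choices of $W$ and $L$. First I would fix a candidate committee $W$ with $|W|=k$ and a set $L$ of $\ell$ candidates disjoint from $W$ (the case $L\cap W\ne\emptyset$ only helps and is handled by the same computation with a shifted utility offset), and define for each $t\in\{0,\ldots,k\}$ the random variable $N_t$ counting voters who approve all of $L$ and exactly $t$ members of $W$. Each $N_t$ is a sum of $n$ i.i.d.\ indicators with mean $\binom{k}{t}p^{\ell+t}(1-p)^{k-t}n$, so Hoeffding gives $|N_t - \E[N_t]| \le \delta n$ with probability $1-\Theta(\exp(-n))$ for any fixed $\delta>0$; a union bound over the constantly many $(W,L,t,\ell)$ triples keeps this event overwhelmingly likely. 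On this event the realized ``chunk structure'' of Fig.~\ref{fig:jun6_1} matches the expected one up to $\Theta(n)$ slack, so the greedy selection of the $\ell\cdot\frac nk$ least-satisfied voters approving $L$ stops at the same $\topp$ and yields average utility $\utopp \pm O(\delta)$.

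The argument then splits according to the three cases. For case~1, if $p<1/k$ then $\mathcal L=\emptyset$ and for each fixed $L$ with $|L|=\ell\ge 1$ we have $np^\ell < \ell\cdot\frac nk$, so by Hoeffding the number of voters approving all of $L$ is below the Hare quota $\ell\cdot\frac nk$ with probability $1-\Theta(\exp(-n))$; union-bounding over the $\binom m\ell$ sets $L$ and all $\ell$ shows no cohesive group exists w.h.p., hence \emph{every} committee provides AJR. If instead $p=1$ then $U_v(W)=k\ge\ell$ for every voter and committee, so AJR holds trivially. In the remaining subcase $\utopp>\ell$ for all $\ell\in\mathcal L$: here I would show a \emph{specific} committee $W=\{1,\ldots,k\}$ works w.h.p. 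For every $L$ and every $\ell$, any $\ell$-cohesive group toward $L$ has average utility at least that of the $\ell\cdot\frac nk$ least-satisfied voters approving $L$, whose total utility is $\ell\cdot\utopp\cdot\frac nk \pm O(\delta n) > \ell\cdot\ell\cdot\frac nk$ once $\delta$ is small — so the AJR violation inequality $\sum_{v\in V}|A(v)\cap W| < \ell|V|$ fails for all cohesive $V$. (One subtlety: a cohesive group need not consist of the globally least-satisfied voters approving $L$, but since any group of the required size has total utility at least the minimum possible, which is what $\utopp$ computes, the bound still applies.) Union-bounding over $\ell$ and $L$ finishes this subcase.

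For case~2, suppose $\utopp<\ell$ for some $\ell\in\mathcal L$. Fix \emph{any} committee $W$. If $\ell\le m-k$ pick $L$ of size $\ell$ disjoint from $W$; since $p^\ell\ge \ell/k$, the number of voters approving all of $L$ is at least $\ell\cdot\frac nk$ w.h.p., and the $\ell\cdot\frac nk$ least-satisfied such voters have total utility $\ell\utopp\frac nk + O(\delta n) < \ell^2\frac nk$ for small $\delta$, so they form an underrepresented $\ell$-cohesive group. Union-bounding over the constantly many committees $W$ shows no committee provides AJR w.h.p. For case~3, the same estimates show that the ``bad event'' for a given $W$ occurs iff certain $N_t$ fluctuate to one side of an \emph{exact} threshold (because $\utopp=\ell$ means the expected total utility equals the cutoff exactly), and such a sum of i.i.d.\ indicators lands above vs.\ below its mean each with probability $\Theta(1)$ by the central limit theorem / Berry--Esseen; so the probability an AJR committee exists is bounded away from both $0$ and $1$. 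I expect the main obstacle to be case~3 and, more broadly, making precise the claim that the \emph{worst} cohesive group toward $L$ is (up to lower-order terms) exactly the greedily chosen one — i.e., that the minimum-total-utility set of size $\ell\cdot\frac nk$ among voters approving $L$ behaves like the expectation calculation predicts even at the boundary where no $\Theta(n)$ slack is available; this requires the finer CLT-type control rather than the crude Hoeffding bounds that suffice for cases~1 and~2, and is presumably where the polyhedron machinery of Sect.~\ref{sect:polyhedron} referenced in the introduction comes in.
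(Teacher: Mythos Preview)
Your treatment of cases~1 and~2 is essentially the paper's proof: the paper also fixes $W=\{1,\ldots,k\}$, defines the chunks $V_t$ (your $N_t$), applies Hoeffding to each, and union-bounds over the constantly many $(W,L,\ell,t)$. One point you gloss over is the case $L\cap W\neq\emptyset$. It is true that overlap ``only helps,'' but it is not literally a shifted-utility offset: if $|L\cap W|=h$ then every voter approving $L$ has utility $\ge h$, yet the size requirement is still $\ell\cdot n/k$ rather than $(\ell-h)\cdot n/k$, so you are comparing groups of different sizes. The paper makes this precise via a separate Lemma (its Lemma~\ref{lem:overlap}) showing $u_{\ell,h}^\ast > u_{\ell-h}+h$ by constructing an intermediate quantity $u_\ell'$; this is routine but does require a paragraph of bookkeeping, not just a one-line remark.

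For case~3 your CLT/Berry--Esseen heuristic has a genuine gap in one direction. Showing a \emph{fixed} $W$ is AJR with probability $\Theta(1)$ requires that \emph{every} $L$ (of every relevant $\ell$) simultaneously satisfy its constraint, and showing no $W$ is AJR requires that \emph{every} $W$ simultaneously fail. In either direction you face an intersection of $\Theta(1)$-many boundary events that are correlated through the shared vote histogram; a one-dimensional Berry--Esseen on a single linear combination of the $N_t$'s does not control such intersections. The paper's polyhedron approach (which you correctly anticipate) is precisely a multivariate local limit device: it writes each direction as a sub-event $\{\vec X_{\vec\pi}\in\mathcal H\}$ for a full-dimensional polyhedron $\mathcal H$ whose characteristic cone contains $\pi$, and then invokes the PMV-in-polyhedron theorem to get $\Theta(1)$. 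The nontrivial work, which your sketch does not attempt, is constructing explicit integer inner points of these polyhedra (perturbing the expectation vector by a careful constant amount and rounding) to certify full-dimensionality and nonemptiness of $\mathcal H_n^{\mathbb Z}$.
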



The proof for the first two parts of Lemma~\ref{lem:main} follows the same high-level ideas described before Lemma~\ref{lem:main}.
We defer the proof to Appendix~\ref{append:lemproof}.
The third part with $\utopp=\ell$ is analyzed by applying the techniques of \emph{polyhedron approach} in~\citet{xia2021likely}, and it is handled in Sect.~\ref{sect:polyhedron}.

In Sect.~\ref{sect:proofofmaintheorem}, we further analyze the relationship between $\utopp$ and $\ell$ and show how to obtain the much cleaner characterization in Theorem~\ref{thm:main}.

\paragraph{Only $1$-Cohesive Groups Matter.}
During the proof of Thoerem~\ref{thm:main} in the coming section, we discover an interesting phenomenon stated in Proposition~\ref{prop:ell>=2}: for every $\ell\geq 2$, if $p$ is large enough such that $\ell$-cohesive groups exist, i.e., $p^\ell\geq\frac\ell{k}$, then we automatically have $u_\ell>\ell$.
This means that the pre-conditions for (2) and (3) in Lemma~\ref{lem:main} can only hold for $\ell=1$. 
For $\ell\geq 2$, the value of $p$ that is large enough for the existence of an $\ell$-cohesive group is also large enough for this $\ell$-cohesive group to satisfy AJR for any given winning committee of $k$ candidates.
Therefore, AJR can only fail for those $1$-cohesive groups under the Erd\H{o}s--R\'enyi bipartite model.
We are not sure if the phenomenon that $1$-cohesive groups are the hardest to satisfy holds under other random models, and it is an interesting future direction to see how broadly this observation can be applied.

\section{Proof of Theorem~\ref{thm:main}}
\label{sect:proofofmaintheorem}
In this section, we take a closer look to the characterization in Lemma~\ref{lem:main} and reveal some relationships between $p$, $\utopp$, and $\ell$, which enable us to conclude Theorem~\ref{thm:main}.

One challenge for obtaining Theorem~\ref{thm:main} from Lemma~\ref{lem:main} is that the definitions of $t_\ell$ and $n_\ell$, which are given in Equation (\ref{eqn:tlandnl}), are complicated and difficult to analyze.
This makes the characterization of $\utopp$ (which depends on $t_\ell$ and $n_\ell$) difficult.
To overcome this, we begin by viewing $\utopp$ in a slightly different way: instead of letting the summation to up to the particular value $\topp$ defined in Lemma~\ref{lem:main}, we consider a general ``stopping value'' $T$ instead of $t_\ell$.
That is, we define
\begin{equation}\begin{aligned}
    U(T) &= \frac{k}{\ell}\left( \sum_{t=0}^{T} t\binom{k}{t} p^{\ell+t}(1-p)^{k-t}-T\left( \sum_{t=0}^{T} \binom{k}{t} p^{\ell+t}(1-p)^{k-t} -\frac{\ell}{k}\right) \right)\\
    &= T-\frac{k}{\ell}\left( \sum_{t=0}^{T} (T-t)\binom{k}{t} p^{\ell+t}(1-p)^{k-t} \right).
\end{aligned}\end{equation}
Notice that, compared with Equation (\ref{eqn:lemmain}) in Lemma~\ref{lem:main}, we have substituted $\ntopp$ to the equation, replaced $\topp$ by a general input value $T$, and canceled $n$.

Recall that $\topp$ is the minimum utility value such that the expected number of voters who approve $L$ with utilities at most $\topp$ exceeds $\ell\cdot\frac nk$.
When $T<\topp$, we have not selected enough number of voters to reach $\ell\cdot\frac nk$.
In this case, $\sum_{t=0}^{T} \binom{k}{t} p^{\ell+t}(1-p)^{k-t} -\frac{\ell}{k}<0$, and $U(T)$ treats those $(\ell\cdot\frac nk-\sum_{t=0}^{T} \binom{k}{t} p^{\ell+t}(1-p)^{k-t}\cdot n)$ extra voters as they receive utility $T$.
Since in reality their utilities are between $T$ and $\topp$, $U(T)$ has underestimated $\utopp$.
On the other hand, when $T>\topp$, we have selected more voters than needed.
In this case, $\sum_{t=0}^{T} \binom{k}{t} p^{\ell+t}(1-p)^{k-t} -\frac{\ell}{k}>0$, and $U(T)$ excludes these $(\sum_{t=0}^{T} \binom{k}{t} p^{\ell+t}(1-p)^{k-t} -\frac{\ell}{k})n$ voters as if they receive utility $T$.
Since in reality these agents' utilities is between $\topp$ and $T$, $U(T)$ again underestimates $\utopp$.
Therefore, we have the following proposition. The full proof is in Appendix~\ref{append:propU}. 

\begin{proposition}\label{prop:U}
    For any $k, \ell$, and $p$, we have $\displaystyle u_\ell = \max_{1\le T\le k} U(T)$.
\end{proposition}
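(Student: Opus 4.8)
The plan is to treat $U$ as a function on the integers $\{0,1,\dots,k\}$, show it is discretely concave, and locate its maximizer on $\{1,\dots,k\}$. The first thing to observe is that $U(\topp)=\utopp$: substituting $\ntopp=n\sum_{t=0}^{\topp}\binom{k}{t}p^{\ell+t}(1-p)^{k-t}$ into Equation~(\ref{eqn:lemmain}) and cancelling the common factor $n$ --- exactly the manipulation described in the paragraph before the proposition --- turns Equation~(\ref{eqn:lemmain}) into the definition of $U$ at $T=\topp$. So it suffices to show that $\topp$ maximizes $U$ over $\{1,\dots,k\}$.

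The key step is the first difference of $U$. Using the second expression $U(T)=T-\frac{k}{\ell}\sum_{t=0}^{T}(T-t)\binom{k}{t}p^{\ell+t}(1-p)^{k-t}$ and writing $q_t:=\binom{k}{t}p^{\ell+t}(1-p)^{k-t}\ge 0$, one checks (the $t=T+1$ summand contributes $0$, and the rest telescopes) that $\sum_{t=0}^{T+1}(T+1-t)q_t-\sum_{t=0}^{T}(T-t)q_t=\sum_{t=0}^{T}q_t$, hence
\[
U(T+1)-U(T)=1-\frac{k}{\ell}\sum_{t=0}^{T}q_t\qquad(0\le T\le k-1).
\]
Since the $q_t$ are nonnegative, $\sum_{t=0}^{T}q_t$ is nondecreasing in $T$, so the increments of $U$ are nonincreasing, i.e., $U$ is concave; the increment is positive precisely while $\sum_{t=0}^{T}q_t<\ell/k$ and nonpositive once $\sum_{t=0}^{T}q_t\ge \ell/k$.

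Finally I would match this sign change with the definition of $\topp$. By definition $\topp$ is the smallest positive integer with $\ntopp=n\sum_{t=0}^{\topp}q_t\ge \ell\cdot\frac nk$, i.e., with $\sum_{t=0}^{\topp}q_t\ge \ell/k$; moreover $\topp\le k$, since $\sum_{t=0}^{k}q_t=p^{\ell}\sum_{t=0}^{k}\binom{k}{t}p^{t}(1-p)^{k-t}=p^{\ell}\ge \ell/k$ (the hypothesis $p^{\ell}\ge\ell/k$ is what makes $\topp$, and hence $\utopp$, well defined at all). Therefore $U(T+1)>U(T)$ for $1\le T\le\topp-1$ and $U(T+1)\le U(T)$ for $\topp\le T\le k-1$, so the sequence $U(1),\dots,U(k)$ strictly increases up to $T=\topp$ and then weakly decreases; its maximum is attained at $T=\topp$, which gives $\utopp=U(\topp)=\max_{1\le T\le k}U(T)$. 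I do not expect a genuine obstacle here: the only things to watch are the endpoint bookkeeping --- the degenerate case $\topp=1$, where the increasing phase is empty and $U(1)\ge U(2)\ge\cdots$, and the fact that we maximize over $T\ge 1$, so that $U(0)=0$ (a value the formula for $U(T)$ can in principle undercut) is irrelevant --- while the one substantive idea is the identity $U(T+1)-U(T)=1-\frac{k}{\ell}\sum_{t=0}^{T}q_t$, which recasts the greedy definition of $\topp$ as the place where the increments of the concave function $U$ turn nonpositive.
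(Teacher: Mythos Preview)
Your proof is correct. The paper takes a slightly different route: instead of computing the first difference and invoking discrete concavity, it fixes $T\ne t_\ell$ and compares $U(T)$ with $U(t_\ell)$ directly by an add-and-subtract manipulation, bounding the weighted sum $\sum_t tC_t$ over the gap between $T$ and $t_\ell$ by $T\sum_t C_t$. Your concavity argument is the cleaner packaging, and the key identity $U(T+1)-U(T)=1-\frac{k}{\ell}\sum_{t=0}^{T}q_t$ you isolate is in fact the same one the paper itself exploits later (for $\ell=1$) when analysing the family $\{U(T)\}$ around $U(2)$ and $U(3)$. Both arguments encode the same intuition spelled out in the paragraph preceding the proposition --- $U(T)$ underestimates $u_\ell$ because it prices the missing or surplus voters at the flat rate $T$ rather than at their true utilities between $T$ and $t_\ell$ --- so the difference is organisation rather than content.
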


With this new interpretation of $\utopp$, we are ready to prove Theorem~\ref{thm:main}.
Our proof consists of two parts.

In the first part, we will show the following proposition.
\begin{proposition}\label{prop:ell>=2}
    For each $\ell\geq 2$, if $p^\ell\geq\frac\ell{k}$, then $\utopp>\ell$.
\end{proposition}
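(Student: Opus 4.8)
The plan is to use the alternative characterization $u_\ell = \max_{1\le T\le k} U(T)$ from Proposition~\ref{prop:U} and lower-bound $u_\ell$ by evaluating $U(T)$ at a single well-chosen value of $T$. The natural candidate is $T=1$, since we expect that for $\ell\ge 2$ the ``problematic'' voters with utility $0$ are rare when $p^\ell\ge \ell/k$. Concretely,
\begin{equation*}
    U(1) = 1 - \frac{k}{\ell}\,p^\ell(1-p)^k,
\end{equation*}
since only the $t=0$ term contributes to $\sum_{t=0}^{1}(1-t)\binom{k}{t}p^{\ell+t}(1-p)^{k-t}$. So it would suffice to show $U(1) > \ell$, i.e. $\frac{k}{\ell}p^\ell(1-p)^k < 1-\ell$ — but this is impossible for $\ell\ge 2$ since the right side is nonpositive. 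Hence $T=1$ is too weak, and the real work is in choosing $T$ large enough (roughly $T\approx \ell$, reflecting that an $\ell$-cohesive group ``deserves'' utility $\ell$) while keeping the correction term $\frac{k}{\ell}\sum_{t=0}^T (T-t)\binom{k}{t}p^{\ell+t}(1-p)^{k-t}$ small.

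I would therefore take $T = \ell$ (or possibly $T=\ell+1$ to get a strict inequality with slack) and aim to prove
\begin{equation*}
    U(\ell) = \ell - \frac{k}{\ell}\sum_{t=0}^{\ell}(\ell-t)\binom{k}{t}p^{\ell+t}(1-p)^{k-t} > \ell,
\end{equation*}
which would need the correction term to be negative — again impossible, so strictly $U(\ell)\le \ell$. This tells me the inequality $u_\ell>\ell$ must come from $T$ strictly between the ``underselection'' and ``overselection'' regimes being genuinely better than either endpoint, and in particular from the fact that when $p^\ell\ge\ell/k$ the expected mass $\sum_{t\ge 1}\binom kt p^{\ell+t}(1-p)^{k-t}$ of voters with positive utility already exceeds $\ell/k$ by a definite amount, so $t_\ell$ itself is at least $1$ and the truncation in $u_\ell$ discards only high-utility voters. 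The cleanest route is probably: (i) show $p^\ell\ge\ell/k$ together with $\ell\ge 2$ forces $p$ to be bounded below (e.g. $p\ge (2/k)^{1/\ell}$, and more usefully $p^\ell(1-p)^k$ is small relative to $\ell/k$); (ii) bound the ``defect'' $\sum_{t=0}^{t_\ell}(\ell-t)\binom kt p^{\ell+t}(1-p)^{k-t}$ from above, using that $p^\ell\ge \ell/k$ means $p\ge(\ell/k)^{1/\ell}$ is not too small so the binomial distribution $\mathrm{Bin}(k,p)$ conditioned appropriately puts enough mass at values $\ge \ell$; (iii) conclude the correction term is strictly less than the $T-\ell$ ``headroom'' one gains from pushing $T$ up to the right spot.

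The main obstacle, and where I expect the bulk of the technical effort, is step (ii): controlling the weighted tail $\sum_{t=0}^{t_\ell}(\ell - t)\binom{k}{t}p^{\ell+t}(1-p)^{k-t}$ and relating $t_\ell$ to $\ell$. One needs to show that the constraint $p^\ell \ge \ell/k$ — which only guarantees the $\ell$-cohesive group is nonempty in expectation — is already strong enough to force the per-voter utilities within that group to average above $\ell$. I would try to reduce this to a statement purely about the distribution $\mathrm{Bin}(k,p)$: the average of the smallest $\ell/k$-fraction of $\big(t \mid \text{approve all of }L\big)$, which by independence is just the smallest $(\ell/k)/p^\ell$-quantile average of $\mathrm{Bin}(k,p)$, and then use the fact that $\mathbb{E}[\mathrm{Bin}(k,p)] = kp \ge k(\ell/k)^{1/\ell} \ge \ell$ (the last step being a convexity/AM–GM estimate valid for $\ell\ge 2$) plus a lower-tail concentration bound (Hoeffding or a direct Chernoff estimate) to argue the quantile average cannot drop by more than a vanishing amount below $kp$. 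Making the quantile-average bound quantitative enough to beat $\ell$ with strict slack, uniformly in $k$ and $\ell\ge 2$, is the crux; I would handle small $\ell$ (especially $\ell=2$) and the regime $p$ close to $1$ separately if the general estimate is tight there.
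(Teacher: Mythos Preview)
You correctly identify the framework $u_\ell = \max_T U(T)$ and correctly diagnose that $T=1$ and $T=\ell$ are too weak (the latter because $U(\ell)\le \ell$ always). You even float $T=\ell+1$ in passing --- and that is precisely the choice the paper makes --- but you then abandon it in favor of a concentration-based argument on $\mathrm{Bin}(k,p)$. That detour is the gap: $k$ and $\ell$ here are \emph{fixed constants}, not parameters tending to infinity, so Hoeffding/Chernoff inequalities do not give you anything sharp enough to conclude the quantile average strictly exceeds $\ell$ uniformly over all $2\le\ell<k$. Small values (e.g.\ $\ell=2$, $k=3$) will defeat any argument that is asymptotic in spirit.

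The paper's route is to commit to $T=\ell+1$ and carry out an explicit computation. The condition $U(\ell+1)>\ell$ rearranges to
\[
\frac{kp^\ell}{\ell}\sum_{t=0}^{\ell}(\ell+1-t)\binom{k}{t}p^t(1-p)^{k-t}<1,
\]
and the key combinatorial step is to notice that $(\ell+1-t)\binom{k}{t}$ factors as $\frac{k\cdots(k-t+1)}{\ell\cdots(\ell-t+2)}\cdot\binom{\ell}{t}$, with the first factor bounded by $\binom{k}{\ell}$. This collapses the sum to $\binom{k}{\ell}(1-p)^{k-\ell}$ times a full binomial expansion, and the resulting expression $\frac{kp^\ell}{\ell}\binom{k}{\ell}(1-p)^{k-\ell}$ is monotone decreasing for $p>\ell/k$. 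Evaluating at the endpoint $p=(\ell/k)^{1/\ell}$ reduces everything to the purely combinatorial inequality
\[
\binom{k}{\ell}\Bigl(1-(\ell/k)^{1/\ell}\Bigr)^{k-\ell}<1,
\]
which the paper proves by splitting into regimes of $k/\ell$ and verifying a bounded range of small cases by computer. The need for that numerical check shows the inequality is genuinely tight, and explains why a soft concentration estimate was never going to close the argument.
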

This proposition implies that, for each $\ell\geq 2$, either with high probability no $\ell$-cohesive group exists (in the case $p^\ell<\frac\ell{k}$) or with high probability the AJR condition is met for every $\ell$-cohesive group ($\utopp>\ell$ as suggested by the proposition).
As a result, we do not need to worry about $\ell$-cohesive groups for $\ell\geq2$ at all.
This proposition also implies, in Lemma~\ref{lem:main}, the pre-condition for (2) and (3) can only happen with $\ell=1$.

In the second part, we study the case with $\ell=1$ and prove the following proposition.
\begin{proposition}\label{prop:ell=1}
    When $\ell=1$ and $1>p>\frac{1}{k}$,
    \begin{enumerate}
        \item $U(2)<1$ implies $u_1<1$;
        \item $U(2)=1$ implies $u_1=1$;
        \item $U(2)>1$ implies $u_1>1$.
    \end{enumerate}
    In addition, the equation $U(2)=1$ has exactly one root $p^\ast$ in $(\frac1k,1)$, and
    \begin{enumerate}
        \item $U(2)<1$ when $\frac1k<p<p^\ast$, and
        \item $U(2)>1$ when $p>p^\ast$.
    \end{enumerate}
    When $\ell=1$, $p=\frac1k$, we have $1=U(k)>U(k-1)>\cdots>U(1)$.
\end{proposition}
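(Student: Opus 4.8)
The plan is to rewrite $U(T)$ (for $\ell=1$) in terms of the binomial CDF, deduce that $U$ is a concave sequence in $T$ whose maximum over $T$ equals $u_1$, and then analyze that maximum. Let $X\sim\mathrm{Bin}(k,p)$ and $F(j)=\Pr[X\le j]=\sum_{t=0}^{j}\binom{k}{t}p^{t}(1-p)^{k-t}$. Expanding the definition of $U$ at $\ell=1$ and regrouping gives the telescoping identity
\[
  U(T)-U(T-1)=1-kp\,F(T-1),\qquad U(T)=T-kp\sum_{j=0}^{T-1}F(j)=T(1-kp)+kp\,\E[\min(X,T)].
\]
Since $F$ is non-decreasing in $j$, the increments $U(T)-U(T-1)$ are non-increasing, so $U$ is concave in $T$. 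As $F(0)=(1-p)^{k}<\frac1{kp}$ (because $kp(1-p)^{k}<1$) and $F(k)=1\ge\frac1{kp}$ (because $p\ge\frac1k$), concavity shows that $\max_{1\le T\le k}U(T)$ is attained at $t_1:=\min\{j\ge1:F(j)\ge\frac1{kp}\}$, which is exactly the index used in Lemma~\ref{lem:main}; by Proposition~\ref{prop:U}, $u_1=U(t_1)=\max_{1\le T\le k}U(T)$.

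\paragraph{The boundary $p=1/k$ and the root $p^{\ast}$.}
When $kp=1$, the increment $U(T)-U(T-1)=1-F(T-1)$ is strictly positive for every $T\le k$ (since $F(j)<1$ for $j<k$), so $U(1)<U(2)<\dots<U(k)$; moreover $U(k)=k-\sum_{j=0}^{k-1}F(j)=1$, using $\sum_{j=0}^{k}F(j)=(k+1)-\E[X]=k$ and $F(k)=1$. This is the last assertion of the proposition. For the root, write $U(2)=2-g(p)$ with $g(p)=k\!\left(2p(1-p)^{k}+kp^{2}(1-p)^{k-1}\right)=kp(1-p)^{k-1}\!\left(2+(k-2)p\right)$, so that $U(2)=1$ is exactly Equation~(\ref{eqn:thmmain}). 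A short computation gives $(\ln g)'(p)=\frac1p-\frac{k-1}{1-p}+\frac{k-2}{2+(k-2)p}$, a sum of three strictly decreasing functions, hence strictly decreasing on $(0,1)$; thus $g$ is unimodal (increasing then decreasing) on $(0,1)$. Since $g(\frac1k)=(1-\frac1k)^{k-1}(3-\frac2k)>1$ for $k\ge3$ while $g(1)=0$, the equation $g(p)=1$ has a unique solution $p^{\ast}\in(\frac1k,1)$, and $U(2)<1$ on $(\frac1k,p^{\ast})$, $U(2)>1$ on $(p^{\ast},1)$. (For $k=2$ one has $g(\frac1k)=1$ and $g<1$ on $(\frac12,1)$, so there is no interior root --- the degenerate case $p_1^{\ast}=p_2^{\ast}$.)

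\paragraph{The trichotomy.}
Since $u_1=\max_{T}U(T)\ge U(2)$, the implication ``$U(2)>1\Rightarrow u_1>1$'' is immediate, and it remains only to prove the converse: \emph{if $U(2)\le1$, then $U(T)\le1$ for every $T\in[k]$}, in the strengthened form $U(2)<1\Rightarrow U(T)<1$ for all $T$. This yields $u_1<1$ when $U(2)<1$ and $u_1=1$ when $U(2)=1$ (as $u_1\ge U(2)=1$), and combined with the last two sentences of the root-claim it gives the stated $p^{\ast}$-trichotomy. If $U(2)\le U(1)$, concavity forces $U(T)\le U(1)=1-kp(1-p)^{k}<1$ for all $T$. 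Otherwise $U(1)<U(2)\le1$, so $t_1\ge2$; if $t_1=2$ then $u_1=U(2)\le1$ and we are done, so assume $t_1\ge3$. Here the plan is: (i) $U(2)\le1$ (i.e.\ $g(p)\ge1$) confines $p$ to $(\frac1k,p^{\ast}]$, and a direct estimate of $g$ shows $kp^{\ast}$ is below an absolute constant (numerically $kp^{\ast}<2.3$), so $kp<C$ for some $C<3$; (ii) for $T\ge kp+1$, the bound $\E[\min(X,T)]\le kp$ together with $(kp)^{2}-Tkp+(T-1)=(kp-1)(kp-(T-1))\le0$ gives $U(T)\le1$; (iii) hence only $T=3$ (and possibly $T=4,5$) remains, where $U(3)\le1$ is equivalent to $kp\,F(2)\ge U(2)$ and can be checked as a two-variable inequality in $(k,p)$ over $(\frac1k,p^{\ast}]$. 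An alternative route proves directly that the phase transition of $u_1(p)$ is $p^{\ast}$: at $p=p^{\ast}$ the maximizer is $t_1=2$ (so $u_1(p^{\ast})=U(2)(p^{\ast})=1$); $\partial u_1/\partial p|_{p=1/k}=-k(k-2)<0$ for $k\ge3$ (envelope theorem, using $\frac{d}{dp}\Pr[X\le j]=-k\binom{k-1}{j}p^{j}(1-p)^{k-1-j}$); and $u_1(p)$ is unimodal on $[\frac1k,p^{\ast}]$, so it equals $1$ only at the two endpoints.

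\paragraph{Main obstacle.}
The reformulation of $U$, the $p=\frac1k$ identity, and the unimodality of $g$ are routine calculus; the genuine difficulty is ``$U(2)\le1\Rightarrow U(T)\le1$ for all $T$.'' Concavity of $U$ in $T$ does not suffice on its own --- a concave sequence can be $\le1$ at $T=1,2$ yet exceed $1$ later --- and the case $t_1\ge3$ does occur under $U(2)\le1$ (for $p$ near $\frac1k$), so the proof must use the explicit binomial structure: either the boundedness of $kp$ on the region $\{U(2)\le1\}$ followed by the finite case-check above (delicate near $p=p^{\ast}$ and for small $k$), or the log-concavity/unimodality of the binomial pmf, which forbids the ``flat-then-jump'' shape of $F$ that would be needed to violate the inequality.
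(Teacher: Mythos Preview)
Your reformulation $U(T)=T-kp\sum_{j=0}^{T-1}F(j)$ and the resulting concavity in $T$ are correct and useful; this is the content of Claim~\ref{prop: U(2)_dominate_all}, stated more transparently. Your proofs of the last two parts (the $p=1/k$ chain and the unique root $p^*$) are correct, and the log-derivative argument for the unimodality of $g$ is a clean alternative to the paper's direct computation of $\partial U(2)/\partial p$. One quibble: you assert $g(1/k)>1$ for $k\ge 3$ without proof, but this follows from the $p=1/k$ chain you just established, since there $U(2)<U(k)=1$.

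The genuine gap is in the trichotomy, specifically $U(2)\le 1\Rightarrow u_1\le 1$. Your plan (i)--(iii) reduces it to two unproven claims: that $kp^*<2.3$ uniformly in $k$, and that $U(3)\le 1$ whenever $U(2)\le 1$ and $kp>2$. Neither is established; the first is tighter than anything the paper proves (it only gets $kp^*<5$), and the second you leave as ``can be checked as a two-variable inequality.'' Your alternative route --- that $u_1(p)$ is valley-shaped on $[1/k,p^*]$ --- is precisely the hard content, and the envelope-theorem derivative at the endpoint $p=1/k$ says nothing about the interior, where the maximizer $t_1$ jumps.

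The paper closes this gap by a different mechanism: rather than bounding $U(T)$ pointwise in $p$, it tracks the whole family as $p$ increases, partitioning $(1/k,1)$ into three sub-intervals. The key pieces beyond concavity are Claim~\ref{prop:U(2)de_all_de} (if $\partial U(T)/\partial p<0$ then $\partial U(T+1)/\partial p<0$), which forces \emph{all} $U(T)$ to decrease on the initial sub-interval $(1/k,p_1)$ where $U(2)$ decreases, so all stay below $1$; Claim~\ref{prop: F(p_1)<0} (at the minimizer $p_1$ of $U(2)$ one already has $U(3)<U(2)$, so by your concavity $u_1=U(2)$ from there until $U(3)$ catches up at some $p_2$); and Claim~\ref{prop: U(2)>1_at_2_point} (by the time $U(3)=U(2)$ at $p_2$, already $U(2)>1$). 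Claim~\ref{prop: F(p_1)<0} itself requires a computer verification for $k\le 1000$, so the paper's proof is not purely analytic either --- but that check is explicit and finite, whereas your step (iii) is open-ended.
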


First of all, notice that we always have $\ell=1\in\mathcal{L}$ (see Lemma~\ref{lem:main} for definition of $\mathcal{L}$) when $1>p\geq\frac1k$.
Therefore, in the proposition above, $u_1>1$, $u_1<1$, and $u_1=1$ precisely correspond to (1), (2), and (3) in Lemma~\ref{lem:main}, respectively.

The first part of the proposition implies that the second function $U(2)$ in the family $\{U(T)\}_{T=1,\ldots,k}$ characterizes the relationship between $\utopp$ and $\ell$ (which is just $u_1$ and $1$ since Proposition~\ref{prop:ell>=2} tells us we do not need to look at $\ell\geq2$).
There is a complicated reason why $U(2)$ (instead of $U(T)$ for other values of $T$) is so special here. We will provide intuitions and formal proofs later.
The second part of the proposition gives a precise characterization of $U(2)$.
By noticing that Equation~(\ref{eqn:thmmain}) in Theorem~\ref{thm:main} is exactly the equation $U(2)=1$, we conclude Theorem~\ref{thm:main}, except for the only corner case $p=\frac1k$.
The last part discusses the corner case $p=\frac1k$.
By Proposition~\ref{prop:U} and Proposition~\ref{prop:ell>=2}, it implies $\utopp\geq\ell$ always holds and $\utopp=\ell$ for $\ell=1$.
Lemma~\ref{lem:main} says that the likelihood of an AJR committee is $\Theta(1)$ and $1-\Theta(1)$, which agrees with the description of Theorem~\ref{thm:main}.

Now, it remains to prove the two propositions.
Before this, we first argue that we can assume $\ell<k$ without loss of generality.
For $\ell=k$, according to Lemma~\ref{lem:main}, the only case we need to take such an $\ell$ into consideration is when $p=1$ (as this is the only $p$ making $p^\ell\geq\frac\ell k$).
In this case, it is obvious that AJR committees exist with probability $1$, which also agrees with our characterization in Theorem~\ref{thm:main}, as Proposition~\ref{prop:ell=1} implies that $p_2^\ast$ in Theorem~\ref{thm:main} is strictly less than $1$.
In the remaining part of this section, we will assume $\ell < k$.

\subsection{Proof of Proposition~\ref{prop:ell>=2}}

For $\ell\ge 2$, we will show that all $\ell$-cohesive groups always satisfy the AJR condition.

Since we have assumed $k>\ell$, we have $\ell+1\in\{1,\ldots,k\}$ and $u_\ell = \max\limits_{1\le T\le k} U(T) \ge U(\ell+1)$.
It then suffices to show that $U(\ell+1)>\ell$ for all $\ell\geq 2$, that is,
    $$
        U(\ell+1) = \ell+1-\frac{k}{\ell}\left( \sum_{t=0}^{\ell} (\ell+1-t)\binom{k}{t} p^{\ell+t}(1-p)^{k-t} \right) >\ell.
    $$

Rearranging the inequality, we obtain
    \begin{equation}\label{eqn:obj-ell>=2}
        \frac{kp^\ell}{\ell}\left( \sum_{t=0}^{\ell} (\ell+1-t)\binom{k}{t} p^{t}(1-p)^{k-t} \right)<1.
    \end{equation}

We perform the following calculations on the left-hand side of (\ref{eqn:obj-ell>=2}). 
\begin{align*}
    &\frac{kp^\ell}{\ell}\left( \sum_{t=0}^{\ell} (\ell+1-t)\binom{k}{t} p^{t}(1-p)^{k-t} \right)\\
    =& \frac{kp^\ell}{\ell}\left( \sum_{t=0}^{\ell} \frac{k\cdots(k-t+1)}{\ell\cdots(\ell-t+2)}\cdot(\ell+1-t)\frac{\ell\cdots(\ell-t+2)}{t!} p^{t}(1-p)^{k-t} \right)\\
    \leq& \frac{kp^\ell}{\ell}\left( \sum_{t=0}^{\ell} \binom{k}{\ell}\cdot\binom{\ell}{t} p^{t}(1-p)^{k-t} \right)\tag{since $\binom{k}\ell\geq\frac{k\cdots(k-t+1)}{\ell\cdots(\ell-t+2)}$ (see explanations below) and $\binom{\ell}{t}=(\ell+1-t)\frac{\ell\cdots(\ell-t+2)}{t!}$}\\
    =& \frac{kp^\ell}{\ell}\binom{k}{\ell}\cdot(1-p)^{k-\ell}\tag{$\sum_{t=0}^{\ell} \binom{\ell}{t} p^{t}(1-p)^{\ell-t}=(p+(1-p))^{\ell}=1$}
\end{align*}
For the middle inequality,  $\binom{k}\ell\geq\frac{k\cdots(k-t+1)}{\ell\cdots(\ell-t+2)}$ holds because  $\binom{k}\ell/\frac{k\cdots(k-t+1)}{\ell\cdots(\ell-t+2)}=\frac{1}{\ell-t+1}\binom{k-t}{\ell-t}$ is decreasing in $k$ and substituting the minimum value of $k$, which is $k=\ell+1$, gives the ratio $1$.

Let $f(p)=\frac{kp^\ell}{\ell}\binom{k}{\ell}\cdot(1-p)^{k-\ell}$. We compute its derivative
$$
            f'(p)=\frac{k}{\ell}\binom{k}{\ell}p^{\ell-1}(1-p)^{k-\ell-1}(\ell-kp).
$$ 
We have $f'(p)<0$ since $p\ge \sqrt[\ell]{\frac{\ell}{k}}>\frac{\ell}{k}$. Thus, we have
$$
            f(p)\le f\left(\sqrt[\ell]{\frac{\ell}{k}}\right)=\binom{k}{\ell}\left(1-\sqrt[\ell]{\frac{\ell}{k}}\right)^{k-\ell}.
$$
Now, it remains to show the following proposition.

\begin{proposition}\label{conjec:complex_ineq}
    For any integers $\ell$ and $k$ with $2\le \ell<k$,
    $$
        \binom{k}{\ell}\left(1-\sqrt[\ell]{\frac{\ell}{k}}\right)^{k-\ell}<1.
    $$
\end{proposition}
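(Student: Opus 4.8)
The plan is to bound $\binom{k}{\ell}$ from above and $\left(1-\sqrt[\ell]{\ell/k}\right)^{k-\ell}$ from above separately, and show their product is below $1$. The key quantity to control is $x := \sqrt[\ell]{\ell/k}$, which lies in $(0,1)$ since $\ell < k$; note $x^\ell = \ell/k$, so $k = \ell/x^\ell$ and $k - \ell = \ell(1-x^\ell)/x^\ell$. First I would use the standard estimate $\binom{k}{\ell} \le (ek/\ell)^\ell = (e/x^\ell)^\ell = e^\ell x^{-\ell^2}$. For the second factor, I would use $1 - x \le -\ln x = \ln(1/x)$ for $x \in (0,1)$, but actually the cleaner route is $1-x \le e^{-x}$... no — $1-x < 1$ is too weak and $1-x \le e^{-x}$ goes the wrong way for small exponents. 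Instead I would write $(1-x)^{k-\ell}$ and take logarithms: it suffices to show
\begin{equation*}
    \ell \ln\frac{ek}{\ell} + (k-\ell)\ln(1-x) < 0,
\end{equation*}
i.e. $\ell(1 + \ln(k/\ell)) < (k-\ell)\ln\frac{1}{1-x} = (k-\ell)\cdot(-\ln(1-x))$. Substituting $k/\ell = x^{-\ell}$ and $k - \ell = \ell(x^{-\ell}-1)$, after dividing by $\ell$ this becomes
\begin{equation*}
    1 + \ell\ln\frac1x \;<\; (x^{-\ell}-1)\ln\frac{1}{1-x}.
\end{equation*}
So the whole problem reduces to this single inequality in the two integer parameters, which I would attack by fixing $\ell \ge 2$ and viewing both sides as functions of $x \in (0, (\ell/(\ell+1))^{1/\ell}]$ (the upper limit coming from $k \ge \ell+1$).

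The main work is establishing that reduced inequality. The strategy I would use: bound $\ln\frac{1}{1-x} \ge x$ (valid for $x\in(0,1)$) and $x^{-\ell} - 1 \ge \ell x^{-\ell}(x^{-1}-1)\cdot(\text{something})$... more carefully, I'd use $x^{-\ell} - 1 = (x^{-1}-1)(x^{-(\ell-1)} + \cdots + 1) \ge \ell(x^{-1}-1)$ since each of the $\ell$ terms in the factor is $\ge 1$. Combining, the right-hand side is at least $\ell(x^{-1}-1)\cdot x = \ell(1 - x)$. So it would suffice that $1 + \ell\ln\frac1x < \ell(1-x)$, but this is \emph{false} for $x$ close to $1$ (LHS $\to 1$, RHS $\to 0$), so this crude bound loses too much near the top of the range. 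This tells me the argument must be split: for $x$ bounded away from $1$ the crude bounds work, while near $x = 1$ (equivalently $k$ close to $\ell+1$, or more generally $k$ a bounded multiple of $\ell$) one needs the sharper estimate $\ln\frac{1}{1-x} \ge x + x^2/2$, or one handles the regime $k \le C\ell$ by a more direct finite argument.

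The alternative, and probably cleaner, route I would actually pursue: prove the inequality by induction on $k$ for fixed $\ell$, or treat it as a monotonicity statement. Set $g(k) = \binom{k}{\ell}\left(1-(\ell/k)^{1/\ell}\right)^{k-\ell}$ for real $k > \ell$. At $k = \ell$ the second factor is $0^0 = 1$ and $\binom{\ell}{\ell}=1$, giving the boundary value $1$; I'd show $g$ is strictly decreasing on $(\ell, \infty)$, or at least that $g(k) < 1$ for all $k > \ell$, by examining $\frac{d}{dk}\ln g(k)$. We have $\frac{d}{dk}\ln\binom{k}{\ell} = \psi(k+1) - \psi(k-\ell+1) \approx \ln\frac{k}{k-\ell}$, and $\frac{d}{dk}\big[(k-\ell)\ln(1-(\ell/k)^{1/\ell})\big]$ contributes $\ln(1-x)$ plus a negative correction term from differentiating $x = (\ell/k)^{1/\ell}$. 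The sign analysis should yield $\frac{d}{dk}\ln g(k) \le \ln\frac{k}{k-\ell} + \ln(1-x) = \ln\frac{k(1-x)}{k-\ell} = \ln\frac{1-x}{1-x^\ell} < 0$ (since $1 - x < 1 - x^\ell$ for $x\in(0,1)$, $\ell \ge 2$), \emph{after} checking that the omitted correction term is also negative. Combined with $\lim_{k\to\ell^+} g(k) = 1$, this gives $g(k) < 1$ for all $k > \ell$, hence in particular for integer $k \ge \ell+1$.

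The hard part will be making the calculus rigorous: the function $g$ is only formally $1$ in the limit $k \to \ell^+$ (the expression $0^0$), so I must verify $\lim_{k\to\ell^+}(k-\ell)\ln(1-(\ell/k)^{1/\ell}) = 0$ by a Taylor expansion of $1 - (\ell/k)^{1/\ell}$ near $k = \ell$ (it is $O(k-\ell)$, so $(k-\ell)\ln(k-\ell) \to 0$), and I must bound the digamma-difference and the correction term cleanly enough to pin down the sign of $\frac{d}{dk}\ln g(k)$ for \emph{all} $k > \ell$ rather than just asymptotically. If the uniform sign control proves fussy, the fallback is the two-regime split above — crude log bounds for $k \ge 3\ell$, say, and a short separate verification (possibly a second-order Taylor bound on $\ln\frac{1}{1-x}$, or even a finite computation exploiting that $\ell \ge 2$) for $\ell < k < 3\ell$.
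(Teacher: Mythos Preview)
Your monotonicity approach has a concrete sign error: the correction term from differentiating $x=(\ell/k)^{1/\ell}$ is \emph{positive}, not negative. Indeed $x'=-x/(\ell k)$, so
\[
\frac{d}{dk}\bigl[(k-\ell)\ln(1-x)\bigr]=\ln(1-x)+\frac{(k-\ell)x}{\ell k(1-x)}=\ln(1-x)+\frac{x(1-x^\ell)}{\ell(1-x)},
\]
and the second summand equals $\frac{1}{\ell}(x+x^2+\cdots+x^\ell)>0$. So the inequality you hoped for, $\frac{d}{dk}\ln g(k)\le\ln\frac{1-x}{1-x^\ell}<0$, does not follow; you would need to absorb a positive term, and near $x\to 1$ that term tends to $1$ while $\ln\frac{1-x}{1-x^\ell}\to -\ln\ell$, so for $\ell=2$ the slack is essentially gone and you would have to exploit the gap between $\psi(k+1)-\psi(k-\ell+1)$ and $\ln\frac{k}{k-\ell}$. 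This may be salvageable but it is real work, not a sign check. Separately, your first route is weaker than you think: the \emph{reduced} inequality $1+\ell\ln(1/x)<(x^{-\ell}-1)\ln\frac{1}{1-x}$ already fails for $\ell=2$, $k=3$ (LHS $\approx 1.41$, RHS $\approx 0.85$), because the upstream bound $\binom{k}{\ell}\le(ek/\ell)^\ell$ is far too loose when $k$ is close to $\ell$. So it is not just your secondary estimates on $\ln\frac{1}{1-x}$ that lose; the reduction itself must be replaced in that regime (e.g.\ by $\binom{k}{\ell}\le k^{k-\ell}$).

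Your fallback plan---split into a large-$k/\ell$ regime where crude log bounds work and a bounded-ratio regime handled separately---is exactly what the paper does, but the paper's experience is a warning: they needed \emph{four} regimes (using $\binom{k}{\ell}\le k^{k-\ell}$ for $k\le\frac{1+\sqrt5}{2}\ell$, the $(ek/\ell)^\ell$ bound for $k\ge 29\ell$, a further argument for intermediate ratios with $\ell\ge 3947$, and a computer enumeration of all remaining pairs). So your anticipated ``short separate verification'' for $\ell<k<3\ell$ is optimistic; the intermediate-ratio regime resists clean closed-form bounds, and you should either be prepared for a finite computer check or find a genuinely new uniform estimate.
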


The proof of Proposition~\ref{conjec:complex_ineq} is fully technical and involves some numerical checks by computer programs for small enough $k$ and $\ell$.
See Appendix~\ref{append:conjecture2}.

\subsection{Proof of Proposition~\ref{prop:ell=1}}
Recall that when $\ell=1$,
\begin{equation}\begin{aligned}
        U(T) = T-k\left( \sum_{t=0}^{T} (T-t)\binom{k}{t} p^{1+t}(1-p)^{k-t} \right)
\end{aligned}\end{equation}
and 
\begin{equation}\begin{aligned}
        u_1 &= \max_{1\le T\le k} U(T).
\end{aligned}\end{equation}

We will prove the three parts of the propositions in the reverse order.
The last part of the proposition can be proved by simple calculations.
The proof for the second part depends on the result in the last part, and it can be easily obtained by computing the derivative of $U(2)$ with respect to $p$.
The proof for the first part is the most technically challenging one, and it depends on the results for the second and third parts.
We will provide some intuitions and then the formal proof after we prove the third and the second parts.

We first prove the last part of the proposition.
\begin{proof}[Proof (last part of Proposition~\ref{prop:ell=1})]
    When $p=\frac{1}{k}$, we have
    \begin{align*}
        U(k)&=k-k\left(\sum_{t=0}^k(k-t)\binom{k}{t}p^{1+t}(1-p)^{k-t}\right)\\
        &=k-k\left(\sum_{t=0}^{k-1}(k-t)\binom{k}{t}p^{1+t}(1-p)^{k-t}\right)\tag{the last term is $0$}\\
        &=k-k^2p(1-p)\sum_{t=0}^{k-1}\binom{k-1}{t}p^{t}(1-p)^{k-1-t}\\
        &=k-k^2p(1-p)\left(p+(1-p)\right)^{k-1}=k-k^2p(1-p)=1,
    \end{align*} 
    and 
    \begin{align*}
        U(T+1)-U(T) &= 1-kp\left( \sum_{t=0}^T \binom{k}{t}p^t(1-p)^{k-t}\right)\tag{the last term in the summation of $U(T+1)$ is $0$}\\
        &= 1-\left( \sum_{t=0}^T \binom{k}{t}p^t(1-p)^{k-t}\right)\tag{$kp=1$}\\
        &>1-\left( \sum_{t=0}^k \binom{k}{t}p^t(1-p)^{k-t}\right) =0.
    \end{align*}
    Thus, $1=U(k)>U(k-1)>\cdots>U(2)>U(1)$ and $u_1=U(k)=1$. 
\end{proof}

We next prove the second part of the proposition. 

\begin{proof}[Proof (second part of Proposition~\ref{prop:ell=1})]
    We have
    $$U(2)=2-k(2p(1-p)^k+kp^2(1-p)^{k-1}).$$
    Direct calculations reveal
    \begin{equation}\label{eqn:U2'}
    \frac{\partial U(2)}{\partial p}=k(1-p)^{k-2}\left(k(k-1)p^2-2(1-p)^2\right),
    \end{equation}
    so, with $p<1$, $\frac{\partial U(2)}{\partial p}=0$ implies
    $$
        k(k-1)p^2-2(1-p)^2=0,
    $$
    which further implies
    $$
        p_0=\frac{1}{1+\sqrt{\frac{k(k-1)}2}},
    $$
    and, in addition, $U(2)$ is decreasing for $p\in(0,p_0)$ and is increasing for $p\in(p_0,1)$. 
    By the last part of Proposition~\ref{prop:ell=1} (which has just been proved), we have $U(2)<1$ at $p=\frac1k$.
    On the other hand, $U(2)=2>1$ at $p=1$.
    Since $\frac{\partial U(2)}{\partial p}$ changes signs only once, from negative to positive, on the interval $(0,1)$, we conclude the second part of the proposition.
\end{proof}

Now it remains to show the first part of the proposition.

First of all, notice that $U(1)<1$ holds for $\frac1k<p<1$.
Thus, we do not need to consider $U(1)$:
if there exists $T>1$ such that $U(T)\geq 1$ at a particular value of $p$, then $u_1=\max_{1\le T\le k} U(T)$ cannot be $U(1)$;
if $U(T)<1$ for all $T>1$ at a particular value of $p$,  then $u_1<1$ even when $u_1$ takes $U(1)$. 
   
We next consider the easy case with $k=2$.
In this case, the family only contains $U(1)$ and $U(2)$, and it suffices to look at $U(2)$ alone.
Since we have seen $U(1)<1$ for $\frac1k<p<1$, the first part of Proposition~\ref{prop:ell=1} holds trivially (if $U(2)<1$, then $u_1=\max\{U(1),U(2)\}<1$ given $U(1)<1$; for the remaining two cases with $U(2)=1$ or $U(2)>1$, we have $u_1=\max\{U(1),U(2)\}=U(2)$ given $U(1)<1$).
In the remaining part of this section, we will assume $k\geq3$.


We first characterize the function family $\{U(T)\}$.
\begin{figure}[htbp]
      \centering
      \subfigure{
           \includegraphics[scale=0.25]{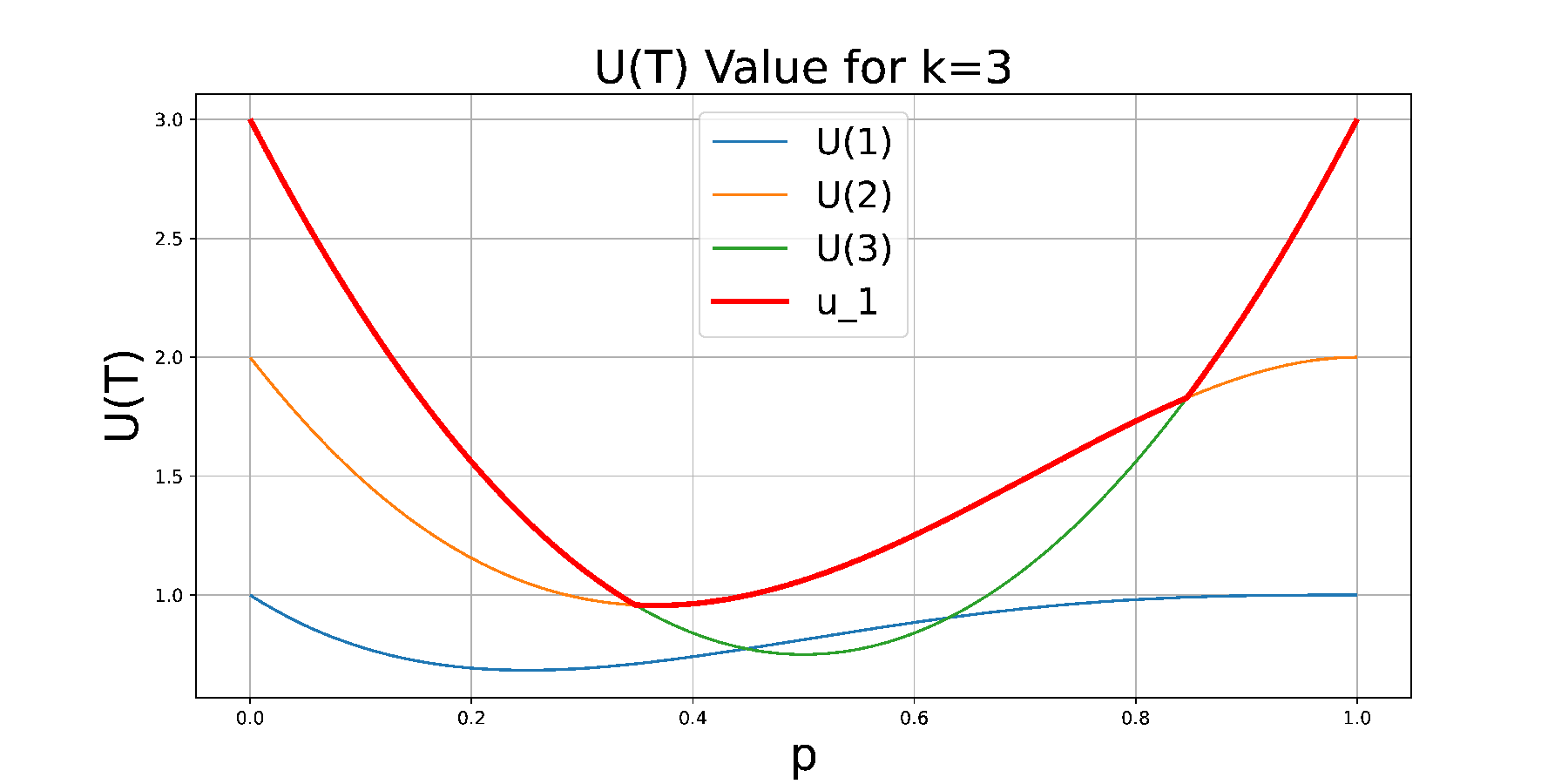}
      }
      \subfigure{
           \includegraphics[scale=0.25]{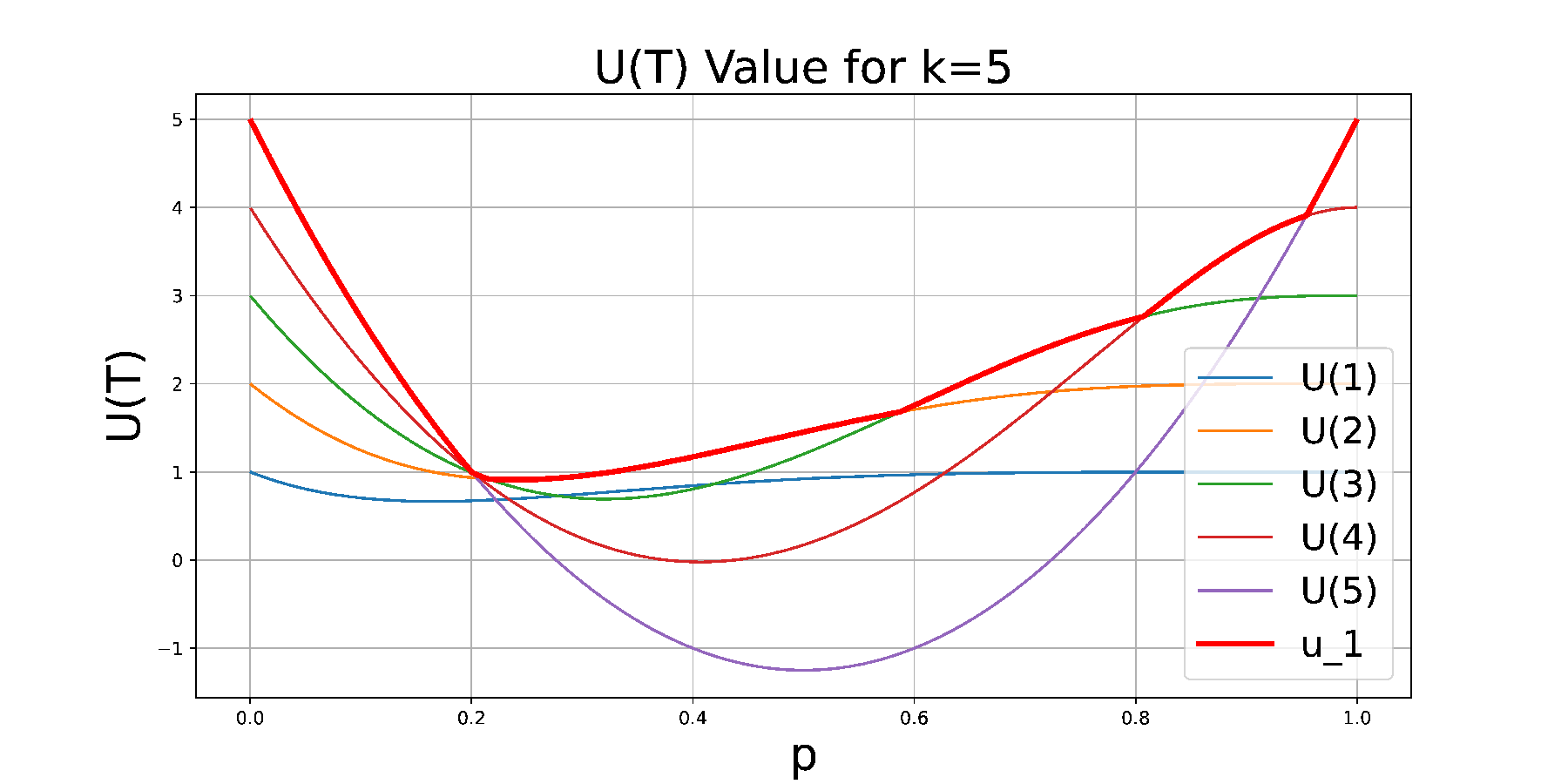}
      } 
      \subfigure{
           \includegraphics[scale=0.25]{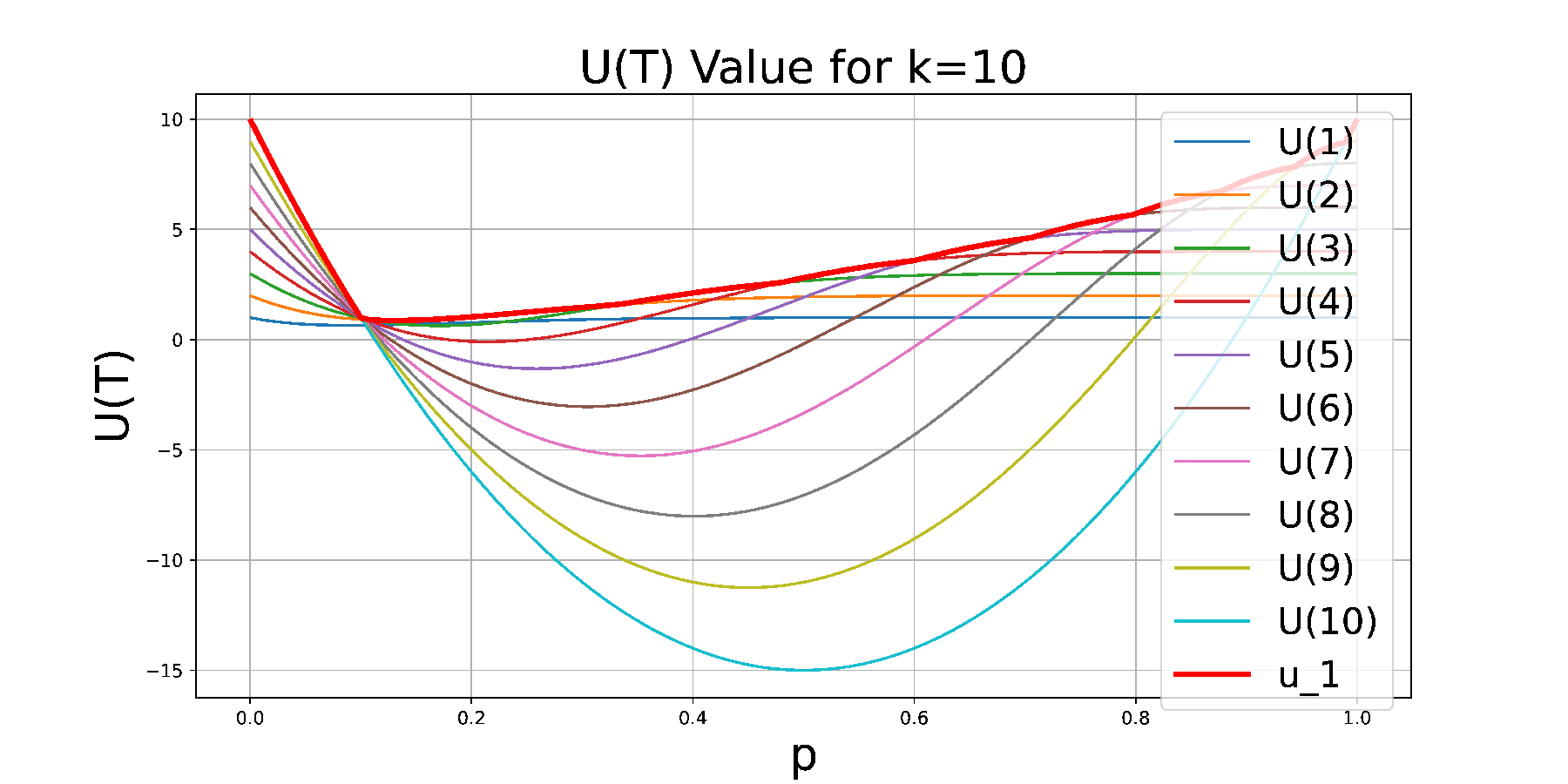}
      }
      \subfigure{
           \includegraphics[scale=0.25]{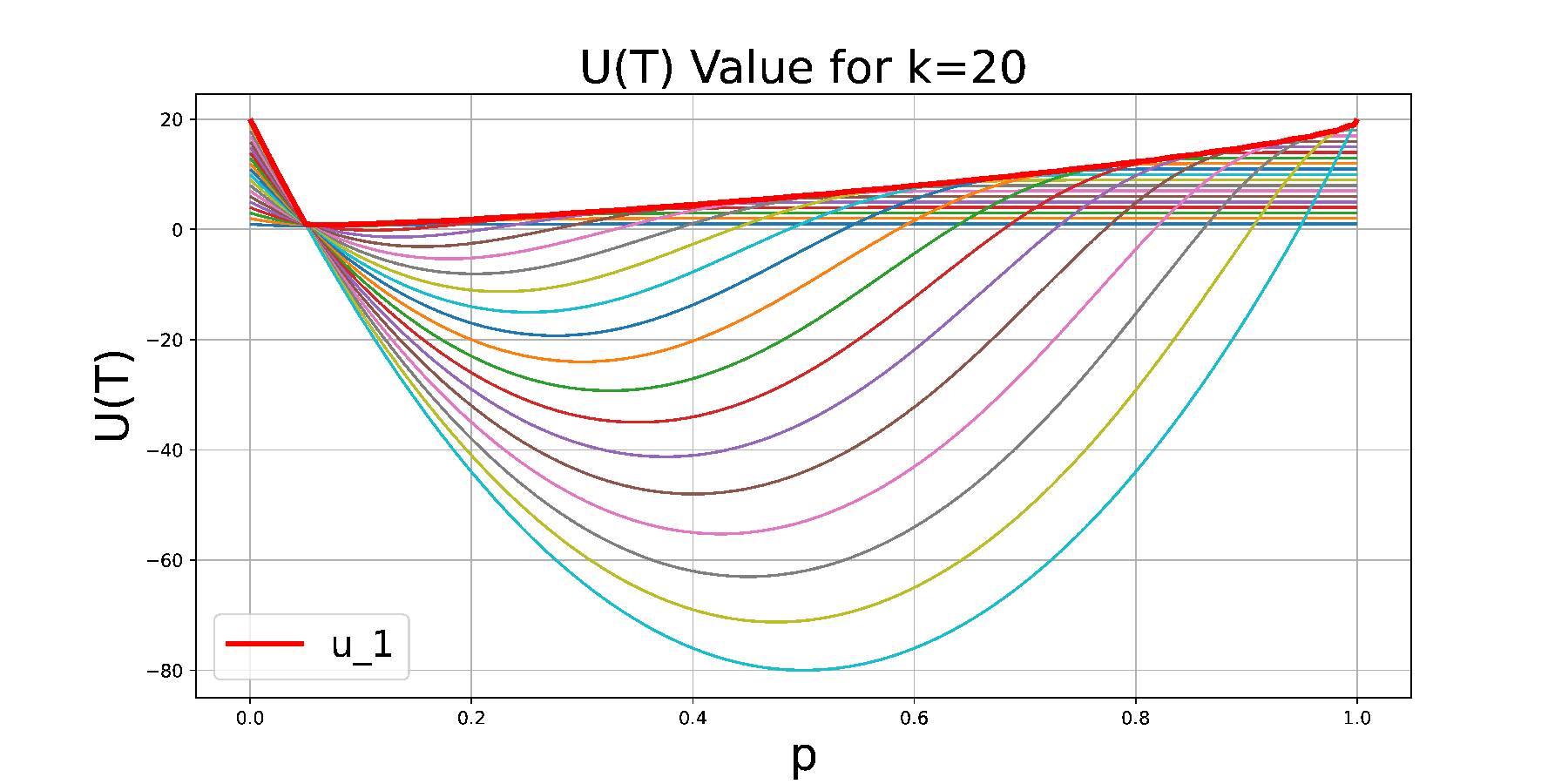}
      }
      \caption{Values of $U(T)$ with different $k$}
\label{fig:U}
\end{figure}
Fig.~\ref{fig:U} illustrates the curves of $U(T)$ for different values of $k$.
Recall that we want to find out for which values of $p$ we have $u_1<1$, and $u_1=\max_{1\leq T\leq k}U(T)$. 
The bold red curve in the figure represents $u_1$, which takes the maximum over all $U(T)$'s, and we can see that the curve is below $y=1$ for a small region between $p_1^\ast=\frac1k$ and some $p_2^\ast$.
To formally conclude Proposition~\ref{prop:ell=1}, we first state some observations from the pictures and show that how these observations can prove the first part of Proposition~\ref{prop:ell=1}.
We will then formally prove these observations.

Starting from $p=\frac1k$, we examine the behavior of $u_1$ from the picture as $p$ increases.
The third part of the Proposition~\ref{prop:ell=1} (which has proved already) indicates that $u_1=U(k)=1$, and when $p$ slightly increases by a little bit, all values of $U(T)$ are below $1$.
This is formally proved in Claim~\ref{prop:decreasing}, and this shows that the left-hand side threshold for $u_1<1$ is exactly at $p_1^\ast=\frac1k$.
Next, we characterize the right-hand side threshold $p_2^\ast$. We first observe from the picture that $U(2)$ reaches its minimum first after ignoring $U(1)$ (proved in Claim~\ref{prop:U(2)de_all_de}).
This implies $u_1$ is still below $1$ before $U(2)$ reaches its minimum.
In addition, $U(2)$ is the maximum among all $U(T)$'s at the time $U(2)$ reaches its minimum (Claim~\ref{prop: F(p_1)<0} and Claim~\ref{prop: U(2)_dominate_all}), and $U(2)$ continues to be the maximum until we reach a point where $U(2)=U(3)$.
We will show in Claim~\ref{prop: U(2)>1_at_2_point} that $U(2)>1$ (and so $u_1>1$) at the time we reach $U(2)=U(3)$.
Finally, when $p$ continues to increase, the value of $U(2)$ is increasing, so we have $u_1\geq U(2)>1$.
To summarize, $U(2)$ characterizes the relationship between $u_1$ and $1$:
\begin{itemize}
    \item between $\frac1k$ and the point where $U(2)$ takes its minimum, we know all $U(T)$'s are below $1$, although $U(2)$ may not be the maximum over all $U(T)$'s;
    \item between the point $U(2)$ takes its minimum and the point $U(2)=U(3)$, we will show that $U(2)$ is the maximum over all $U(T)$'s, so $u_1=U(2)$; since we know $U(2)>1$ by the time $U(2)=U(3)$, the intersection of $U(2)$ and $y=1$ happens in this regime;
    \item between the point $U(2)=U(3)$ and $p=1$, we see that $U(2)>1$, so $u_1\geq u(2)>1$ even if $U(2)$ is no longer the maximum over all $U(T)$'s.
\end{itemize}

Now we begin to prove the proposition formally.
Although these observations can be easily seen from the pictures, the proofs for many of them are rather tricky.

    \begin{claim}\label{prop:decreasing}
        When $p=\frac{1}{k}$, we have $\frac{\partial U(T)}{\partial p}<0$ for all $2\le T\le k$; that is, each $U(T)$ is decreasing at $p=\frac{1}{k}$.
    \end{claim}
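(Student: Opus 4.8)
The plan is to differentiate $U(T)$ with respect to $p$ directly from its closed form and evaluate at $p=\frac1k$. Recall that for $\ell=1$ we have $U(T)=T-k\sum_{t=0}^{T}(T-t)\binom{k}{t}p^{1+t}(1-p)^{k-t}$. Since the summand at $t=T$ contributes a factor $(T-t)=0$ and the one at $t=T$ in the derivative will also behave nicely, the key observation is that a telescoping identity should emerge. Concretely, I would write $U(T+1)-U(T) = 1-kp\sum_{t=0}^{T}\binom{k}{t}p^{t}(1-p)^{k-t}$ (this is exactly the computation already carried out in the proof of the last part of Proposition~\ref{prop:ell=1}), and similarly express $U(T)$ itself via a sum of such increments starting from $U(1)=1-k\cdot 0 = $ (the degenerate base case), so that $\frac{\partial U(T)}{\partial p}$ becomes a sum of derivatives of the simpler quantities $1-kp\,B_j(p)$ where $B_j(p)=\sum_{t=0}^{j}\binom{k}{t}p^t(1-p)^{k-t}$ is a binomial tail.

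The main computational step is then to show $\frac{\partial}{\partial p}\bigl(kp\,B_j(p)\bigr)>0$ at $p=\frac1k$ for each relevant $j$, equivalently that $\frac{\partial U(T)}{\partial p}<0$ there. I would compute $\frac{\partial}{\partial p}\bigl(p\,B_j(p)\bigr)=B_j(p)+p\,B_j'(p)$. Using the standard identity for the derivative of a binomial tail, $B_j'(p)=\frac{\partial}{\partial p}\sum_{t=0}^{j}\binom{k}{t}p^t(1-p)^{k-t}$, which telescopes to $-k\binom{k-1}{j}p^{j}(1-p)^{k-1-j}$ (the well-known fact that the derivative of the CDF tail of a binomial collapses to a single term), one gets $\frac{\partial}{\partial p}(p B_j(p)) = B_j(p) - kp\binom{k-1}{j}p^j(1-p)^{k-1-j}$. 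At $p=\frac1k$ this is $B_j(\tfrac1k) - \binom{k-1}{j}(\tfrac1k)^j(1-\tfrac1k)^{k-1-j}$, and since $B_j(\tfrac1k)=\sum_{t=0}^{j}\binom{k}{t}(\tfrac1k)^t(1-\tfrac1k)^{k-t}$ is a sum of nonnegative terms one of which (or a comparable quantity) dominates the single subtracted term, positivity follows; I would make this precise by a direct term comparison, e.g. bounding $\binom{k-1}{j}(\tfrac1k)^j(1-\tfrac1k)^{k-1-j}$ against the $t=j$ term $\binom{k}{j}(\tfrac1k)^j(1-\tfrac1k)^{k-j}$ of $B_j(\tfrac1k)$ using $\binom{k}{j}(1-\tfrac1k)=\binom{k}{j}\frac{k-1}{k}$ versus $\binom{k-1}{j}$, i.e. $\binom{k}{j}\frac{k-1}{k}=\binom{k-1}{j}\frac{k}{k-j}\cdot\frac{k-1}{k}$, and checking the resulting elementary inequality.

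I expect the main obstacle to be the bookkeeping in reducing $\frac{\partial U(T)}{\partial p}$ to a manageable combination of binomial-tail derivatives and then making the sign argument airtight for every $T$ in the range $2\le T\le k$ simultaneously — in particular, handling the boundary indices and confirming that no cancellation spoils strict negativity. A cleaner alternative I would try in parallel is to reuse the increment formula: since $U(k)=1$ at $p=\frac1k$ and each $U(T+1)-U(T)>0$ there, combined with explicit control of $\frac{\partial}{\partial p}(U(T+1)-U(T)) = -k\frac{\partial}{\partial p}(pB_T(p))$, one can propagate the sign of $\frac{\partial U(k)}{\partial p}$ (computed directly from the clean form $U(k)=k-k^2p(1-p)\cdot 1$ giving $\frac{\partial U(k)}{\partial p}=-k^2(1-2p)$, which is negative at $p=\frac1k$ once $k\ge 3$, and zero at $p=\frac12$ when $k=2$) downward to all smaller $T$ via the monotone-in-$j$ structure of the correction terms. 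Either route, the arithmetic is routine once the telescoping is set up; the care needed is entirely in the uniform-in-$T$ sign verification.
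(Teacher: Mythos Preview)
Your telescoping route via $U(T)=\sum_{j=0}^{T-1}\bigl(1-kp\,B_j(p)\bigr)$ and the identity $B_j'(p)=-k\binom{k-1}{j}p^{j}(1-p)^{k-1-j}$ is viable and genuinely different from the paper's approach, which instead computes $\frac{\partial U(T)}{\partial p}$ directly as the single sum
\[
-k\sum_{t=0}^{T-1}\binom{k}{t}p^{t}(1-p)^{k-t-1}\bigl[(T-t)(1-p)-(k-t)p\bigr]
\]
and checks that each bracket is nonnegative at $p=1/k$, with the $t=0$ bracket strictly positive (using $T\ge2$ and $k\ge3$).

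There are, however, two concrete gaps in your proposal. First, your single-term comparison for the $j$-th bracket $B_j(\tfrac1k)-\binom{k-1}{j}(\tfrac1k)^{j}(1-\tfrac1k)^{k-1-j}$ yields the ratio $(k-1)/(k-j)$, which is $\ge 1$ only for $j\ge 1$; at $j=0$ there is no other term in $B_0$ to compensate, and the bracket equals $-\tfrac1k(1-\tfrac1k)^{k-1}<0$, so termwise positivity fails. The repair is easy: the $j=0$ and $j=1$ brackets together sum to $\tfrac{k-2}{k}(1-\tfrac1k)^{k-1}>0$ (here $k\ge 3$ is needed, consistent with the paper's standing assumption in this subsection), and every bracket with $j\ge 2$ is strictly positive by your argument; since $T\ge 2$ forces both $j=0$ and $j=1$ to appear, the total is positive. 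Second, your ``cleaner alternative'' propagates in the wrong direction: the same computation shows $\frac{\partial}{\partial p}\bigl(pB_T(p)\bigr)>0$ at $p=1/k$ for every $T\ge 1$, hence $\frac{\partial U(T+1)}{\partial p}<\frac{\partial U(T)}{\partial p}$, so knowing $\frac{\partial U(k)}{\partial p}<0$ tells you nothing about smaller $T$. You should instead verify $\frac{\partial U(2)}{\partial p}<0$ directly (immediate from $\frac{\partial U(2)}{\partial p}=k(1-p)^{k-2}\bigl(k(k-1)p^2-2(1-p)^2\bigr)$ at $p=1/k$) and propagate \emph{upward}; this monotonicity is essentially the content of the paper's Claim~\ref{prop:U(2)de_all_de}. (Minor slip: $U(1)=1-kp(1-p)^k=1-kpB_0(p)$, not $1-k\cdot 0$; this is precisely the $j=0$ summand, so the telescoping identity is unaffected.)
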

    \begin{proof}
    By writing $U(T)=T-kp\left(\sum_{t=0}^T(T-t)\binom{k}{t}p^{t}(1-p)^{k-t}\right)$ and differentiating by the two parts, the term $p$ outside and the summation, we can reach the following expression by some calculations involving rewriting the summations and the combinatorial numbers. 
    The detailed calculations are available in Appendix~\ref{append:partialcalculation}.
    
        $$
           \frac{\partial U(T)}{\partial p} = -k\left( \sum_{t=0}^{T-1} \binom{k}{t} p^{t}(1-p)^{k-t-1}\left[(T-t)(1-p)-(k-t)p\right] \right)
        $$
        We will show that each term in the summation is non-positive and at least one of them is negative.
        Since each  $\binom{k}{t} p^{t}(1-p)^{k-t-1}$ is positive at $p=\frac1k$, it suffices to show that
        $(T-t)(1-p)-(k-t)p$
        is non-negative for all $t$ and positive for at least one $t$. 

        The value of $(T-t)(1-p)-(k-t)p$ is $p\left((T-t)(k-1)-k+t\right)$ since $p=\frac1k$.
        If $t=0$, we have $(T-t)(k-1)-k+t=T(k-1)-k\ge k-2> 0$.
        If $t\geq 1$, we have $(T-t)(k-1)-k+t\ge k-1-k+t\ge 0$.
\end{proof}

\begin{claim}\label{prop: U(2)_dominate_all}
    For all $1<T<k$, if $U(T)<U(T-1)$ then $U(T+1)<U(T)$.
\end{claim}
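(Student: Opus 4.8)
The plan is to show that, for $\ell=1$, the finite sequence $U(1),U(2),\dots,U(k)$ is \emph{strictly concave} in $T$, so that once its consecutive differences turn negative they stay negative; the claim then follows immediately.

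First I would record the first-difference formula already obtained in the proof of the last part of Proposition~\ref{prop:ell=1}. Telescoping $U(T)=T-k\sum_{t=0}^{T}(T-t)\binom{k}{t}p^{1+t}(1-p)^{k-t}$ and using that the $t=T+1$ summand has coefficient $(T+1)-(T+1)=0$ gives, for every $1\le T\le k-1$,
\[
g(T):=U(T+1)-U(T)=1-kp\sum_{t=0}^{T}\binom{k}{t}p^{t}(1-p)^{k-t}.
\]
The key step is then the second difference: for $1<T<k$,
\[
g(T)-g(T-1)=-kp\binom{k}{T}p^{T}(1-p)^{k-T},
\]
because passing from $\sum_{t=0}^{T-1}$ to $\sum_{t=0}^{T}$ adds exactly the single term $\binom{k}{T}p^{T}(1-p)^{k-T}$. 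For $p\in(0,1)$ and $1\le T\le k$ this added term is strictly positive, so $g(T)<g(T-1)$; that is, the differences $U(T+1)-U(T)$ are strictly decreasing in $T$. (The degenerate endpoints $p\in\{0,1\}$ are harmless: there $U(T)=T$, so the hypothesis $U(T)<U(T-1)$ never holds.)

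Given this, suppose $U(T)<U(T-1)$, i.e.\ $g(T-1)=U(T)-U(T-1)<0$. Strict concavity gives $g(T)<g(T-1)<0$, hence $U(T+1)-U(T)=g(T)<0$, i.e.\ $U(T+1)<U(T)$, which is exactly the claim.

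I do not anticipate a real obstacle: the only substantive input is the clean first-difference formula, which is already available from the earlier proof, after which everything reduces to the trivial monotonicity of the partial sums $\sum_{t=0}^{T}\binom{k}{t}p^{t}(1-p)^{k-t}$ of a binomial expansion. The only point to be careful about is the hypothesis $1<T<k$: it is needed so that both $U(T-1)$ and $U(T+1)$ lie in the family $\{U(1),\dots,U(k)\}$ and so that the differencing steps above are legitimate.
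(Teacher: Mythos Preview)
Your proof is correct and is essentially identical to the paper's: both compute the first difference $U(T+1)-U(T)=1-kp\sum_{t=0}^{T}\binom{k}{t}p^{t}(1-p)^{k-t}$ and observe that this expression is strictly decreasing in $T$ (because one more positive term is subtracted), so once it is negative it stays negative. The only cosmetic difference is that you package this as ``strict concavity'' via the second difference, whereas the paper simply chains the two inequalities directly.
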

\begin{proof}
By noticing the last term in the summation of $U(T)$ is $0$, we have
    \begin{align*}
        0>U(T)-U(T-1) =&\ 1-kp\left( \sum_{t=0}^{T-1} \binom{k}{t}p^t(1-p)^{k-t}\right)\\
        >&\ 1-kp\left( \sum_{t=0}^{T} \binom{k}{t}p^t(1-p)^{k-t}\right)
        =U(T+1)-U(T),
    \end{align*}
    which implies the claim.
\end{proof}

\begin{claim}\label{prop:U(2)de_all_de}
    $\frac{\partial U(T)}{\partial p} < 0$ implies $\frac{\partial U(T+1)}{\partial p} < 0$.
\end{claim}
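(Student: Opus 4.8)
The plan is to compare consecutive members of the family $\{U(T)\}$ via the explicit derivative formula already obtained in the proof of Claim~\ref{prop:decreasing},
\[
\frac{\partial U(T)}{\partial p} \;=\; -k\sum_{t=0}^{T-1}\binom{k}{t}p^{t}(1-p)^{k-t-1}\big[(T-t)(1-p)-(k-t)p\big].
\]
Writing $a_t=\binom{k}{t}p^{t}(1-p)^{k-t-1}$ and $d_t^{(T)}=(T-t)(1-p)-(k-t)p$, I would subtract the two sums term by term: for $0\le t\le T-1$ the $t$-th summand changes by $a_t(1-p)$ since $d_t^{(T+1)}-d_t^{(T)}=1-p$, and there is one extra summand at $t=T$ with $d_T^{(T+1)}=(1-p)-(k-T)p$. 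Collecting everything and using the elementary identities $(1-p)\sum_{t=0}^{T}a_t=\sum_{t=0}^{T}\binom{k}{t}p^{t}(1-p)^{k-t}=:F_T$ and $(k-T)\binom{k}{T}=(T+1)\binom{k}{T+1}$, this collapses to the recursion
\[
\frac{\partial U(T+1)}{\partial p}\;=\;\frac{\partial U(T)}{\partial p}\;+\;k\Big((T+1)\binom{k}{T+1}p^{T+1}(1-p)^{k-T-1}-F_T\Big),
\]
which is the derivative analogue of the value identity $U(T+1)-U(T)=1-kpF_T$ exploited in Claim~\ref{prop: U(2)_dominate_all}.

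Given this recursion, the claim is immediate whenever the correction term is non-positive, i.e. $(T+1)\binom{k}{T+1}p^{T+1}(1-p)^{k-T-1}\le F_T$: then $\frac{\partial U(T+1)}{\partial p}\le\frac{\partial U(T)}{\partial p}<0$. The only remaining case is when this correction could be positive, and there I would use the hypothesis $\frac{\partial U(T)}{\partial p}<0$ quantitatively rather than merely as a sign. Rewriting the derivative in the form $\frac{\partial U(T)}{\partial p}=-k\big(\sum_{i=0}^{T-1}F_i-kp\sum_{i=0}^{T-1}\binom{k-1}{i}p^{i}(1-p)^{k-1-i}\big)$, via $U(T)=T-kp\sum_{i=0}^{T-1}F_i$ and the standard binomial-CDF derivative $\frac{\partial}{\partial p}F_i=-k\binom{k-1}{i}p^{i}(1-p)^{k-1-i}$, one sees that $\frac{\partial U(T)}{\partial p}<0$ already forces $p<T/k$ and yields a concrete lower bound on $F_{T-1}$ in terms of the single probability $\binom{k}{T}p^{T}(1-p)^{k-T}$. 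Substituting that lower bound, together with the defining inequality of the remaining case, into the target $\frac{\partial U(T+1)}{\partial p}<0$ and cancelling the common positive factors, the whole statement reduces to a polynomial inequality in $p$ whose coefficients are low-degree polynomials in $k$ and $T$, to be checked on the sub-interval of $(1/k,1)$ cut out by the two constraints.

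The main obstacle is precisely this reduced inequality. Its delicate point is that the crude slack bound $\sum_{i=0}^{T-1}F_i\le T\cdot F_{T-1}$ is too lossy — it already fails to close the inequality for, e.g., $T=2$ and large $k$ — so one must keep the full binomial-tail shape of $\sum_{i=0}^{T-1}F_i$, or, equivalently, exploit that the hypothesis confines $p$ to the region below the turning point of $U(T)$, which I expect to stay within $\Theta(1/k)$ of $1/k$, so that the positive correction cannot dominate there. I would finish by the routine analysis of the resulting (essentially quadratic) polynomial: check its value at the two endpoints of the admissible $p$-interval and at its unique interior critical point, each of which should acquire a definite sign after simplification, supplemented — exactly as for Proposition~\ref{conjec:complex_ineq} — by a finite computer check for the finitely many small pairs $(k,T)$ not covered by the closed-form estimate. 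Once the claim is established for every $2\le T\le k-1$, iterating it yields in particular that $\frac{\partial U(2)}{\partial p}<0$ implies $\frac{\partial U(T)}{\partial p}<0$ for all such $T$, which is the form the surrounding argument needs (that $U(2)$ reaches its minimum first).
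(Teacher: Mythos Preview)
Your recursion
\[
\frac{\partial U(T+1)}{\partial p}-\frac{\partial U(T)}{\partial p}=k\Big((T+1)\binom{k}{T+1}p^{T+1}(1-p)^{k-T-1}-F_T\Big)
\]
is correct, and your Case~A is fine. The gap is entirely in Case~B: you do not actually carry out the promised ``routine polynomial analysis'', you acknowledge that the natural slack bound is too lossy, and the remaining steps (``I expect $p$ to stay within $\Theta(1/k)$ of $1/k$'', ``check the endpoints and critical point'', ``supplement by a finite computer check'') are wishes rather than arguments. As it stands this is an outline of an approach that might work, not a proof.

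The paper avoids this difficulty altogether by working with a different factorization of the derivative. Instead of the form from Claim~\ref{prop:decreasing}, it writes
\[
\frac{\partial U(T)}{\partial p}=\sum_{t=0}^{T}(T-t)C_t\big[1+t-(1+k)p\big],\qquad C_t=-k\binom{k}{t}p^{t}(1-p)^{k-t-1}<0,
\]
so that the difference becomes simply $\sum_{t=0}^{T}C_t\big[1+t-(1+k)p\big]$. The point is that the bracket $[1+t-(1+k)p]$ is monotone in $t$ and changes sign at most once, at some $t'$. This single-sign-change structure lets one compare the difference to $\frac{\partial U(T)}{\partial p}$ directly: replacing the constant weight $1$ on each term by the weight $(T-t)/(T-t')$ only increases the positive terms ($t\le t'$) and only decreases the negative terms ($t>t'$), yielding
\[
\frac{\partial U(T+1)}{\partial p}-\frac{\partial U(T)}{\partial p}<\frac{1}{T-t'}\,\frac{\partial U(T)}{\partial p}<0.
\]
No polynomial estimates, no endpoint checks, no computer verification are needed. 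The idea you are missing is precisely this reweighting trick enabled by the $[1+t-(1+k)p]$ form; the formula you started from, with bracket $[(T-t)(1-p)-(k-t)p]$, hides this structure and forces you into the quantitative morass of Case~B.
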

\begin{proof}
    Let $C_t=-k\binom{k}{t}p^{t}(1-p)^{k-t-1}<0$ for each $t=0,1,\ldots,k$.
    We have
    \begin{small}\begin{equation}\label{Partial_U(T)}
        \frac{\partial U(T)}{\partial p}=-k\sum_{t=0}^T(T-t)\binom{k}t\left((1+t)p^t(1-p)^{k-t}-(k-t)p^{1+t}(1-p)^{k-t-1}\right)=\sum_{t=0}^T(T-t)C_t[1+t-(1+k)p]<0,
    \end{equation}\end{small}
    and
    \begin{equation}\label{Partial_U(T+1)}
        \frac{\partial U(T+1)}{\partial p}-\frac{\partial U(T)}{\partial p}=\sum_{t=0}^T C_t[1+t-(1+k)p].
    \end{equation}
    We discuss the following cases:
·    \begin{itemize}
        \item $1\ge (1+k)p$. Since $C_t<0$, every term in the summation (\ref{Partial_U(T+1)}) is negative. We are done.
        \item $1+T-(1+k)p<0$, i.e., every term in the summation (\ref{Partial_U(T)}) is positive. This is a contradiction to our assumption.
        \item Otherwise, there exists $0\le t'<T$, such that $1+t'-(1+k)p\le 0$ and $1+(t'+1)-(1+k)p> 0$.
        \begin{equation}\begin{aligned}
            & &&\frac{\partial U(T+1)}{\partial p}-\frac{\partial U(T)}{\partial p}\\ 
            &= &&\sum_{t=0}^T C_t[1+t-(1+k)p]\\
            &= &&\frac{1}{T-t'} \left(\sum_{t=0}^{t'} (T-t')C_t[1+t-(1+k)p]+\sum_{t=t'+1}^T (T-t')C_t[1+t-(1+k)p]\right)\\
            &< &&\frac{1}{T-t'} \left(\sum_{t=0}^{t'} (T-t)C_t[1+t-(1+k)p]+\sum_{t=t'+1}^T (T-t)C_t[1+t-(1+k)p]\right)\\
            &= &&\frac{1}{T-t'}\frac{\partial U(T)}{\partial p}<0
        \end{aligned}\end{equation}
    \end{itemize}
    Thus, $\frac{\partial U(T+1)}{\partial p} < \frac{\partial U(T)}{\partial p}<0$.
\end{proof}

\begin{claim}\label{prop: F(p_1)<0}
    For $k\geq3$, when $\frac{\partial U(2)}{\partial p}=0$, we have $U(3)-U(2)<0$.
\end{claim}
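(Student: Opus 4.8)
The plan is to collapse the claim to the single already-established fact that $U(2) < 1$ at the critical point. Write $q := 1-p$, and let $p_0 := \bigl(1+\sqrt{k(k-1)/2}\,\bigr)^{-1}$, which by the second part of Proposition~\ref{prop:ell=1} is the unique zero of $\frac{\partial U(2)}{\partial p}$ in $(0,1)$; the hypothesis of the claim means exactly $p = p_0$. First I would recall two formulas obtained earlier: the closed form $U(2) = 2 - k\bigl(2pq^k + kp^2q^{k-1}\bigr)$ from the proof of the second part of Proposition~\ref{prop:ell=1}, and the telescoping identity $U(T) - U(T-1) = 1 - kp\sum_{t=0}^{T-1}\binom{k}{t}p^tq^{k-t}$ from the proof of Claim~\ref{prop: U(2)_dominate_all}. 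Taking $T=3$ in the latter gives
\[
  U(3) - U(2) = 1 - kp\left(q^k + kp\,q^{k-1} + \binom{k}{2}p^2q^{k-2}\right).
\]

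The second step is to feed in the critical-point relation. By Equation~(\ref{eqn:U2'}) and $p<1$, $\frac{\partial U(2)}{\partial p}=0$ forces $k(k-1)p^2 = 2q^2$, equivalently $\binom{k}{2}p^2 = q^2$; hence at $p=p_0$ the last summand collapses to $\binom{k}{2}p^2q^{k-2} = q^k$, so
\[
  U(3) - U(2) = 1 - kp\bigl(2q^k + kp\,q^{k-1}\bigr) = 1 - k\bigl(2pq^k + kp^2q^{k-1}\bigr).
\]
The key observation is that $k\bigl(2pq^k + kp^2q^{k-1}\bigr)$ is precisely $2 - U(2)$ by the closed-form formula above. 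Therefore at $p_0$ we obtain the clean identity
\[
  U(3) - U(2) = 1 - \bigl(2 - U(2)\bigr) = U(2) - 1,
\]
so the claim is equivalent to $U(2)(p_0) < 1$.

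For the final step I would invoke monotonicity already in hand: the second part of Proposition~\ref{prop:ell=1} shows $U(2)$ is strictly decreasing on $(0,p_0)$. A one-line computation gives $p_0 > 1/k$ exactly when $k>2$, so for $k\ge 3$ we have $1/k \in (0,p_0)$ and hence $U(2)(p_0) < U(2)(1/k)$. Finally, the last part of Proposition~\ref{prop:ell=1} gives $U(2)(1/k) < U(k)(1/k) = 1$. Combining, $U(2)(p_0) < 1$, and therefore $U(3)-U(2) = U(2)(p_0) - 1 < 0$.

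I expect the only genuinely non-obvious point to be spotting that, once the critical-point relation is substituted, $U(3)-U(2)$ and $U(2)$ differ by the constant $1$ at $p_0$; everything else is bookkeeping with formulas and monotonicity facts already proved. (Without this shortcut one could substitute $p_0$ explicitly and rewrite the target inequality as $a(a+2) > (1+a/k)^{k+1}$ with $a = \sqrt{2k/(k-1)}$, then bound the right-hand side analytically for large $k$ using $(1+a/k)^k < e^a$ and check the finitely many remaining small $k$ numerically; but the identity-based route above is cleaner and uniform in $k$.)
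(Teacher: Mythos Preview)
Your proof is correct and genuinely cleaner than the paper's. The key difference is that you spot the identity $U(3)-U(2)=U(2)-1$ at the critical point (via the substitution $\binom{k}{2}p^2=q^2$ inside the telescoping expression for $U(3)-U(2)$), which reduces the claim to $U(2)(p_0)<1$; this then follows immediately from the already-proved monotonicity of $U(2)$ on $(0,p_0)$, the inequality $p_0>1/k$ for $k\ge 3$, and the last part of Proposition~\ref{prop:ell=1}. The paper, by contrast, simplifies $U(3)-U(2)$ at $p_0$ only to $1-k^2p^2(1-p)^{k-2}(1+(k-2)p)$ and then proves the right-hand side is negative by an asymptotic estimate ($p\approx\sqrt{2}/k$, yielding a limit around $1.17$) together with a computer check of all $3\le k\le 1000$---essentially the fallback route you describe in your parenthetical. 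Your argument eliminates the numerical verification entirely and is uniform in $k$, at the cost of relying on facts proved earlier in the same proposition; since the paper establishes the last and second parts of Proposition~\ref{prop:ell=1} before Claim~\ref{prop: F(p_1)<0} and without using it, there is no circularity.
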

\begin{proof}[Proof (sketch)]
    We have
    $$U(2)=2-k(2p(1-p)^k+kp^2(1-p)^{k-1})$$
    and
    $$U(3)=3-k\left(3p(1-p)^k+2kp^2(1-p)^{k-1}+\frac12k(k-1)p^3(1-p)^{k-2}\right).$$
    We have seen in the proof for the second part of Proposition~\ref{prop:ell=1} that $\frac{\partial U(2)}{\partial p}=0$ implies
    \begin{equation}\label{eqn:partial0-kp}
        k(k-1)p^2-2(1-p)^2=0,
    \end{equation}
    which further implies
    \begin{equation}\label{eqn:partial0-p}
        p=\frac{1}{1+\sqrt{\frac{k(k-1)}2}}.
    \end{equation}
    On the other hand,
    \begin{align*}
        U(3)-U(2)&=1-k^2p^2(1-p)^{k-2}\left(\frac{(1-p)^2}{kp}+1-p+\frac12(k-1)p\right)\\
        &=1-k^2p^2(1-p)^{k-2}\left(\frac{k(k-1)p^2}{2kp}+1-p+\frac12(k-1)p\right)\tag{Substituting (\ref{eqn:partial0-kp})}\\
        &=1-k^2p^2(1-p)^{k-2}(1+(k-2)p).
    \end{align*}
    Let $\phi(k,p)=k^2p^2(1-p)^{k-2}(1+(k-2)p)$. It remains to show that $\phi(k,p)>1$ for $k\geq3$ and $p$ given by (\ref{eqn:partial0-p}).

    We begin by providing a sketch to prove this.
    Notice that (\ref{eqn:partial0-p}) implies $p\approx\frac{\sqrt{2}}{k}$.
    For sufficiently large $k$, by omitting lower order terms and substituting $p\approx\frac{\sqrt{2}}{k}$, we have
    $$\phi(k,p)\approx k^2\left(\frac{\sqrt{2}}{k}\right)^2\left(1-\frac{\sqrt{2}}{k}\right)^{k}\left(1+k\cdot \frac{\sqrt{2}}{k}\right)\approx2\cdot\left(\frac1e\right)^{\sqrt2}\left(1+\sqrt2\right)\approx1.1738>1.$$
    Therefore, $\phi(k,p)>1$ holds for sufficiently large $k$.

    For a formal proof, we need to work out the constants in those approximate equalities and find out a large enough $k$ for which $\phi(k,p)$ holds.
    We see that the inequality holds for $k>1000$.
    Then the inequality for $k\leq 1000$ is checked by a computer program by enumerating all possible values of $k$.
    The details are available in Appendix~\ref{append:propF(p_1)<0}.
\end{proof}

\begin{claim}\label{prop: U(2)>1_at_2_point}
    For $k\geq 3$, when $\frac{\partial U(2)}{\partial p}>0$ and $U(3)=U(2)$, we have $U(2)>1$.
\end{claim}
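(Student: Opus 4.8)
The plan is to avoid any estimation and instead exploit the constraint $U(3)=U(2)$ purely algebraically, so that the transcendental factor $(1-p)^{k-2}$ cancels out and the target inequality collapses to the hypothesis $\frac{\partial U(2)}{\partial p}>0$. First I would recall the factored expressions already obtained in the proof of Claim~\ref{prop: F(p_1)<0}: namely
\[
U(2)=2-kp(1-p)^{k-1}\bigl(2(1-p)+kp\bigr),
\]
\[
U(3)-U(2)=1-kp(1-p)^{k-2}\Bigl((1-p)^2+kp(1-p)+\tfrac12k(k-1)p^2\Bigr).
\]
Consequently $U(2)>1$ is equivalent to $kp(1-p)^{k-1}\bigl(2(1-p)+kp\bigr)<1$, and the hypothesis $U(3)=U(2)$ is equivalent to the identity $kp(1-p)^{k-2}\bigl((1-p)^2+kp(1-p)+\tfrac12k(k-1)p^2\bigr)=1$.

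Next I would translate the remaining hypothesis. By Equation~(\ref{eqn:U2'}) we have $\frac{\partial U(2)}{\partial p}=k(1-p)^{k-2}\bigl(k(k-1)p^2-2(1-p)^2\bigr)$, and since $k>0$ and $0<p<1$, the sign of $\frac{\partial U(2)}{\partial p}$ equals the sign of $k(k-1)p^2-2(1-p)^2$. Hence $\frac{\partial U(2)}{\partial p}>0$ is exactly the inequality $k(k-1)p^2>2(1-p)^2$.

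Finally, substitute the $U(3)=U(2)$ identity for the right-hand ``$1$'' in the target inequality $kp(1-p)^{k-1}\bigl(2(1-p)+kp\bigr)<1$ and divide both sides by the positive quantity $kp(1-p)^{k-2}$. This reduces the target to
\[
(1-p)\bigl(2(1-p)+kp\bigr)<(1-p)^2+kp(1-p)+\tfrac12k(k-1)p^2,
\]
whose left-hand side is $2(1-p)^2+kp(1-p)$; cancelling the common terms $kp(1-p)$ and one copy of $(1-p)^2$, the target becomes precisely $2(1-p)^2<k(k-1)p^2$, which is exactly the hypothesis derived above. Therefore $U(2)>1$, as claimed. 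The only care needed is the bookkeeping in the factored forms (essentially already done in Claim~\ref{prop: F(p_1)<0}) and the observation that $p\in(0,1)$ makes all the factors we divide by positive; I do not expect a genuine obstacle here, and in particular no numerical verification is required for this claim, in contrast to Claim~\ref{prop: F(p_1)<0}.
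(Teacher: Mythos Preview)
Your proof is correct and follows essentially the same approach as the paper: both use the identity $U(3)=U(2)$ to eliminate the factor $(1-p)^{k-2}$ and reduce the target $U(2)>1$ to the polynomial inequality $k(k-1)p^2>2(1-p)^2$, which is exactly the hypothesis $\frac{\partial U(2)}{\partial p}>0$. The only cosmetic difference is that the paper substitutes the identity into $U(2)$ to write it as $1+kp(1-p)^{k-2}\bigl(\tfrac12 k(k-1)p^2-(1-p)^2\bigr)$, whereas you substitute it for the ``$1$'' on the other side; the algebra is equivalent.
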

\begin{proof}
    By (\ref{eqn:U2'}), $\frac{\partial U(2)}{\partial p}>0$ implies
    \begin{equation}\label{eqn:partial>0-kp}
        k(k-1)p^2 > 2(1-p)^2.
    \end{equation}
    On the other hand, simplifying $U(2)=U(3)$ gives
    \begin{equation}\label{eqn:U2=U3}
        \frac12k^2(k-1)p^3(1-p)^{k-2}=1-kp(1-p)^k-k^2p^2(1-p)^{k-1}.
    \end{equation}
    Finally,
    \begin{align*}
        U(2)&=2-k(2p(1-p)^k+kp^2(1-p)^{k-1})\\
        &=\left(1-kp(1-p)^k\right)+\left(1-kp(1-p)^k-k^2p^2(1-p)^{k-1}\right)\\
        &=\left(1-kp(1-p)^k\right)+\frac12k^2(k-1)p^3(1-p)^{k-2}\tag{by (\ref{eqn:U2=U3})}\\
        &=1+kp(1-p)^{k-2}\left(\frac12k(k-1)p^2-(1-p)^2\right)\\
        &>1,\tag{by (\ref{eqn:partial>0-kp})}
    \end{align*}
    which concludes the claim.
\end{proof}

Now we are ready to prove the first part Proposition~\ref{prop:ell=1}.
\begin{proof}[Proof (first part of Proposition~\ref{prop:ell=1})]
    To prove the first part of the proposition, we will show that there exist $p_1$ and $p_2$ with $\frac1k<p_1<p_2<1$ such that
    \begin{enumerate}
        \item for $p\in(\frac1k,p_1)$, $U(T)<1$ for all $T=2,\ldots,k$,
        \item for $p\in[p_1,p_2)$, $u_1=U(2)$, and
        \item for $p\in[p_2,1]$, $U(2)>1$.
    \end{enumerate}
    This will be sufficient for proving the first part of Proposition~\ref{prop:ell=1} due to Proposition~\ref{prop:U}.

    We will set $p_1$ to the value of $p$ where $\frac{\partial U(2)}{\partial p}=0$, which has been given in (\ref{eqn:partial0-p}).
    Notice that $p_1=\frac{1}{1+\sqrt{\frac{k(k-1)}2}}>\frac1k$.
    We will set $p_2$ to be the value of $p$ where $U(2)=U(3)$, and we will see later that $p_2>p_1$.

    For (1), by Claim~\ref{prop:decreasing} and the last part of the proposition, we have $U(T)<1$ for $T=2,\ldots,k$ in $(\frac1k,\frac1k+\varepsilon)$ for sufficiently small $\varepsilon>0$.
    On the other hand, we have seen in the proof of the second part of Proposition~\ref{prop:ell=1} that $\frac{\partial U(2)}{\partial p}<0$ for $p\in(\frac1k,p_1)$, and Claim~\ref{prop:U(2)de_all_de} implies $U(T)$ is decreasing on $(\frac1k,p_1)$ for every $T$.
    Putting together, we have $U(T)<1$ for every $T$ and $p\in(\frac1k,p_1)$.

    For (2), due to Claim~\ref{prop: U(2)_dominate_all}, it suffices to show that $U(2)>U(3)$ for $p\in[p_1,p_2)$.
     Let 
     $$F(p)=U(3)-U(2)=1-k\left(p(1-p)^k+kp^2(1-p)^{k-1}+\frac12k(k-1)p^3(1-p)^{k-2}\right)$$ 
     and 
     \begin{align*}
         F'(p)&=\frac{\partial (U(3)-U(2))}{\partial p}\\
         &=-k(1-p)^{k-3}\left[(1-p)^3+kp(1-p)^2+\frac{1}{2}k(k-1)p^2(1-p)-\frac{1}{2}k(k-1)(k-2)p^3\right].
     \end{align*}
    Let 
    \begin{align*}
        f(p)&=(1-p)^3+kp(1-p)^2+\frac{1}{2}k(k-1)p^2(1-p)-\frac{1}{2}k(k-1)(k-2)p^3.
    \end{align*}
    One can easily check that, when $k=3$, the equation $f(p)=0$ has only one solution. For general $k\ge 4$, the function $f(p)$ is a cubic function whose derivative changes signs at most twice. 
    Since $f(0)>0,f(1)<0$, and $f'(0)=k-3>0$, it is clear that $f(p)=0$ has only one solution. 
    Thus, $F(p)$ is monotonically decreasing and then monotonically increasing on $(0,1)$. Since $F(0)=F(1)=1>0$ and $F(p_1)<0$ by Claim~\ref{prop: F(p_1)<0}, there exists exactly one point $p_2\in(p_1,1)$ such that $F(p_2)=0$. Therefore, when $p\in [p_1,p_2)$, $F(p)<0$, that is, $U(3)<U(2)$.

    To see (3), Claim~\ref{prop: U(2)>1_at_2_point} tells that $U(2)>1$ when $p=p_2$.
    Since $U(2)$ is increasing on $(p_1,1)$ and $p_2>p_1$, we conclude $U(2)>1$ for $p\in[p_2,1]$.
\end{proof}

\subsection{Proof of Proposition~\ref{prop:p2}}
\label{sect:p2}
With the above analysis, we can now easily prove Proposition~\ref{prop:p2} in Sect.~\ref{sect:ourresults}.

Let $\phi(x)=k(2x(1-x)^k+kx^2(1-x)^{k-1})$ be the left-hand side of Equation~(\ref{eqn:thmmain}).
For $\ell=1$, $\phi(p)$ is just $2-U(2)$.
In the proof of the second part of Proposition~\ref{prop:ell=1}, we have seen that $U(2)$ is increasing and then decreasing on the interval $(0,1)$.
In addition, the third part of Proposition~\ref{prop:ell=1} implies $U(2)<1$ at $p=\frac1k$.
Therefore, $\phi(x)$ is increasing and then decreasing on $(0,1)$, and $\phi(\frac1k)>1$.
Noticing that $\phi(1)=0$, there is a unique solution in $(\frac1k,1)$ for the equation $\phi(x)=1$, and this solution is exactly $p_2^\ast$. In addition, any $x$ with $\phi(x)<1$ gives an upper bound to $p_2^\ast$.
Taking $x=\frac5k$ as it is in the proposition, we have
\begin{align*}
    \phi\left(\frac5k\right)&=10\left(1-\frac5k\right)^k+25\left(1-\frac5k\right)^{k-1}\\
    &<10e^{-5}+25e^{-5\frac{k-1}{k}}\tag{by the inequality $1+x<e^x$ for $x\neq0$}\\
    &\leq 10e^{-5}+25e^{-5\times\frac{2}{3}}\tag{since $k\geq3$}\\
    &<1.\tag{$10e^{-5}+25e^{-5\times\frac{2}{3}}$ is approximately $0.959$}
\end{align*}

\section{Proof of Lemma~\ref{lem:main} at Phase-Transition Points}
\label{sect:polyhedron}
The third part of Lemma~\ref{lem:main} with $\utopp=\ell$ is analyzed by applying the techniques of \emph{polyhedron approach} in~\citet{xia2021likely}.
The techniques are reviewed in Sect.~\ref{sect:technique}.
Finally, Propositions~\ref{prop: u=1neg} and~\ref{prop:u=1pos} characterize the $\utopp = \ell$ case and show that the likelihood of AJR committee's existence and non-existence are both $\Theta(1)$ respectively.

\subsection{Techniques of Polyhedron Approach}
\label{sect:technique}

\begin{definition}[Poisson multivariate variables (PMVs)~\cite{xia2021likely}]
    Given any $q,n\in\mathbb{Z}^+$ and any vector $\vec \pi=(\pi_1,\ldots,\pi_n)$ of $n$ distributions over $[q]$, an $(n,q)$-PMV, denoted by $\vec{X}_{\vec \pi}$, is the histogram of $n$ independent random variables $Y_1,Y_2,\ldots,Y_n$, where $Y_i$ follows distribution $\pi_i$.  
\end{definition}

A {\em polyhedron} $\mathcal{H}\subseteq\mathbb{R}^q$ is characterized by a matrix $A$ and a vector $\vec b$, i.e., $\mathcal{H}:=\{\vec x\in\mathbb{R}^q:A\vec x\leq \vec b\}$. Let $\dim(\mathcal{H})$ be the dimension of the polyhedron $\mathcal{H}$. We say that $\mathcal{H}\subseteq \mathbb{R}^q$ is full-dimensional if $\dim(\mathcal{H})=q$. Let $\mathcal{H}_{\le 0} = \{\vec x\in\mathbb{R}^q:A\vec x\leq \vec 0\}$ be the characteristic cone of $\mathcal{H}$. Let $\mathcal{H}_n^\mathbb{Z}$ be the set of all integral points whose $L_1$ norm is $n$ in the polyhedron $\mathcal{H}$. 

Given a set $\Pi$ of distributions over $[q]$, we can view each $\pi\in\Pi$ as a point in $\{(x_1,\ldots,x_q)\in\mathbb{R}^q\mid \sum_{i=1}^qx_i=1\}$, so $\Pi$ is a subset of $\mathbb{R}^q$. 
We say $\Pi$ is closed if $\Pi$ is a closed set in $\mathbb{R}^q$.
We further use $\CH(\Pi)$ to denote the convex hull of $\Pi$.
We say a distribution $\pi$ is strictly positive if there exists a constant $\varepsilon > 0$ such that the probability of any $i \in [q]$ under $\pi$ is at least $\varepsilon$. A distribution set $\Pi$ is strictly positive if any distribution $\pi \in \Pi$ is strictly positive.

Given $q\in\mathbb{Z}^+$, a polyhedron $\mathcal{H}\subseteq\mathbb{R}^q$, and a set $\Pi$ of distributions over $[q]$, we are interested in $\sup_{\vec\pi \in \Pi^n} \Pr[\vec{X}_{\vec \pi} \in \mathcal{H}]$ and $\inf_{\vec\pi \in \Pi^n} \Pr[\vec{X}_{\vec \pi} \in \mathcal{H}]$. 

\begin{theorem}[Smooth Likelihood of PMV-in-polyhedron~\cite{xia2021likely}]\label{thm:pmv1}
    Given any $q\in\mathbb{Z}^+$, any closed and strictly positive $\Pi$ over $[q]$, and any polyhedron $\mathcal{H}$ characterized by an integer matrix $A$, for any $n\in \mathbb{Z}^+$, 
    \begin{equation*}
       \sup_{\vec{\pi}\in \Pi^n}\mathrm{Pr}(\vec{X}_{\vec{\pi}}\in \mathcal{H}) =\begin{cases}
        0 & \mathrm{if}\ \mathcal{H}_n^\mathbb{Z}=\emptyset \\
        \exp(-\Theta(n)) & \begin{aligned}
             \mathrm{if}\ \mathcal{H}_n^\mathbb{Z}\neq\emptyset\ \mathrm{and}\ \mathcal{H}_{\leq 0}\cap \CH(\Pi)=\emptyset
        \end{aligned} \\
        \Theta\left(\sqrt{n}^{\dim(\mathcal{H}_{\leq 0})-q}\right) & \mathrm{otherwise}
    \end{cases} 
    \end{equation*}

    \begin{equation*}
        \inf_{\vec{\pi}\in \Pi^n}\mathrm{Pr}(\vec{X}_{\vec{\pi}}\in \mathcal{H}) =\begin{cases}
        0 & \mathrm{if}\ \mathcal{H}_n^\mathbb{Z}=\emptyset \\
        \exp(-\Theta(n)) & \begin{aligned}
             \mathrm{if}\ \mathcal{H}_n^\mathbb{Z}\neq\emptyset\ \mathrm{and}\ \CH(\Pi) \not\subseteq \mathcal{H}_{\leq 0}
        \end{aligned} \\
        \Theta\left(\sqrt{n}^{\dim(\mathcal{H}_{\leq 0})-q}\right) & \mathrm{otherwise}
    \end{cases}
    \end{equation*}
\end{theorem}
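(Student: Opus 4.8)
The plan is to prove both displayed formulas by matching upper and lower bounds on $\sup_{\vec\pi\in\Pi^n}\Pr[\vec X_{\vec\pi}\in\mathcal H]$ and $\inf_{\vec\pi\in\Pi^n}\Pr[\vec X_{\vec\pi}\in\mathcal H]$, split along the same three geometric cases that appear in the statement. The first case needs nothing: every realization of $\vec X_{\vec\pi}$ is a nonnegative integer vector with $\vec 1^\top\vec X_{\vec\pi}=n$, so if $\mathcal H_n^{\mathbb Z}=\emptyset$ then $\Pr[\vec X_{\vec\pi}\in\mathcal H]=0$ for every $\vec\pi$. For the other two cases I would work throughout on the hyperplane $\{\vec x:\vec 1^\top\vec x=n\}$ that carries the entire support, and lean on two analytic engines: Hoeffding-type large-deviation bounds for the exponential branch, and a multivariate local limit theorem for Poisson-multinomial vectors for the polynomial branch. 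Strict positivity of $\Pi$ is what makes both engines uniform in $\vec\pi$: it forces every linear functional $\vec u^\top\vec X_{\vec\pi}$ with $\vec u\notin\mathrm{span}(\vec 1)$ to have variance $\Theta(n)$ with constants independent of $\vec\pi$, so the covariance of $\vec X_{\vec\pi}$ is uniformly nondegenerate on $\{\vec x:\vec 1^\top\vec x=0\}$.

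For the exponential branch I would argue as follows. On the supremum side, assume $\mathcal H_{\leq 0}\cap\CH(\Pi)=\emptyset$; since $\CH(\Pi)$ is compact and $A$ has finitely many rows, a separation-plus-compactness argument gives a constant $\delta>0$ with $\max_j(A\bar\pi)_j\ge\delta$ for every $\bar\pi\in\CH(\Pi)$. Any $\vec\pi\in\Pi^n$ has $\E[\vec X_{\vec\pi}]=n\bar\pi$ with $\bar\pi=\tfrac1n\sum_i\pi_i\in\CH(\Pi)$, so some coordinate satisfies $(A\,\E[\vec X_{\vec\pi}])_j\ge\delta n$, and since $b_j$ is a fixed constant the event $\vec X_{\vec\pi}\in\mathcal H$ forces the bounded-increment sum $(A\vec X_{\vec\pi})_j$ to drop $\Omega(n)$ below its mean; Hoeffding then gives $\exp(-\Omega(n))$, uniformly in $\vec\pi$. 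The matching lower bound uses $\mathcal H_n^{\mathbb Z}\neq\emptyset$: fix $\vec z\in\mathcal H_n^{\mathbb Z}$ and any $\rho\in\Pi$, take $\vec\pi=(\rho,\dots,\rho)$, and observe $\Pr[\vec X_{\vec\pi}=\vec z]\ge\varepsilon^n$ where $\varepsilon$ is the strict-positivity constant. The same $\varepsilon^n$ bound holds for \emph{every} $\vec\pi$, which simultaneously delivers the infimum lower bound; and for the infimum upper bound, $\CH(\Pi)\not\subseteq\mathcal H_{\leq 0}$ supplies a single $\bar\pi\in\CH(\Pi)$ with $(A\bar\pi)_j>0$, and feeding a mixture $\vec\pi$ whose average is $\bar\pi$ up to $O(1/n)$ rounding into the same Hoeffding estimate gives $\exp(-\Omega(n))$.

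The polynomial branch is the substantive part. Here $\mathcal H_n^{\mathbb Z}\neq\emptyset$ and the cone conditions put us in the ``otherwise'' case, and I would read the exponent off the recession cone via the polyhedral identity $\dim(\mathcal H_{\leq 0})=q-\mathrm{rank}(A^{=})$, where $A^{=}$ collects the rows $a_j$ of $A$ that are \emph{implicit equalities} of $\mathcal H_{\leq 0}$ (so $a_j\vec x=0$ for all $\vec x\in\mathcal H_{\leq 0}$); write $d:=\mathrm{rank}(A^{=})=q-\dim(\mathcal H_{\leq 0})$. The guiding principle is that a constraint contributes a factor $\Theta(n^{-1/2})$ exactly when it squeezes $\vec X_{\vec\pi}$ from both sides, and this happens precisely for the implicit-equality functionals of $\mathcal H_{\leq 0}$: for such $a_j$ one has $-a_j=\sum_{j'}\mu_{j'}a_{j'}$ with $\mu_{j'}\ge 0$, hence $a_j\vec x\in[-\sum_{j'}\mu_{j'}b_{j'},\,b_j]$ is confined to a constant-length interval on all of $\mathcal H$, whereas a non-implicit-equality constraint, when active, is one-sided and is satisfied with probability $\Theta(1)$; moreover in this branch $\vec 1\notin\mathrm{span}(A^{=})$ because $\mathcal H_{\leq 0}$ contains a probability vector, so the $\vec 1$-direction is not one of the squeezed ones. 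For the upper bound I would stack $d$ independent rows of $A^{=}$ into a full-rank $d\times q$ matrix $M$; then $\vec X_{\vec\pi}\in\mathcal H$ implies $M\vec X_{\vec\pi}$ lies in a fixed constant-volume box, and since $M\vec X_{\vec\pi}$ is a sum of $n$ independent bounded lattice vectors with $\Theta(n)$ and nondegenerate covariance (uniformly in $\vec\pi$ by strict positivity, using $\vec 1\notin\mathrm{span}(A^{=})$), a $d$-dimensional anti-concentration/local limit bound gives $\Pr\le O(n^{-d/2})=O(\sqrt n^{\dim(\mathcal H_{\leq 0})-q})$ uniformly. For the lower bound I would pick $\bar\pi$ in the relative interior of $\mathcal H_{\leq 0}\cap\CH(\Pi)$ (nonempty by hypothesis), write it by Carath\'eodory as a convex combination of at most $q+1$ points of $\Pi$, and let $\vec\pi^{*}$ repeat those points with the corresponding integer multiplicities (rounding costs $O(1)$ in the mean); then $\E[\vec X_{\vec\pi^{*}}]=n\bar\pi+O(1)$ keeps all $d$ implicit-equality functionals of the cone within $O(1)$ of their value $0$ while every other active constraint is slack by $\Theta(n)$, so up to a factor $1-\exp(-\Omega(n))$ the event $\vec X_{\vec\pi^{*}}\in\mathcal H$ reduces to $M\vec X_{\vec\pi^{*}}$ landing in its constant window, which the local limit theorem evaluates as $\Theta(n^{-d/2})$. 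Finally, the infimum in this branch is sandwiched between this $\Omega(n^{-d/2})$ (the same local-limit estimate applies with $\bar\pi=\tfrac1n\sum_i\pi_i$ for any $\vec\pi$, since $\CH(\Pi)\subseteq\mathcal H_{\leq 0}$ forces $\bar\pi\in\mathcal H_{\leq 0}$) and the supremum's $O(n^{-d/2})$, so it is $\Theta(\sqrt n^{\dim(\mathcal H_{\leq 0})-q})$ as well.

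I expect the main obstacle to be the polynomial branch, in two interlocking pieces: a local limit theorem for Poisson-multinomial vectors sharp and uniform enough over the nondegenerate regime of $\vec\pi$ to pin the exponent constant to exactly $d$ on both the upper and lower sides (ordinary central limit theorems only give $o(1)$-type control), and the polyhedral bookkeeping that identifies the two-sided ``squeezed'' directions with the implicit equalities of the recession cone, counts them as $\mathrm{rank}(A^{=})$ after quotienting the frozen $\vec 1$-direction, and verifies that every remaining constraint---slack or one-sided-active---costs only a constant factor.
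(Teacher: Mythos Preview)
This theorem is not proved in the paper. It is quoted from~\cite{xia2021likely} and used as a black box: the paper states it, derives Corollary~\ref{coro:pmv1} by specializing to a singleton $\Pi$, and then applies the corollary to prove Propositions~\ref{prop: u=1neg} and~\ref{prop:u=1pos}. There is therefore no ``paper's own proof'' to compare your attempt against.

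Your sketch is a plausible outline of how such a result could be established, and indeed the decomposition into a trivial case, a large-deviation branch, and a local-limit branch is the standard architecture for statements of this type. But since the paper treats the theorem as an imported tool, your proposal is addressing a different task from anything the paper undertakes. If your goal is to understand the present paper, you should take Theorem~\ref{thm:pmv1} as given and focus instead on how the polyhedra $\mathcal H$ and $\mathcal H'$ in Sections~\ref{apx:u=1neg} and~\ref{apx:u=1pos} are engineered so that the third case of Corollary~\ref{coro:pmv1} applies; if your goal is to prove Theorem~\ref{thm:pmv1} itself, you would need to consult~\cite{xia2021likely} rather than this paper.
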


When applying the polyhedral approach to our problem, we set $q = 2^m$, where $[q]$ represents the set of all subsets of $M$, corresponding to the approval ballots of voters. Each $Y_i$ is the random variable corresponding to voter $i$'s approval ballot.  
Let $\pi_i$ be the corresponding distribution of $Y_i$, i.e., for each $q'\in[q]=[2^m]$ that represents a subset $S_{q'}$ of candidates, $\pi_i$ assigns probability $p^{|S_{q'}|}(1-p)^{m-|S_{q'}|}$ to $q'$ under the Erd\H{o}s--R\'enyi bipartite model with parameter $p$.
We clearly have $\pi_1=\cdots=\pi_n$, and we denote this common distribution by $\pi$.
Let $\vec{X}_{\vec \pi}$ be the corresponding $(n,2^m)$-PMV where $\vec\pi=\{\pi_1,\ldots,\pi_n\}$.
We have the following corollary for i.i.d. distributions to characterize the probability that $\vec{X}_{\vec \pi}$ is in some polyhedral $\mathcal{H}$.

\begin{corollary}
\label{coro:pmv1}
    Consider any $q\in\mathbb{Z}^+$, any strictly positive distribution $\pi$ over $[q]$, and any polyhedron $\mathcal{H}$. For any  $n\in\mathbb{Z}^+$, letting $\vec{X}_{\vec \pi}$ be the $(n,q)$-PMV with $\vec\pi=(\pi_1,\ldots,\pi_n)$ and $\pi_1=\cdots=\pi_n=\pi$, we have 
    \begin{equation*}
      \mathrm{Pr}(\vec{X}_{\vec{\pi}}\in \mathcal{H}) =\begin{cases}
        0 & \mathrm{if}\ \mathcal{H}_n^\mathbb{Z}=\emptyset \\
        \exp(-\Theta(n)) & \begin{aligned}
             \mathrm{if}\ \mathcal{H}_n^\mathbb{Z}\neq\emptyset\ \mathrm{and}\ \pi \not \in \mathcal{H}_{\leq 0}
        \end{aligned} \\
        \Theta\left(\sqrt{n}^{\dim(\mathcal{H}_{\leq 0})-q}\right) & \mathrm{otherwise}
    \end{cases} 
    \end{equation*}
\end{corollary}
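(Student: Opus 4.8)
The plan is to derive Corollary~\ref{coro:pmv1} as the i.i.d.\ special case of Theorem~\ref{thm:pmv1}, obtained by taking the distribution set to be the singleton $\Pi=\{\pi\}$. First I would verify that this $\Pi$ meets the hypotheses of Theorem~\ref{thm:pmv1}: a single point is a closed subset of $\mathbb{R}^q$, and strict positivity of $\Pi$ is precisely strict positivity of $\pi$, which is assumed; and the polyhedron $\mathcal{H}$ is taken to be characterized by an integer matrix, as in all applications in this paper. Hence Theorem~\ref{thm:pmv1} applies directly to the pair $(\Pi,\mathcal{H})$.

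Next I would observe that $\Pi^n$ consists of exactly one element, namely $\vec\pi=(\pi,\ldots,\pi)$, so both $\sup_{\vec\pi'\in\Pi^n}\Pr(\vec{X}_{\vec\pi'}\in\mathcal{H})$ and $\inf_{\vec\pi'\in\Pi^n}\Pr(\vec{X}_{\vec\pi'}\in\mathcal{H})$ coincide with the single quantity $\Pr(\vec{X}_{\vec\pi}\in\mathcal{H})$ that we wish to estimate. It then suffices to check that, for a singleton $\Pi$, the sup-side and inf-side case distinctions of Theorem~\ref{thm:pmv1} collapse to the three cases in the corollary. The first case ($\mathcal{H}_n^{\mathbb{Z}}=\emptyset$) is literally the same on both sides. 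Since $\CH(\{\pi\})=\{\pi\}$, the sup-side condition $\mathcal{H}_{\leq 0}\cap\CH(\Pi)=\emptyset$ reads $\pi\notin\mathcal{H}_{\leq 0}$, and the inf-side condition $\CH(\Pi)\not\subseteq\mathcal{H}_{\leq 0}$ also reads $\pi\notin\mathcal{H}_{\leq 0}$; so the second case and its $\exp(-\Theta(n))$ estimate agree. The ``otherwise'' case likewise agrees, with the common bound $\Theta(\sqrt{n}^{\dim(\mathcal{H}_{\leq 0})-q})$. Squeezing $\Pr(\vec{X}_{\vec\pi}\in\mathcal{H})$ between these matching upper and lower estimates yields the corollary.

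There is essentially no substantive obstacle here; the only points needing care are bookkeeping: confirming that the hypotheses of Theorem~\ref{thm:pmv1} (closedness and strict positivity of $\Pi$, integrality of the matrix defining $\mathcal{H}$) carry over to the singleton setting, and checking that the two a priori distinct geometric conditions --- ``$\mathcal{H}_{\leq 0}$ is disjoint from $\CH(\Pi)$'' versus ``$\CH(\Pi)$ is not contained in $\mathcal{H}_{\leq 0}$'' --- really become the same statement $\pi\notin\mathcal{H}_{\leq 0}$ once $\Pi$ degenerates to a point, so that the sup and inf bounds pinch together without leaving a gap.
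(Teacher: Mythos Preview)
Your proposal is correct and follows essentially the same approach as the paper: apply Theorem~\ref{thm:pmv1} with the singleton $\Pi=\{\pi\}$, observe that a singleton is closed and strictly positive by hypothesis, that $\CH(\{\pi\})=\{\pi\}$, and that the sup-side and inf-side conditions both reduce to $\pi\notin\mathcal{H}_{\leq 0}$, so the two bounds coincide.
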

\begin{proof}
    We apply Theorem~\ref{thm:pmv1} with $\Pi=\{\pi\}$.
    The set $\Pi$ is finite, so it is closed.
    In addition, $\CH(\Pi)=\{\pi\}$.
    Thus, the conditions $\mathcal{H}_{\leq0}\cap\CH(\Pi)=\emptyset$ and $\CH(\Pi)\not\subseteq\mathcal{H}_{\leq0}$ for $\sup_{\vec{\pi}\in \Pi^n}\mathrm{Pr}(\vec{X}_{\vec{\pi}}\in \mathcal{H})$ and $\inf_{\vec{\pi}\in \Pi^n}\mathrm{Pr}(\vec{X}_{\vec{\pi}}\in \mathcal{H})$ respectively become the same, which is $\pi\notin\mathcal{H}_{\leq0}$.
    The corollary follows.
\end{proof}

To conclude Lemma~\ref{lem:main} for the Case $\utopp=\ell$, we will prove the following two propositions in the next two sub-sections using polyhedron approach.

\begin{proposition}
    \label{prop: u=1neg}
    For any constants $m, k, p$ and letting $\mathcal{L}$ be as defined in Lemma~\ref{lem:main}, if $\utopp \ge \ell$ for every $\ell\in\mathcal{L}$ and there exists $\ell\in\mathcal{L}$ such that $u_\ell=\ell$, then the probability that every committee $W$ fails to satisfy the AJR condition on some $\ell$-cohesive group is $\Theta(1)$. 
\end{proposition}

\begin{proposition}
    \label{prop:u=1pos}
    For any constant $m, k, p$, if $\utopp \ge \ell$ for every $1\le \ell \le k$ such that $p^\ell \ge \frac{\ell}{k}$, then the likelihood that an AJR committee exists is $\Theta(1)$. 
        
\end{proposition}

Both propositions share a similar proof idea. First, we construct a polyhedron $\mathcal{H}$ that characterizes a sub-event of ``an AJR committee (does not) exist'' respectively. Let $\vec{X}_{\vec{\pi}}$ be the PMV of $n$ votes following distribution $\pi$. We show that $\mathrm{Pr}(\vec{X}_{\vec{\pi}}\in \mathcal{H}) = \Theta(1)$ by applying Corollary~\ref{coro:pmv1}. Since both sub-events have a non-vanishing probability $\Theta(1)$, both ``an AJR committee exists'' and ``an AJR committee does not exist'' has probability $\Theta(1)$, and an AJR committee exists with probability $\Theta(1)$ and $1 - \Theta(1)$. 

To show $\mathrm{Pr}(\vec{X}_{\vec{\pi}}\in \mathcal{H}) = \Theta(1)$ we prove three claims. (1) $\pi \in \mathcal{H}_{\le 0}$, (2) $\mathcal{H}_n^{\mathbb{Z}} \neq \emptyset$, and (3) $\mathrm{dim}(\mathcal{H}_{\le 0}) = q$. Claim (1) and (2) ensure that  Corollary~\ref{coro:pmv1} falls into the third case, and (3) ensures that the third case implies $\Theta(1)$. 

\subsection{Proof of Proposition~\ref{prop: u=1neg}}\label{apx:u=1neg}
    We use the polyhedron approach to characterize a sub-event of ``for every committee, an underrepresented $\ell$-cohesive group exists''. For this proof, we fix an $\ell$ such that $u_\ell = \ell$. Note that $u_\ell$ and $\topp$ are determined solely on $\ell$. 

    Recall that $M$ is the set of candidates. Let $\Mwin$ be the set of all $k$-subset of $M$ and $\Mlose$ be the set of all $\ell$-subset of $M$. Let $\phi: \Mwin \to \Mlose$ be a mapping that maps every winning committee $W$ to an $\ell$-set $L_W \subseteq M\setminus W$. The existence of $\phi$ is guaranteed by $\ell \le m - k$. 
    
    For each voter $i$, let $Y_i$ be the random variable on $2^M$ denoting $i$'s approval ballot. For each $S\subseteq M$, let $x_S = |\{i\mid Y_i = S\}|$ be the number of voters whose approval ballot is exactly $S$. Note that $\sum_{S\subseteq M} x_S = n$.  Let $\pi$ be the distribution on $2^M$ in which each candidate is approved independently with probability $p$. Let $\vec X$ be the histogram of $n$ independent variables $(Y_1, Y_2, \ldots, Y_n)$, where each $Y_i$ follows $\pi$ independently. Consider the polyhedron $\mathcal{H}\subseteq\mathbb{R}^{2^m}$ defined by the following three sets of constraints. Note that $\topp$ is a constant given $m, k, \ell$, and $p$.
    Note also that each variable $x_S$ below is relaxed so that fractional values are allowed.
    \begin{align*}
    \forall W \in \Mwin, &\sum_{t = 0}^{\topp} \sum_{L_W \subseteq S, |S \cap W| = t} x_S \ge \ell /k \cdot \sum_{S\subseteq M} x_S.\\
    \forall W \in \Mwin, &\sum_{t = 0}^{\topp - 1} \sum_{L_W \subseteq S, |S \cap W| = t} x_S \le \ell /k \cdot \sum_{S\subseteq M} x_S.\\
    \forall W \in \Mwin, &\sum_{t = 1}^{\topp} \sum_{L_W \subseteq S, |S \cap W| = t} t\cdot x_S - \left(\sum_{t = 0}^{\topp} \sum_{L_W \subseteq S, |S \cap W| = t} x_S - \ell /k \cdot \sum_{S\subseteq M} x_S \right) \cdot \topp \\
    &\hspace{9.5em} \le \ell^2 /k \cdot \sum_{S\subseteq M} x_S - \topp - \frac1k.
\end{align*}
For each winning committee $W$, the three constraints respectively say the followings:
\begin{enumerate}
    \item there exists a group with at least $\ell\cdot n / k $ voters, all of which approve all candidates in $L_W$ as well as at most $\topp$ candidates in $W$;
    \item the number of voters that  approve all candidates in $L_W$ as well as at most $\topp - 1$ candidates in $W$ is at most $\ell\cdot n / k $; and
    \item the total utility of $\ell \cdot n / k$ voters with the minimum utility (fractional voters allowed) is strictly lower than $\ell^2 \cdot n/k - \topp$.  
    After taking the ceiling function, the total utility for those $\lceil \ell \cdot n/k \rceil$ voters with the minimum utilities is strictly less than $\ell^2 \cdot n/k$ (since $\lceil \ell \cdot n/k \rceil-\ell\cdot n/k<1$ and $t_\ell$ is an upper bound on the utility of a single voter in this group).
    Notice that this implies the average utility for those $\lceil \ell \cdot n/k \rceil$ voters is strictly less than $\ell$, failing the AJR condition.
\end{enumerate}

It is not hard to verify that $\mathcal{H}$ is a sub-event of ``every winning committee $W$ fails the AJR condition on some $\ell$-cohesive group''.
Therefore, it remains to show $\Pr[\vec X \in \mathcal{H}] = \Theta(1)$.
We will show the following three claims, which imply $\Pr[\vec X \in \mathcal{H}] = \Theta(1)$ by the third case of Corollary~\ref{coro:pmv1}.

\begin{claim}
\label{claim:incone_1}
    $\pi \in \mathcal{H}_{\le 0}$. 
\end{claim}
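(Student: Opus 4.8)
The goal of Claim~\ref{claim:incone_1} is to show that the distribution $\pi$, viewed as a point in $\mathbb{R}^{2^m}$, lies in the characteristic cone $\mathcal{H}_{\le 0}$, i.e., that $\pi$ satisfies all the defining inequalities of $\mathcal{H}$ with the right-hand sides replaced by their ``homogeneous'' parts (dropping the constant terms $-\topp-\frac1k$ etc., and reading $\sum_{S}x_S$ as the coordinate sum, which for $\pi$ equals $1$). The plan is to verify each of the three families of inequalities in turn, substituting $x_S = \Pr_{\pi}[\text{ballot}=S] = p^{|S|}(1-p)^{m-|S|}$.

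First I would compute, for a fixed winning committee $W$ and its image $L_W=\phi(W)$ (with $|L_W|=\ell$, $L_W\cap W=\emptyset$), the quantity $\sum_{L_W\subseteq S,\ |S\cap W|=t}x_S$. Since $L_W$ and $W$ are disjoint, a random ballot contains all of $L_W$ with probability $p^\ell$ and independently contains exactly $t$ of the $k$ candidates in $W$ with probability $\binom{k}{t}p^t(1-p)^{k-t}$; the remaining $m-k-\ell$ candidates are free. Hence this sum equals $p^\ell\binom{k}{t}p^t(1-p)^{k-t} = \binom{k}{t}p^{\ell+t}(1-p)^{k-t}$, which is exactly the per-voter version of the $n_\ell$-type expressions in Equation~(\ref{eqn:tlandnl}). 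With this, the first homogeneous inequality becomes $\sum_{t=0}^{\topp}\binom{k}{t}p^{\ell+t}(1-p)^{k-t} \ge \frac{\ell}{k}$, which is precisely the defining property of $\topp$ in Lemma~\ref{lem:main} (with $n$ cancelled); the second becomes $\sum_{t=0}^{\topp-1}\binom{k}{t}p^{\ell+t}(1-p)^{k-t} \le \frac{\ell}{k}$, which holds because $\topp$ is the \emph{smallest} integer achieving the threshold, so the partial sum up to $\topp-1$ has not yet reached $\frac{\ell}{k}$. For the third inequality, substituting the same expressions and dividing out $n$, the left-hand side becomes $\sum_{t=1}^{\topp}t\binom{k}{t}p^{\ell+t}(1-p)^{k-t} - \topp\big(\sum_{t=0}^{\topp}\binom{k}{t}p^{\ell+t}(1-p)^{k-t}-\frac{\ell}{k}\big)$, which by the definition of $\utopp$ in Equation~(\ref{eqn:lemmain}) equals $\frac{\ell}{k}\cdot\utopp$; the homogeneous right-hand side is $\frac{\ell^2}{k}$. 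So the inequality reads $\frac{\ell}{k}\utopp \le \frac{\ell^2}{k}$, i.e., $\utopp\le\ell$, which holds under the hypothesis of Proposition~\ref{prop: u=1neg} for this particular $\ell$ (indeed $\utopp=\ell$ is assumed). Since these identities hold uniformly over all $W\in\Mwin$ (the right-hand computation does not depend on which $W$ we picked, only on $\ell$ and $\topp$), every homogeneous constraint of $\mathcal{H}$ is satisfied by $\pi$, giving $\pi\in\mathcal{H}_{\le 0}$.

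The only genuinely delicate point — and the main thing to get right rather than the main obstacle — is the bookkeeping that turns the homogeneous inequalities into the three scalar statements about $\utopp$ and $\topp$: one must be careful that dropping the constants $-\topp-\frac1k$ is legitimate when passing to $\mathcal{H}_{\le 0}$ (it is, since the cone is defined by $A\vec x\le\vec 0$), and that the coordinate-sum term $\sum_S x_S$ evaluates to $1$ at $\pi$ so the factors $\frac{\ell}{k}\sum_S x_S$ and $\frac{\ell^2}{k}\sum_S x_S$ behave as claimed. Once the substitution $x_S\mapsto p^{|S|}(1-p)^{m-|S|}$ and the disjointness of $L_W$ and $W$ are in place, everything reduces to re-reading the definitions of $\topp$ and $\utopp$, so I expect this claim to be short; the harder work is saved for the later claims $\mathcal{H}_n^{\mathbb{Z}}\neq\emptyset$ and $\dim(\mathcal{H}_{\le 0})=q$.
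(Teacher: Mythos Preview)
Your proposal is correct and follows essentially the same approach as the paper: write out the homogeneous version of the three constraint families, substitute $\pi_S=p^{|S|}(1-p)^{m-|S|}$, use the disjointness of $L_W$ and $W$ to identify $\sum_{L_W\subseteq S,\,|S\cap W|=t}\pi_S=\binom{k}{t}p^{\ell+t}(1-p)^{k-t}$, and then read off the first two constraints from the definition of $\topp$ and the third from $\utopp=\ell$. If anything, your write-up is slightly more explicit than the paper's about why the computation is uniform over $W$ and why dropping the constants is legitimate.
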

\begin{proof}
The characteristic cone $\mathcal{H}_{\le 0}$ can be written in the following form:
\begin{align*}
    \forall W \in \Mwin, & \sum_{t = 0}^{\topp} \sum_{L_W \subseteq S, |S \cap W| = t} x_S \ge \ell /k \cdot \sum_{S\subseteq M} x_S.\\
    \forall W \in \Mwin, & \sum_{t = 0}^{\topp - 1} \sum_{L_W \subseteq S, |S \cap W| = t} x_S \le \ell /k \cdot \sum_{S\subseteq M} x_S.\\
    \forall W \in \Mwin, & \sum_{t = 1}^{\topp} \sum_{L_W \subseteq S, |S \cap W| = t} t\cdot x_S - \left(\sum_{t = 0}^{\topp} \sum_{L_W \subseteq S, |S \cap W| = t} x_S - \ell /k \cdot \sum_{S\subseteq M} x_S \right) \cdot \topp \\ & \hspace{9.5em} \le \ell^2 /k \cdot \sum_{S\subseteq M} x_S. 
\end{align*}
As $\pi$ is a distribution on $2^m$, we can represent it as a vector on $\mathbb{R}^{2^m}$, where $\pi_S$ is the probability that the approval set of an agent is exactly $S$. 

We then show that $\pi$ satisfies the three constraints above for any $W\in\Mwin$.
By definition, $\topp$ is the largest number of winners a voter in the underrepresented $\ell$-cohesive group may approve in expectation, where every $Y_i$ follows $\pi$. That is (Equation~(\ref{eqn:tlandnl})), $\topp$ is the smallest integer such that
\begin{equation*}
\ntopp := \sum_{t = 0}^{\topp} \binom{k}{t}\cdot p^{\ell + t}\cdot (1 - p)^{k - t}\cdot n \ge \ell \cdot \frac{n}{k}.
\end{equation*}
On the other hand, by the definition of $\pi$, $\sum_{L_W \subseteq S, |S \cap W| = t} \pi_S=\binom{k}{t}p^{\ell+t}(1-p)^{k-t}$, and $\sum_{S\subseteq M} \pi_S=1$
Therefore, the first two constraints hold. When $\pi$ is assigned in the third constraint, the left-hand side becomes $\ell / k \cdot \utopp$, which is $\ell^2 / k$ due to our choice of $\ell$ with $\utopp=\ell$, while the right-hand side is exactly $\ell^2 / k$. Therefore, the third constraint holds.
\end{proof}

\begin{claim}
    For all sufficiently large $n$, $\mathcal{H}$ has a inner point $\vec X_n$ such that $\|\vec X_n\|_1 = n$, and all entries in $\vec X_n$ are integers. That is, $\mathcal{H}_n^{\mathbb{Z}}\neq\emptyset$.
\end{claim}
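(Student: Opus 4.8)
The plan is to produce, for every sufficiently large $n$, an explicit \emph{interior} lattice point of $\mathcal{H}$ with $\|\cdot\|_1=n$. The natural first guess is the expected histogram $n\pi$, which is exactly the vector shown to lie in $\mathcal{H}_{\le 0}$ in Claim~\ref{claim:incone_1}. It does not quite lie in $\mathcal{H}$, though: rewriting the third constraint for a committee $W$ as
\[
\sum_{L_W\subseteq S,\ |S\cap W|\le \topp}(\topp-|S\cap W|)\,x_S\ \ge\ \tfrac{\ell(\topp-\ell)}{k}\textstyle\sum_{S}x_S+\topp+\tfrac1k ,
\]
and using $\sum_{L_W\subseteq S,\,|S\cap W|=t}\pi_S=\binom{k}{t}p^{\ell+t}(1-p)^{k-t}$ together with the definition of $\utopp$, one sees that at $\vec x=n\pi$ the left-hand side equals $\tfrac{\ell(\topp-\ell)}{k}n$ exactly (this is where $\utopp=\ell$ enters), so $n\pi$ violates the constraint by the \emph{constant} $\topp+\tfrac1k$. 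Moreover the first constraint $\sum_{L_W\subseteq S,\,|S\cap W|\le\topp}x_S\ge \tfrac{\ell}{k}\sum_S x_S$ is only satisfied with possible equality at $n\pi$, whereas the second one is slack by $\Theta(n)$ because $\sum_{t=0}^{\topp-1}\binom{k}{t}p^{\ell+t}(1-p)^{k-t}<\tfrac{\ell}{k}$ strictly by the minimality of $\topp$.

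The fix is to perturb $\pi$ toward the low-utility ballot types. I will assume $p<1$, so that $\pi_{\emptyset}=(1-p)^m>0$ (the case $p=1$ is routed to the first part of Lemma~\ref{lem:main} and does not arise here), and note $\topp\ge 1$ since $\utopp=\ell\ge 1$ excludes $\topp=0$. Put $\gamma:=\tfrac{\ell}{k}-\sum_{t=0}^{\topp-1}\binom{k}{t}p^{\ell+t}(1-p)^{k-t}>0$, pick $\epsilon\in\bigl(0,\min\{\gamma,\ \pi_{\emptyset}/\binom{m}{\ell}\}\bigr)$, and define $\vec y\in\mathbb{R}^{2^m}$ by $y_S=\pi_S+\epsilon$ for every $S$ with $|S|=\ell$, $y_{\emptyset}=\pi_{\emptyset}-\binom{m}{\ell}\epsilon$, and $y_S=\pi_S$ otherwise; then $\vec y>0$ coordinatewise and $\sum_S y_S=1$. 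The point of this choice is that for each committee $W$ the unique $\ell$-subset of $M$ containing $L_W$ is $L_W$ itself (which has $|L_W\cap W|=0\le\topp$), while $\emptyset$ contains no $L_W$. Hence, relative to $\pi$ and for every $W$: the first constraint's left side gains $\epsilon$ while its right side is unchanged; the second constraint's left side gains $\epsilon<\gamma$, so it remains strict; and the quantity $\sum_{L_W\subseteq S,\,|S\cap W|\le\topp}(\topp-|S\cap W|)y_S$ gains $\topp\epsilon>0$ while $\sum_S y_S$ (hence the right-hand side of the displayed inequality) is unchanged. Therefore $\vec y$ satisfies every constraint of $\mathcal{H}$, and its homogeneous parts hold with constant positive slack.

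Finally I scale and round. For large $n$, $n\vec y$ has $L_1$-norm $n$; since all the strict inequalities above are homogeneous except for the additive constant $-\topp-\tfrac1k$ in the third family, $n\vec y$ satisfies every constraint of $\mathcal{H}$ with slack $\Theta(n)$ (in particular $\topp\epsilon\,n\ge\topp+\tfrac1k$ once $n$ is large). Rounding $n\vec y$ coordinatewise to an integer vector $\vec X_n$ with $\|\vec X_n\|_1=n$ moves each coordinate by less than $1$, hence changes each side of each defining inequality of $\mathcal{H}$ by at most $O(2^m)=O(1)$; as the slacks are $\Theta(n)$, $\vec X_n$ still lies in the interior of $\mathcal{H}$ for all sufficiently large $n$, and each coordinate of $\vec X_n$, being within $1$ of $n y_S=\Theta(n)>0$, is a positive integer. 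Thus $\vec X_n\in\mathcal{H}_n^{\mathbb Z}$, so $\mathcal{H}_n^{\mathbb Z}\neq\emptyset$.

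The main obstacle is the third family of constraints: the equality $\utopp=\ell$ is precisely what places the expected point $n\pi$ on the boundary of the relevant halfspace (off only by a constant), so the perturbation must push $\pi$ in the correct direction \emph{simultaneously for all $\binom{m}{k}$ committees}, without breaking the second family, which is near-tight in the parameter even though it has $\Theta(n)$ slack after scaling. What makes a single uniform perturbation work is exactly the two combinatorial facts exploited above: $L_W$ is the only $\ell$-subset of $M$ containing $L_W$, and dumping the compensating mass on a type ($\emptyset$) contained in no $L_W$ leaves every right-hand side — each a multiple of $\sum_S x_S$ — untouched.
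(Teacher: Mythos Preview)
Your proof is correct and follows essentially the same approach as the paper: perturb the expected histogram $n\pi$ by shifting mass from the ballot type $\emptyset$ to each $\ell$-subset of $M$, verify the perturbed point is interior to $\mathcal{H}$, then round. The only cosmetic difference is that you perturb $\pi$ by a fixed $\epsilon$ and then scale by $n$ (yielding $\Theta(n)$ slack on every constraint), whereas the paper perturbs $n\pi$ by the constant $\topp\cdot 2^m$ (yielding only $O(1)$ slack on the first and third families, which is still enough to absorb the $O(2^{m-\ell})$ rounding error); your version is slightly cleaner since the uniform $\Theta(n)$ slack makes the rounding step trivial.
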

\begin{proof}
Our construction of $\vec X_n$ consists of two steps. We first construct $\vec X_n'$, which is an inner point of $\mathcal{H}$ and has $\|\vec X_n'\|_1 = n$. Then we round $\vec X_n'$ to get the integer point $\vec X_n$. 

$\vec X_n'$ is as follows. 
For each $t = 0, 1,2, \ldots, m$ and each $S$ with $|S| = t$,
\begin{itemize}
    \item If $t= \ell$, $x_S' = p^t \cdot (1 - p) ^{m - t}\cdot n + \topp\cdot 2^m$. 
    \item If $t = 0$, $x_S' = p^t \cdot (1 - p) ^{m - t}\cdot n - \topp\cdot 2^m\cdot  \binom{m}{\ell}$. 
    \item For all other $t$, $x_S'  = p^t \cdot (1 - p) ^{m- t}\cdot n$.
\end{itemize}
Since $m$ is a constant and $\topp \le k \le m$, we can guarantee that $x_S' \ge 0$ for all sufficiently large $n$. Note that $u_{\ell} \ge \ell$ directly implies that $t_{\ell} \ge \ell$. 

$\vec X_n'$ is modified from the expectation of each $x_S$ under $\pi$. For each set $S$ of $\ell$ candidates, $\topp \cdot 2^m$  voters are switched from approving nobody to approving candidates in $S$ only. Since we do not change the total number of voters, $\|\vec X_n'\|_1 = n$ holds. 

Now we show that $\vec X_n'$ is an inner point of $\mathcal{H}$. After the modification, the left-hand side for all three constraints decreases, while the right-hand side remains unchanged. Therefore, $\vec X_n'$ strictly satisfies all constraints and is an inner point.
\begin{enumerate}
    \item For constraint 1$, \topp\cdot 2^m$ more voters are approving each $L_W$ compared to the expectation. Therefore, $\sum_{t = 0}^{\topp} \sum_{L_W\subseteq S, |S \cap W| = t} x_S' \ge \ell /k \cdot \sum_{S\subseteq M} x_S + \topp\cdot 2^m > \ell /k \cdot \sum_{S\subseteq M} x_S$.
    \item For constraint 2, by the definition of $\topp$, we have $\sum_{t = 0}^{\topp - 1} \binom{k}{t} p^{\ell + t} \cdot (1 - p)^{k - t} < \ell / k$. Therefore,
    \begin{align*}
        \sum_{t = 0}^{\topp - 1} \sum_{L_W \subseteq S, |S \cap W| = t} x_S' =&\  \sum_{t = 0}^{\topp - 1} \binom{k}{t} \cdot p^{\ell + t} \cdot (1 - p)^{k - t}\cdot n + \topp\cdot 2^m\\
        =&\ \ell /k \cdot n -\Theta(n) +\topp\cdot 2^m\\
        <&\ \ell /k \cdot \sum_{S\subseteq M} x_S.
    \end{align*}
    The first equality comes from that $x_S'$ has $\topp\cdot 2^m$ more voters approving each $L_W$ compared to the expectation. 
    The last strict inequality holds for all sufficiently large $n$. 
\item For constraint 3, for each group $L_W$, we compare the voters that approve $L_W$ between the expectation and $\vec X_n'$. There are $\topp\cdot 2^m$ more voters that do not approve any candidates in $W$ in $\vec X_n'$, while for each $1\le t\le k$, the number of voters that approve $t$ candidates in $W$ is unchanged. Therefore, when the $\ell \cdot n / k$ (fractional) voters with the minimum utility are counted, $\topp\cdot 2^m$ voters with utility 0 substitute voters with the utility strictly larger than $0$ compared to the expectation, with at least one voter with utility $\topp$. Therefore, the new total utility is at most $\ell^2\cdot n/k - \topp - \topp\cdot 2^m + 1 < \ell^2 /k \cdot \sum_{S\subseteq M} x_S' - \topp - \frac1k$.
\end{enumerate}

Now we round the fractional instance $\vec X_n'$ into an integer instance $\vec X_n$. Following the spirit of the intermediate value theorem, it is guaranteed that there exists a rounding such that $\|\vec X_n\|_1 = n$ and for every $S$, $x_S' - 1 \le x_S \le x_S' + 1$. Therefore, 
\begin{align*}
    \sum_{t = 0}^{\topp} \sum_{L_W \subseteq S, |S \cap W| = t} |x_S - x_S'| \le &\ \sum_{t = 0}^{k} \sum_{L_W \subseteq S, |S \cap W| = t} 1\\
    =&\ \sum_{L_W \subseteq S} 1\\
    =&\ 2^{m - \ell}, 
\end{align*}
 which is a constant in our setting. Then we show that $\vec X_n$ is also an inner point of $\mathcal{H}$. 
\begin{enumerate}
    \item For constraint 1, the left-hand side difference between $x_S'$ and $x_S$ is at most $2^{m - \ell}$, while $x_S'$ exceeds the threshold for at least $\topp\cdot 2^m$. 
    \begin{align*}
    \sum_{t = 0}^{\topp} \sum_{L_W \subseteq S, |S \cap W| = t} x_S \ge &\ 
        \sum_{t = 0}^{\topp} \sum_{L_W \subseteq S, |S \cap W| = t} x_S' \!-\! \sum_{t = 0}^{\topp} \sum_{L_W \subseteq S, |S \cap W| = t} |x_S' - x_S|\\ \ge&\ \ell /k \cdot \sum_{S\subseteq M} x_S + \topp\cdot 2^m - 2^{m - \ell}\\
        >&\  \ell /k \cdot \sum_{S\subseteq M} x_S.
    \end{align*}
    \item For constraint 2, the left-hand side difference between $x_S'$ and $x_S$ is at most $2^{m - \ell}$, while the gap between the left-hand size on $x_S'$ and the right-hand side is $\Theta(n)$. Therefore, for all sufficiently large $n$, (2) is satisfied, and the inequality is strict. 
    \item For constraint 3, 
    \begin{align*}
&\ \sum_{t = 1}^{\topp} \sum_{L_W \subseteq S, |S \cap W| = t} t\cdot x_S - \left(\sum_{t = 0}^{\topp} \sum_{L_W \subseteq S, |S \cap W| = t} x_S - \ell \cdot n/ k \right)\cdot \topp \\
=&\ -\sum_{t = 1}^{\topp} \sum_{L_W \subseteq S, |S \cap W| = t} (\topp - t)\cdot x_S +\topp \cdot \ell \cdot n/k\\
=&\ -\sum_{t = 1}^{\topp} \sum_{L_W \subseteq S, |S \cap W| = t} (\topp - t)\cdot \left(x_S' - (x_S' - x_S)\right) +\topp \cdot \ell \cdot n/k\\
\le &\ -\sum_{t = 1}^{\topp} \sum_{L_W \subseteq S, |S \cap W| = t} (\topp - t)\cdot x_S+\topp \cdot \ell \cdot n/k + (\topp-1)\cdot 2^{m - \ell}\\
=&\ \sum_{t = 1}^{\topp} \sum_{L_W \subseteq S, |S \cap W| = t} t\cdot x_S' \!-\! \left(\sum_{t = 0}^{\topp} \sum_{L_W \subseteq S, |S \cap W| = t} x_S' \!-\! \ell \!\cdot\! n/ k\right)\!\cdot\! \topp \!+\! (\topp-1)\cdot 2^{m - \ell}\\
\le &\ \ell^2 \cdot n/k - \topp - \topp\cdot 2^m + 1 + (\topp-1)\cdot 2^{m - \ell}\\
<&\  \ell^2 /k \cdot \sum_{S\subseteq M} x_S - \topp - \frac1k.
\end{align*}
The first inequality comes from that $t \ge 1$ so that $\topp - t \le \topp - 1$ and $\sum_{t = 0}^{\topp} \sum_{L_W\subseteq S, |S \cap W| = t} |x_S - x_S'| \le 2^{m - \ell}$.  The second inequality comes from the satisfaction of (3) of $x_S'$. 
\end{enumerate}
Therefore, we have shown that $\vec X_n$ is an inner point of $\mathcal{H}$ for all sufficiently large $n$. 
\end{proof}

\begin{claim}
    $\mathcal{H}_{\le 0}$ is full-dimensional, i.e., $\dim(\mathcal{H}_{\le 0}) = 2^m$.
\end{claim}
\begin{proof}
By using a similar analysis, we can show that $\vec X_n$ is an inner point of $\mathcal{H}_{\le 0}$. Therefore, $\mathcal{H}_{\le 0}$ is full-dimensional.
\end{proof}

With all three claims holding, we apply Corollary~\ref{coro:pmv1} and show that $\Pr[\vec X \in \mathcal{H}] = \Theta(1)$. 

\subsection{Proof of Proposition~\ref{prop:u=1pos}.}\label{apx:u=1pos}
    The proof of Proposition~\ref{prop:u=1pos} resembles those of Proposition~\ref{prop: u>1} and~\ref{prop: u=1neg}. In this proof we fix a winning committee $W$. By using a similar analysis in Proposition~\ref{prop: u>1}, we can show that the probability that $W$ fails to provide AJR on any $\ell$-cohesive groups with $p < \sqrt[\ell]{\frac{\ell}{k}}$ and the probability that $W$ fails on any $L$ with $L\cap W \neq \emptyset$ is o(1). By Lemma~\ref{lem:overlap} and $\topp \ge 1$, we know that the minimum expected average utility among all $\ell$-cohesive groups towards $L$ with $L\cap W \neq \emptyset$ is strictly larger than $\ell$ ($u_{\ell, h}^* > u_{\ell - h} + h$, where $u_{\ell - h} \ge \ell - h$), which implies $o(1)$ in the likelihood of failing to provide AJR. 

    Then it remains to show that the likelihood that $W$ fails on some candidate set $L$ with $L\cap W = \emptyset$ is $1 - \Theta(1)$. We prove this through the polyhedron approach. Specifically, the polyhedron $\mathcal{H}'$ is characterized as follows. For each $\ell \in \min (k, m - k) $ with $p\ge \sqrt[\ell]{\frac{\ell}{k}}$, and each $\ell$-candidate group $L$ such that $L\cap W = \emptyset$, there are the following three constraints. In total, there will be at most $3 \sum_{\ell = 1}^{k} \binom{m}{\ell} \le 3\cdot k \cdot \binom{m}{k}$ constraints. 
    \begin{align*}
    & \sum_{t = 0}^{\topp} \sum_{L \subseteq S, |S \cap W| = t} x_S \ge \ell /k \cdot \sum_{S\subseteq M} x_S.\\
    & \sum_{t = 0}^{\topp - 1} \sum_{L \subseteq S, |S \cap W| = t} x_S \le \ell /k \cdot \sum_{S\subseteq M} x_S.\\
    & \sum_{t = 1}^{\topp} \sum_{L \subseteq S, |S \cap W| = t} t\cdot x_S - \left(\sum_{t = 0}^{\topp} \sum_{L_W \subseteq S, |S \cap W| = t} x_S - \ell /k \cdot \sum_{S\subseteq M} x_S \right) \cdot \topp \\
    & \hspace{8em} \ge \ell^2 /k \cdot \sum_{S\subseteq M} x_S
\end{align*}

$\mathcal{H}'$ characterizes the following event: for every $\ell \in\min\{k, m - k\}$ and every group $L$, 
\begin{enumerate}
    \item there exists a group with at least $\ell\cdot n / k $ voters, all of whom approve all candidates in $L$ as well as at most $\topp$ candidates in $W$;
    \item the number of voters that  approve all candidates in $L$ as well as at most $\topp - 1$ candidates in $W$ is at most $\ell\cdot n / k $; and
    \item the total utility of $\ell \cdot n / k$ voters with the minimum utility (fractional voters accepted) is at least $\ell^2 \cdot n/k$.  In this way, for the $\lceil \ell\cdot n/k \rceil$ voters with the minimum utility, the total expected utility will be at least $\ell^2 \cdot n/k + \topp\cdot (\lceil \ell\cdot n/k \rceil - \ell\cdot n/k)$. Given that $\utopp \ge \ell$, there must be $\topp \ge \ell$, since $\topp$ is the largest expected single-voter utility in the group. Therefore, the total expected utility of $\lceil \ell\cdot n/k \rceil$ voters with the minimum utility will be at least $\ell \cdot \lceil \ell\cdot n/k \rceil$, which implies the average utility being at least $\ell$. 
\end{enumerate}

Note that in the polyhedron, $t_\ell$ for different $\ell$ may be different.

Therefore, $\mathcal{H}'$ is a subevent of ``$W$ does not fail to provide AJR on any non-overlapping set of $\ell$- candidates with $W$ and $p \ge \sqrt[\ell]{\frac{\ell}{k}}$''. Therefore, showing that $\Pr[\vec{X} \in \mathcal{H}'] = \Theta(1)$ implies that the probability $W$ fails on some non-overlapping candidate set $L$ is $1 - \Theta(1)$. Then applying the union bound on all three cases where $W$ fails AJR, we have the probability that $W$ fails AJR is at most $1 - \Theta(1) + o(1) + o(1) = 1 - \Theta(1)$, which implies our conclusion.

Now we prove the claims that apply Corollary~\ref{coro:pmv1}. This resembles the proof of Proposition~\ref{prop: u=1neg}.

\begin{claim}
    $\pi \in \mathcal{H}'_{\le 0}$. 
\end{claim}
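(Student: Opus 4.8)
The plan is to follow the blueprint of Claim~\ref{claim:incone_1}. First I would observe that all three families of inequalities defining $\mathcal{H}'$ are already homogeneous in the variables $\{x_S\}_{S\subseteq M}$: unlike the third constraint in Proposition~\ref{prop: u=1neg}, there is no additive constant here, so the characteristic cone $\mathcal{H}'_{\le 0}$ coincides with $\mathcal{H}'$ itself. Hence it suffices to check that the probability vector $\pi\in\mathbb{R}^{2^m}$ with coordinates $\pi_S=p^{|S|}(1-p)^{m-|S|}$ satisfies each inequality, for every admissible $\ell$ (those with $p^\ell\ge\ell/k$) and every $\ell$-set $L$ with $L\cap W=\emptyset$.

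Next I would evaluate the relevant marginal sums under $\pi$. Since $L\cap W=\emptyset$, $|L|=\ell$, and $|W|=k$, for each $0\le t\le k$ the event ``$L\subseteq S$ and $|S\cap W|=t$'' forces the $\ell$ candidates of $L$ into $S$ (factor $p^\ell$), exactly $t$ of the $k$ candidates of $W$ into $S$ (factor $\binom{k}{t}p^t(1-p)^{k-t}$), and leaves the remaining $m-k-\ell$ candidates unconstrained (marginals summing to $1$); thus $\sum_{L\subseteq S,\,|S\cap W|=t}\pi_S=\binom{k}{t}p^{\ell+t}(1-p)^{k-t}$ and $\sum_{S\subseteq M}\pi_S=1$. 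Substituting these, the first constraint becomes $\sum_{t=0}^{\topp}\binom{k}{t}p^{\ell+t}(1-p)^{k-t}\ge\ell/k$, which is exactly the defining property of $\topp$ in Equation~(\ref{eqn:tlandnl}) after dividing by $n$; the second becomes $\sum_{t=0}^{\topp-1}\binom{k}{t}p^{\ell+t}(1-p)^{k-t}\le\ell/k$, which holds because $\topp$ is the \emph{smallest} integer satisfying the first inequality (the sum being empty when $\topp=0$). For the third constraint, plugging in $\pi$ makes its left-hand side equal to $\sum_{t=1}^{\topp}t\binom{k}{t}p^{\ell+t}(1-p)^{k-t}-\topp\bigl(\sum_{t=0}^{\topp}\binom{k}{t}p^{\ell+t}(1-p)^{k-t}-\ell/k\bigr)$, which by Equation~(\ref{eqn:lemmain}) (again after cancelling $n$ and using $\sum_S\pi_S=1$) equals $\tfrac{\ell}{k}\,\utopp$; the hypothesis of Proposition~\ref{prop:u=1pos} is $\utopp\ge\ell$, so this is at least $\ell^2/k$, matching the right-hand side. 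Since $\ell$ and $L$ were arbitrary admissible choices, $\pi$ satisfies every constraint and $\pi\in\mathcal{H}'_{\le 0}$ follows.

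I expect no genuine obstacle: the argument is bookkeeping, and the only points needing care are (i) noting $\mathcal{H}'$ is already a cone so membership in $\mathcal{H}'$ and in $\mathcal{H}'_{\le 0}$ are the same check, (ii) matching the third constraint's left-hand side under $\pi$ to the closed form $\tfrac{\ell}{k}\utopp$ from Equation~(\ref{eqn:lemmain}), and (iii) the degenerate case $\topp=0$. The disjointness $L\cap W=\emptyset$ is what gives the clean factorization in step two; the complementary case $L\cap W\neq\emptyset$ is dealt with separately in the main proof of Proposition~\ref{prop:u=1pos} and is not needed here.
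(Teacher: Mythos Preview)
Your proposal is correct and follows essentially the same approach as the paper: both note that $\mathcal{H}'$ is homogeneous so $\mathcal{H}'_{\le 0}=\mathcal{H}'$, then verify the three families of constraints under $\pi$ by reducing to the defining property of $\topp$ (for the first two) and to $\tfrac{\ell}{k}\utopp\ge\ell^2/k$ via the hypothesis $\utopp\ge\ell$ (for the third). Your write-up is in fact more careful than the paper's, which is quite terse and even contains a sign slip in its phrasing of constraint 3; your concern about $\topp=0$ is moot since $\utopp\ge\ell\ge1$ forces $\topp\ge\ell\ge1$.
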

\begin{proof}
Note that $\mathcal{H}'$ does not contain any constant term. Therefore, $\mathcal{H}'_{\le 0} = \mathcal{H}'$. The proof then resembles that of Claim~\ref{claim:incone_1}.
By definition, $\topp$ is the largest number of winners a voter in the underrepresented $\ell$-cohesive group may approve in expectation where every $y_j$ follows $\pi$. Therefore, constraints 1 and 2 hold. When $\pi$ is assigned in constraint 3, the left-hand side becomes $\ell / k \cdot \utopp$, which will not exceed $\ell^2 / k$, while the right-hand side is exactly $\ell^2 / k$. Therefore, constraint 3 holds.
\end{proof}
\begin{claim}
    For all sufficiently large $n$, $\mathcal{H}'$ has a inner point $\vec X_n$ such that $\|\vec X_n\|_1 = n$, and all entries in $\vec X_n$ are integers. 
\end{claim}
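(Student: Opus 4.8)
The plan is to mirror the construction used in the proof of Proposition~\ref{prop: u=1neg}: first exhibit a \emph{fractional} point $\vec X_n'\in\mathcal{H}'$ with $\|\vec X_n'\|_1=n$ lying strictly in the interior, then round it to an integer point $\vec X_n$ that is still interior. Start from the ``expected histogram'' $\vec X_n^{(0)}$ with $x_S^{(0)}=p^{|S|}(1-p)^{m-|S|}\cdot n$; by the preceding claim (the analogue of Claim~\ref{claim:incone_1}), $\pi\in\mathcal{H}'_{\le 0}=\mathcal{H}'$, so $\vec X_n^{(0)}$ lies on $\mathcal{H}'$, possibly on its boundary. We perturb it by an $O(1)$ number of voters, using a large constant $C=C(m,k)$, to push strictly into the interior while keeping $\|\cdot\|_1=n$.

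Which constraints can be tight at $\vec X_n^{(0)}$? Constraint~2 has slack $\Theta(n)$ by minimality of $\topp$, so it survives any $O(1)$ perturbation and the subsequent rounding. Constraint~1 is tight exactly when $\ntopp=\frac{\ell}{k}n$ (which does occur, e.g.\ at $p=\tfrac1k$ with $\ell=1$ and $\topp=k$), and Constraint~3 is tight exactly when $\utopp=\ell$. To fix Constraint~1 for a pair $(\ell,L)$, move $C$ voters from approving $\emptyset$ to approving $L\cup T$ for a fixed $\topp$-subset $T\subseteq W$: these voters approve $L$ and have utility exactly $\topp$, so the left side of Constraint~1 increases by $C$ while the left side of Constraint~3 is unchanged (their contribution to $\sum_t(\topp-t)x_S$ is $0$). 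To fix Constraint~3 for a pair $(\ell,L)$ with $\utopp=\ell$, additionally move $C$ voters from approving exactly $L$ (utility $0$) to approving $L\cup\{w\}$ for a fixed $w\in W$ (utility $1$); since $\utopp\ge\ell$ forces $\topp\ge\ell\ge1$, both utility values stay $\le\topp$, so Constraint~1 is unaffected while the left side of Constraint~3 strictly increases. Non-negativity $x_S'\ge0$ holds for all large $n$ since every base entry is $\Theta(n)$ and the total perturbation is $O(1)$.

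The main obstacle is bookkeeping the cross-interference between the (constantly many) constraints attached to different pairs $(\ell,L)$: a voter moved into $L\cup T$ also approves every $L'\subsetneq L$, hence also enters the sums for pairs $(\ell',L')$ with $\ell'<\ell$. One checks that the only wrong-direction effect is that such a voter can lower the left side of Constraint~3 for a sub-pair $(\ell',L')$ by $C(t_{\ell'}-\topp)$ when $t_{\ell'}\ge\topp$, whereas every other induced change is harmless (or zero); the Constraint~3 fix for $(\ell',L')$ moves utility-$0$ voters to utility-$1$ and only ever helps or leaves unchanged the constraints of its own sub-pairs. Since there are only $O(2^m)$ pairs, the bad terms are absorbed by choosing the Constraint~3 perturbation constant larger than the Constraint~1 perturbation constant by a fixed factor depending only on $m,k$, so that every constraint ends with slack $\Theta(1)$ (and Constraint~2 with slack $\Theta(n)$). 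Finally, round $\vec X_n'$ to an integer vector $\vec X_n$ with $\|\vec X_n\|_1=n$ and $|x_S-x_S'|\le1$ for all $S$, exactly as in the rounding step of the proof of Proposition~\ref{prop: u=1neg}; each constraint changes by at most $O(2^m)$ under this rounding, which is dominated by the slack built in above, so $\vec X_n$ is an interior point of $\mathcal{H}'$ and hence $\mathcal{H}'^{\mathbb Z}_n\neq\emptyset$. The same $\vec X_n$ is interior to $\mathcal{H}'_{\le0}=\mathcal{H}'$, which also yields the subsequent full-dimensionality claim; the three claims together let us invoke the third case of Corollary~\ref{coro:pmv1}.
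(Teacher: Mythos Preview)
Your proposal is correct, and it follows the same three-stage template as the paper (expected histogram $\to$ $O(1)$ perturbation into the interior $\to$ rounding), but the perturbation itself is constructed differently. The paper uses a \emph{global, symmetric} modification: first add $k$ voters to every nonempty $S\subseteq M$ (this makes Constraint~1 strict for every pair $(\ell,L)$ at once, since each $L$ gains approvers at every utility level $\le\topp$), then for every $S$ with $S\cap W=\emptyset$ shift a block of $k^3\cdot 2^m$ voters up to utility~$1$ (this makes Constraint~3 strict for all pairs simultaneously and, being a net shift from utility~$0$ to utility~$1$ among voters already approving $L$, leaves Constraint~1 intact). Because the modification is uniform over $S$, the paper never has to reason about interactions between constraints attached to different pairs $(\ell,L)$.

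Your construction instead fixes the constraints pair by pair: for each $(\ell,L)$ you inject $C$ voters at the carefully chosen ballot $L\cup T$ with $|T|=\topp$ (repairing Constraint~1 without touching Constraint~3 of the same pair), and separately shift $C$ voters from $L$ to $L\cup\{w\}$ (repairing Constraint~3). The price is the cross-interference analysis you sketch: a voter placed at $L\cup T$ is also visible to every strict sub-pair $(\ell',L')$ with $L'\subsetneq L$, and when $t_{\ell'}\ge\topp$ it lowers the left side of Constraint~3 there. Your observation that this is the only harmful direction is right (Constraint~2 has $\Theta(n)$ slack by minimality of $\topp$, and your utility-$0$-to-$1$ shift only ever helps sub-pairs), and since there are $O(2^m)$ pairs the damage is bounded by a constant times the Constraint~1 budget, so taking the Constraint~3 budget larger by a suitable $m,k$-dependent factor suffices. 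The rounding step and the consequence $\dim(\mathcal{H}'_{\le0})=2^m$ are identical to the paper's. In short: your approach is more modular and makes explicit exactly which perturbation repairs which constraint, while the paper's symmetric construction sidesteps the interference bookkeeping entirely at the cost of a less transparent choice of perturbation.
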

\begin{proof}
Our construction of $\vec X_n$ consists of two steps. We first construct $\vec X_n'$, which is a inner point of $\mathcal{H}'$ and has $\|\vec X_n'\|_1 = n$. Then we round $\vec X_n'$ to get the integer point $\vec X_n$. 

$\vec X_n'$ is constructed by a two-step modification from the expectation. In the beginning, for every $S \subseteq M$,  $x_S'  = p^t \cdot (1 - p) ^{m - t}\cdot n$, where $t = |S|$. 
Then we do the following modification:
\begin{enumerate}
    \item For $S = \emptyset$, $x_S' \leftarrow x_S' - k \cdot (2^m - 1)$; for all other $S$, $x_S' \leftarrow x_S' + k$. For each $S \neq \emptyset$, $k$ voters approving nobody turn to approve $S$. 
    \item For all $S$ with $|S\cap W| = 0$, $x_S' \leftarrow x_S' - k^3\cdot 2^m$; for all $S$ with $|S \cap W| = 1$, $x_S' \leftarrow x_S' + k^2\cdot 2^m$. For each $S$ with $S\cap W = \emptyset$ and each candidate $j \in W$, $k^2\cdot 2^m$ voters approving exactly $S$ additionally approve $j$.
\end{enumerate}

Since both $m$ and $k$ are constants and $\topp \le k$, we can guarantee that $x_S' \ge 0$ for all sufficiently large $n$. 
We do not change the total number of voters, so $\|\vec X_n'\|_1 = n$ holds. 
Now we show that $\vec X_n'$ is an inner point of $\mathcal{H}'$ by showing that for each $L$, all three constraints are satisfied. 
\begin{enumerate}
    \item For constraint 1, modification 1 brings $k$ more voters approving at least all candidates in $L$, compared to the expectation, and modification 2 only transfers voters between different $S \supseteq L$ and does not affect the number of voters approving all candidates in $L$. Therefore, we have
    \begin{align*}
        \sum_{t = 0}^{\topp} \sum_{L \subseteq S, |S \cap W| = t} x_S' \ge \ell /k \cdot \sum_{S\subseteq M} x_S + k\cdot \sum_{t = 0}^{\topp} \sum_{L \subseteq S, |S \cap W| = t} 1 > \ell /k \cdot \sum_{S\subseteq M} x_S.
    \end{align*}
    \item For constraint 2, by the definition of $\topp$, there is $\sum_{t = 0}^{\topp - 1} \binom{k}{t} p^{\ell + t} \cdot (1 - p)^{k - t} < \ell / k$. On the other hand, modification 1 brings at most $k \cdot (2^m - 1)$ more votes approving $L$.  Therefore,
    \begin{align*}
        \sum_{t = 0}^{\topp - 1} \sum_{L \subseteq S, |S| = t + \ell} x_S' =&\  \sum_{t = 0}^{\topp - 1} \binom{k}{t} \cdot p^{\ell + t} \cdot (1 - p)^{k - t}\cdot n + k \cdot (2^m - 1)\\
        =&\ \ell /k \cdot n -\Theta(n) +k \cdot (2^m - 1)\\
        <&\ \ell /k \cdot \sum_{S\subseteq M} x_S.
    \end{align*}
    The last strict inequality holds for all sufficiently large $n$. 
\item For constraint 3, for each $L$, we compare the voters that approve $L$ between the expectation and $\vec X_n'$. Modification 1 brings $\topp$ more voters approving $L$ compared to the expectation. The total number of the increased voters will not exceed $k\cdot (2^m - 1)$. Modification 2 switches $k^3\cdot 2^m$ voters with utility from $0$ to $1$. Therefore, the effect of the modification to the group $V$ of $\ell / k$ voters with the minimum utilities is as follows. After modification 1, at most $k\cdot (2^m - 1)$ voters in $V$ with utility $\topp$ are changed by voters with a lower utility, leading to a decrease of the total utility of at most $k^2 \cdot (2^m-1)$. Then, after modification 2, $k^3\cdot 2^m$ voters with utility $0$ are switched to utility 1, implying an increase to the expected utility of $k^3\cdot 2^m$. Therefore, the new expected utility after the modifications is at least $\ell^2 \cdot n /k + (k - 1) \cdot k^2 \cdot 2^m > \ell^2 /k \cdot \sum_{S\subseteq M} x_S $. 
\end{enumerate}

Now we round the fractional instance $\vec X_n'$ into integer instance $\vec X_n$. Following the spirit of the intermediate value theorem, there exists a rounding such that $\|\vec X_n\|_1 = n$ and for every $S$ $x_S' - 1 \le x_S \le x_S' + 1$. Therefore, for each $\ell$,  $\sum_{t = 0}^{\topp} \sum_{i \in S, |S \cap W| = t} |x_S - x_S'| \le 2^{m - \ell}$, which is a constant in our setting. Then we show that $\vec X_n$ is also an inner point of $\mathcal{H}'$. 
\begin{enumerate}
    \item For constraint 1, the left-hand side difference between $x_S'$ and $x_S$ is at most $\sum_{t = 0}^{\topp} \sum_{i \in S, |S \cap W| = t} 1$, while $x_S'$ exceeds the threshold of $k\cdot \sum_{t = 0}^{\topp} \sum_{L_W \subseteq S, |S \cap W| = t} 1$. 
    \begin{align*}
    \sum_{t = 0}^{\topp} \sum_{L_W \subseteq S, |S \cap W| = t} x_S \!\ge \! &\ 
        \sum_{t = 0}^{\topp} \sum_{L_W \subseteq S, |S \cap W| = t} x_S' - \sum_{t = 0}^{\topp} \sum_{L_W \subseteq S, |S| = t + \ell} |x_S' - x_S|\\ 
        \ge&\ \ell /k \cdot \sum_{S\subseteq M} x_S + (k - 1)\cdot \sum_{t = 0}^{\topp} \sum_{L_W \subseteq S,|S \cap W| = t} 1\\
        >&\  \ell /k \cdot \sum_{S\subseteq M} x_S.
    \end{align*}
    \item For constraint 2, the difference of the left-hand side  between $x_S'$ and $x_S$ is at most $2^{m - \ell}$, while the gap between the left-hand side with respect to $x_S'$ and the right-hand side is $\Theta(n)$. Therefore, for all sufficiently large $n$, constraint 2 is satisfied, and the inequality is strict. 
    \item For constraint 3, 
    \begin{align*}
&\ \sum_{t = 1}^{\topp} \sum_{L_W \subseteq S, |S \cap W| = t} t\cdot x_S - \left(\sum_{t = 0}^{\topp} \sum_{L_W \subseteq S, |S \cap W| = t} x_S - \ell \cdot n/ k \right)\cdot \topp \\
=&\ -\sum_{t = 1}^{\topp} \sum_{L_W \subseteq S, |S \cap W| = t} (\topp - t)\cdot x_S +\topp \cdot \ell \cdot n/k\\
=&\ -\sum_{t = 1}^{\topp} \sum_{L_W \subseteq S, |S \cap W| = t} (\topp - t)\cdot \left(x_S' - (x_S' - x_S)\right) +\topp \cdot \ell \cdot n/k\\
\ge &\ -\sum_{t = 1}^{\topp} \sum_{L_W \subseteq S, |S \cap W| = t} (\topp - t)\cdot x_S+\topp \cdot \ell \cdot n/k - (\topp-1)\cdot 2^{m - \ell}\\
=&\ \sum_{t = 1}^{\topp} \sum_{L_W \subseteq S, |S \cap W| = t} t \!\cdot\! x_S' \!-\! \left(\sum_{t = 0}^{\topp} \sum_{L_W \subseteq S, |S \cap W| = t} x_S' \!-\! \ell \!\cdot\! n/ k\right)\cdot \topp \!-\! (\topp\!-\!1)\!\cdot\! 2^{m - \ell}\\
\ge &\ \ell^2 \cdot n /k + (k - 1) \cdot k^2 \cdot 2^m - (\topp-1) \cdot 2^{m - \ell}\\
>&\  \ell^2 /k \cdot \sum_{S\subseteq M} x_S . 
\end{align*}
The first inequality comes from that $t \ge 1$ so that $\topp - t \le \topp - 1$ and $\sum_{t = 0}^{\topp} \sum_{i \in S, |S| = t + \ell} |x_S - x_S'| \le 2^{m  - \ell}$.  The second inequality comes from the satisfaction of constraint 3 of $x_S'$. 
\end{enumerate}
Therefore, we have shown that $\vec X_n$ is an inner point of $\mathcal{H}'$ for all sufficiently large $n$. 
\end{proof}

\begin{claim}
    $\mathcal{H}'_{\le 0}$ is full-dimensional. $\dim(\mathcal{H}'_{\le 0}) = 2^m$.
\end{claim}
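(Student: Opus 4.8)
The plan is to reduce this claim to the interior point already produced in the preceding claim. First I would record the simple structural observation that every defining inequality of $\mathcal{H}'$ is homogeneous: after moving the terms $\ell/k\cdot\sum_{S\subseteq M}x_S$ and $\ell^2/k\cdot\sum_{S\subseteq M}x_S$ to the left, each of the three constraint families takes the form $\sum_{S}c_Sx_S\ge 0$ (or $\le 0$) with no additive constant. Hence the characteristic cone coincides with the polyhedron itself, $\mathcal{H}'_{\le 0}=\mathcal{H}'$, and it suffices to show $\mathcal{H}'$ is full-dimensional.

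Next I would invoke the point $\vec X_n'$ constructed in the previous claim. While verifying that $\vec X_n'$ is an inner point of $\mathcal{H}'$, we in fact checked that for every admissible $\ell$ and every candidate set $L$ with $L\cap W=\emptyset$, all three constraints hold with \emph{strict} inequality, each slack being $\Theta(n)$ or at least a fixed positive constant. Since there are only finitely many constraints (at most $3k\binom{m}{k}$ of them), there is an $\varepsilon>0$ such that perturbing each coordinate of $\vec X_n'$ by at most $\varepsilon$ changes every defining linear functional by less than its slack; thus the open ball $B(\vec X_n',\varepsilon)\subseteq\mathbb{R}^{2^m}$ is contained in $\mathcal{H}'$. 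Therefore the affine hull of $\mathcal{H}'$ is all of $\mathbb{R}^{2^m}$, so $\dim(\mathcal{H}')=2^m$, and hence $\dim(\mathcal{H}'_{\le 0})=2^m$.

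There is essentially no obstacle here—the substantive work was already done when building $\vec X_n'$—but one point deserves care: unlike $\mathcal{H}'^{\,\mathbb Z}_n$, the polyhedron $\mathcal{H}'$ carries neither a normalization constraint $\sum_{S\subseteq M}x_S=n$ nor sign constraints $x_S\ge 0$, so a point at which all listed inequalities are strict is genuinely interior in $\mathbb{R}^{2^m}$ regardless of its $L_1$-norm. (One could alternatively note that $\vec X_n'$ may be rescaled by any positive factor and still lies in the cone $\mathcal{H}'=\mathcal{H}'_{\le 0}$, but no rescaling is needed for the dimension count.) With all three claims in hand, Corollary~\ref{coro:pmv1} yields $\Pr[\vec X\in\mathcal{H}']=\Theta(1)$, completing the proof of Proposition~\ref{prop:u=1pos}.
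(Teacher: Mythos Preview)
Your proof is correct and follows essentially the same route as the paper: both observe that the defining inequalities of $\mathcal{H}'$ are homogeneous so that $\mathcal{H}'_{\le 0}=\mathcal{H}'$, and then invoke the inner point constructed in the previous claim to conclude full-dimensionality. The paper is terser and cites the integer point $\vec X_n$ rather than the fractional $\vec X_n'$, but since both were shown to satisfy every constraint strictly, the argument is the same.
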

\begin{proof}
Since $\mathcal{H}' = \mathcal{H}'_{\le 0}$, $\vec X_n$ is an inner point of $\mathcal{H}'_{\le 0}$. Therefore, $\mathcal{H}'_{\le 0}$ is full-dimensional.
\end{proof}

With all three claims, we apply Corollary~\ref{coro:pmv1} and show that $\Pr[\vec X_{\pi} \in \mathcal{H}] = \Theta(1)$. 

Consequently, the likelihood that $W$ fails AJR on $\ell$-candidate set $L$ with $p < \sqrt[\ell]{\frac{\ell}{k}}$ is $o(1)$, on $L$ with $L\cap W \neq \emptyset$ is $o(1)$, and on all the other $L$ is $1 - \Theta(1)$. Therefore, the likelihood of $W$ providing AJR is $\Theta(1)$, which completes the proof.

\section{Conclusion and Future Directions}
\label{sect:conclusion}

In this paper, we give a complete characterization of the substantial likelihood of an AJR committee when instances are sampled from the Erd\H{o}s-R\'enyi bipartite model.
Our results not only theoretically verify the previous empirical observations, such as~\citet{brill2022individual,brill2025individual}, but also give more detailed structural insights.

One natural future direction is to see if our results extend to other random models that are more general and/or more practical.
The Erd\H{o}s--R\'enyi bipartite model, albeit mathematically simple and natural, can be inaccurate when describing the real-world election instances, as the events that voters approve candidates can be dependent and with different probabilities.
Given that analyzing the seemingly simplest Erd\H{o}s-R\'enyi model is already quite technically involved, it is expected that studying the likelihood of AJR committees' existence on other more general/realistic models is even more challenging.
Nevertheless, our paper moves a first step towards this goal, and our observations and techniques can be potentially useful for future studies along this direction.
Additionally, exploring the existence of winning committees satisfying other ideal properties, such as core stability, might provide valuable insights into tackling the challenge in the classical multi-winner approval voting, since it is still an open problem whether the core stability can always be satisfied in multi-winner approval voting. 

In addition, this work can be extended along the direction of proportionality degree.
Instead of focusing on the particular choice $f(\ell)=\ell$ under the AJR definition, we can study the distribution of the proportionality degree, and this is interesting even under the simplest Erd\H{o}s-R\'enyi model.
For example, how does the average satisfaction for an $\ell$-cohesive group increase above $f(\ell)=\ell$ as $p$ increases in $(p_2^\ast,1]$ before uniformly equals $k$ at $p=1$?
Also, given that PAV guarantees $f(\ell)=\ell-1$, is this tight under the Erd\H{o}s-R\'enyi model?
In particular, between $p_1^\ast$ and $p_2^\ast$ where we have shown $f(\ell)=\ell$ is unlikely to hold, are there some values of $p$ where the minimum average satisfaction is concentrated around $\ell-\frac12$?
For one step towards the last question, our Proposition~\ref{prop:ell>=2} and Lemma~\ref{lem:main} imply that a cohesive witness to the violation of $f(\ell)=\ell$ can only be a $1$-cohesive group (for all values of $p$).
Thus, we only need to analyze $1$-cohesive groups for this question with $\ell=1$.

We have also mentioned after Lemma~\ref{lem:main} and in Proposition~\ref{prop:ell>=2} that $1$-cohesive groups are the only barriers that prevent the existence of AJR committees.
It is interesting to see how broadly this observation can be applied.
Is it a feature that is unique for Erd\H{o}s-R\'enyi model?

Finally, our results can be further refined in the following aspects.
We can consider the extension to the general setting where $m$ and $k$ are not necessarily constants.
Our results can also be refined by studying the ``rates'' in the probabilities at the two transition points.
For example, how fast does the probability change when $p$ is approaching each of the two transition points?

\section*{Acknowledgments}
The research of Biaoshuai Tao is supported by the National Natural Science Foundation of China (No. 62472271).
The research of Lirong Xia is supported by NSF 2450124, 2517733, and 2518373.

The authors gratefully thank the anonymous reviewers from SODA'26, FOCS'25, EC'25, and WINE'24 for their valuable suggestions.

\bibliographystyle{plainnat}
\bibliography{ref}

\newpage
\appendix
\section{Proofs for the First Two Parts of Lemma~\ref{lem:main}}
\label{append:lemproof}
\subsection{Proof for the First Part: the Case $\utopp>\ell$}
\label{sect:prop1}
We first deal with the easy cases where $p < \frac1k$ and where $p=1$.
For the former, we formalize the analysis in the paragraph {\em Identifying the threshold for the existence of cohesive groups} before Lemma~\ref{lem:main} into the following lemma.

\begin{lemma}
\label{lemma: p < 1/k}
    For every $k$, $\ell \le k$, $p < \sqrt[\ell]{\frac{\ell}{k}}$, and $n$, the probability that there exists an $\ell$-size candidate set $L$ and a group of voter $V$ such that $V$ is an $\ell$-cohesive group towards $L$ is  $o(1)$. 
\end{lemma}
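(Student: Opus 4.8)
The plan is to reduce the event in question to a simple counting statement about one candidate set, apply a Hoeffding bound, and then union-bound over the constantly many candidate sets.

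First I would fix an $\ell$-element set $L \subseteq M$ and let $N_L$ denote the number of voters $v$ with $L \subseteq A(v)$. Since each voter approves every candidate independently with probability $p$, the events $\{L \subseteq A(v)\}$ are independent across the $n$ voters and each has probability $p^\ell$; hence $N_L$ is a sum of $n$ i.i.d.\ Bernoulli$(p^\ell)$ random variables with mean $\E[N_L] = n p^\ell$. The key observation is that an $\ell$-cohesive group $V$ towards $L$ exists if and only if $N_L \ge \lceil \ell \cdot \frac{n}{k} \rceil$: if at least this many voters approve all of $L$, any $\lceil \ell n/k \rceil$ of them form such a group, and conversely every $\ell$-cohesive group towards $L$ consists entirely of voters who approve all of $L$. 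So the event of the lemma is exactly $\bigcup_{L}\{N_L \ge \lceil \ell n/k\rceil\}$, the union taken over all $\binom m\ell$ size-$\ell$ subsets of $M$.

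Next, since $p < \sqrt[\ell]{\ell/k}$, the quantity $\delta := \frac{\ell}{k} - p^\ell$ is a strictly positive constant depending only on $k,\ell,p$, and therefore $\lceil \ell n/k\rceil \ge \ell n/k = n p^\ell + \delta n = \E[N_L] + \delta n$. Applying Hoeffding's inequality to the bounded independent summands of $N_L$ gives
\[
\Pr\!\left[N_L \ge \lceil \ell n/k\rceil\right] \;\le\; \Pr\!\left[N_L - \E[N_L] \ge \delta n\right] \;\le\; \exp(-2\delta^2 n).
\]

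Finally I would take a union bound over the $\binom m\ell$ choices of $L$. Because $m$ is a constant, so is $\binom m\ell$, and hence the probability that some $\ell$-cohesive group towards some $L$ exists is at most $\binom m\ell \exp(-2\delta^2 n) = \exp(-\Theta(n)) = o(1)$, which is exactly the claim. There is no genuine obstacle in this proof; the only points that merit a line of justification are the equivalence between ``a cohesive group exists'' and the counting inequality $N_L \ge \lceil \ell n/k\rceil$, and the fact that constancy of $m$ renders the union-bound factor harmless.
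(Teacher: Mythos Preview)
Your proposal is correct and follows essentially the same approach as the paper: fix an $\ell$-set $L$, express the existence of an $\ell$-cohesive group towards $L$ as the tail event $N_L \ge \ell n/k$, apply Hoeffding's inequality with gap $\delta = \ell/k - p^\ell > 0$, and union-bound over the constantly many $\binom{m}{\ell}$ choices of $L$. If anything, your write-up is slightly more careful in handling the ceiling and in stating the equivalence between cohesive-group existence and the counting inequality.
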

\begin{proof}
    For a fixed $\ell$ and a fixed group $L$ of $\ell$ candidates, the expected number of agents approving all candidates in $L$ is $np^{\ell}$, while a cohesive group towards $L$ exists if and only if at least $\ell \cdot \frac{n}{k}$ approve all candidates in $L$.  When $p < \sqrt[\ell]{\frac{\ell}{k}}$, we have $np^\ell< \ell\cdot \frac{n}{k}$. Therefore, we could apply the Hoeffding inequality as follows. Let $Z_i$ be the random variable where $Z_i = 1$ if and only if agent $i$ approves all candidates in $L$. 

\begin{align*}
    \Pr[\text{A cohesive group for $L$ exists}] = &\ \Pr\left[\sum_{i = 1}^{n} Z_i \ge \ell \cdot \frac{n}{k}\right]\\
    = &\  \Pr\left[\sum_{i = 1}^{n} Z_i - \mathbb{E}\left[\sum_{i = 1}^{n} Z_i\right] \ge \ell \cdot \frac{n}{k} - np^{\ell}\right]\\
    \le &\exp\left(-2\left(\frac{\ell}{k} - p^{\ell}\right)^2 
    n\right)\\
    = & \exp(-\Theta(n)).
\end{align*}
Then, by applying a union bound on all $\binom{m}{\ell}$ of $\ell$-size candidate sets (where $m$ is a constant), the probability that an $\ell$-cohesive group exists is at most $\binom{m}{k} \cdot \exp(-\Theta(n)) = o(1)$. 
\end{proof}

Then notice that $\sqrt[\ell]{\frac{\ell}{k}}$ is minimized at $k = 1$. Therefore, when $p < \frac{1}{k}$, $p < \sqrt[\ell]{\frac{\ell}{k}}$ holds for any $\ell = 1, 2, \cdots, k$.
In this case, probability $1-o(1)$, no cohesive group exists, in which case any winning committee automatically satisfies AJR.

For the latter case where $p=1$, any winning committee gives each voter utility $k$, in which case AJR is also satisfied.

To conclude the first part of the lemma, we will prove the following proposition.
\begin{proposition}
    \label{prop: u>1} 
    For any constant $m, k, p$, if $\utopp > \ell$ for every $\ell \le \min\{k,m-k\}$ such that $p^\ell \ge \frac{\ell}{k}$, then the likelihood that an AJR committee exists is $1 - o(1)$. 
\end{proposition}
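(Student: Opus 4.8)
The plan is to fix one committee, say $W=\{1,\dots,k\}$, and show that $\Pr[W\text{ fails to provide AJR}]=\exp(-\Theta(n))=o(1)$; since the existence of \emph{some} AJR committee is implied by $W$ itself being AJR, this suffices (and by the symmetry of the Erd\H{o}s--R\'enyi model it is in fact the same probability for every $W$). The degenerate cases $p<1/k$ and $p=1$ are already handled above, so I may assume $1/k\le p<1$. The first step is to decompose the failure event over pairs $(\ell,L)$ with $\ell\in[k]$ and $L\subseteq M$, $|L|=\ell$: the committee $W$ fails AJR exactly when for some such pair there is an underrepresented $\ell$-cohesive group towards $L$. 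There are only constantly many pairs, so by a union bound it is enough to bound, for each fixed $(\ell,L)$, the probability of an underrepresented $\ell$-cohesive group towards $L$ by $\exp(-\Theta(n))$. The key reduction is that among all $\ell$-cohesive groups towards $L$ the least represented one is obtained greedily: take the $\lceil\ell\cdot n/k\rceil$ voters who approve all of $L$ and have the smallest values of $|A(v)\cap W|$; a larger size never helps, and passing from any underrepresented group to its lowest-utility sub-group of size exactly $\lceil\ell n/k\rceil$ only lowers the average. Call this group $V^\ast$; then $(\ell,L)$ witnesses a failure if and only if $U(V^\ast,W)<\ell\cdot\lceil\ell n/k\rceil$, and this in particular requires at least $\lceil\ell n/k\rceil$ voters approving all of $L$.

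Next I would analyze $U(V^\ast,W)$ by concentration. If $p^\ell<\ell/k$, a Hoeffding bound exactly as in Lemma~\ref{lemma: p < 1/k} shows that with probability $1-\exp(-\Theta(n))$ fewer than $\ell n/k$ voters approve all of $L$, so no $\ell$-cohesive group towards $L$ exists and the pair contributes nothing. If $p^\ell\ge\ell/k$ and, to start, $L\cap W=\varnothing$: for $t=0,\dots,k$ let $c_t$ be the number of voters who approve all of $L$ and exactly $t$ members of $W$; each $c_t$ is a binomial with mean $n\binom{k}{t}p^{\ell+t}(1-p)^{k-t}$, so by Hoeffding together with a union bound over the $k+1$ indices, for any fixed $\varepsilon>0$ all the $c_t$ lie within $\varepsilon n$ of their means with probability $1-\exp(-\Theta(n))$. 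On that event the greedy computation defining $u_\ell$ (the chunk picture of Fig.~\ref{fig:jun6_1}) is faithful up to $O(k\varepsilon n)$ additive error: rebuilding $V^\ast$ from the perturbed chunk sizes changes its total utility by at most $O(k\varepsilon n)$, and the ceiling contributes only an $O(1)$ term, so $U(V^\ast,W)\ge u_\ell\cdot\frac{\ell n}{k}-O(k\varepsilon n)-O(1)$. Since the hypothesis gives a fixed gap $\delta:=u_\ell-\ell>0$, choosing $\varepsilon$ small (depending only on $m,k,\delta$) forces $U(V^\ast,W)\ge(\ell+\delta/2)\frac{\ell n}{k}>\ell\lceil\ell n/k\rceil$ for all large $n$; hence $V^\ast$ is not underrepresented, and the failure probability for this $(\ell,L)$ is at most that of the bad concentration event, namely $\exp(-\Theta(n))$.

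It remains to handle $L$ with $h:=|L\cap W|>0$, which also covers every $\ell>m-k$ (there $L$ is forced to intersect $W$). Here a voter approving $L$ already collects utility $\ge h$ from $L\cap W$, and the rest of the utility is distributed over $W\setminus L$ exactly as in the disjoint case with parameter $\ell-h$; consequently the relevant expected minimum average utility of such groups strictly exceeds $h$ plus the corresponding quantity for $\ell-h$, and since $p^{\ell-h}\ge p^\ell\ge\ell/k\ge(\ell-h)/k$ and $\ell-h\le\min\{k,m-k\}$, the hypothesis $u_{\ell-h}>\ell-h$ (or the trivial bound when $\ell-h=0$) makes that quantity strictly larger than $\ell$. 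One then repeats verbatim the Hoeffding-plus-union-bound argument of the previous paragraph to get failure probability $\exp(-\Theta(n))$ for these pairs as well. Taking a union bound over the constantly many $(\ell,L)$ yields $\Pr[W\text{ fails AJR}]=\exp(-\Theta(n))=o(1)$, proving the proposition.

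I expect the main obstacle to be the Lipschitz/monotonicity step in the second paragraph: rigorously arguing that $U(V^\ast,W)$, which is a minimum over \emph{actual} sub-groups, stays within $O(\varepsilon n)$ of the deterministic value $u_\ell\cdot\ell n/k$ assembled from expected chunk sizes (one has to check that a $\pm\varepsilon n$ perturbation of each chunk size moves the greedily constructed minimum total utility by only $O(k\varepsilon n)$, and that the stopping index can shift by at most one chunk). The overlapping-$L$ case requires the same care plus a clean auxiliary lemma decomposing the utility into the $L\cap W$ part and the $W\setminus L$ part; beyond these, everything is routine Hoeffding bookkeeping.
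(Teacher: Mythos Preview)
Your plan is essentially the paper's own proof: fix $W=\{1,\dots,k\}$, union-bound over the constantly many pairs $(\ell,L)$, dispose of the case $p^\ell<\ell/k$ via Hoeffding (no $\ell$-cohesive group towards $L$), handle disjoint $L$ by concentrating the chunk sizes $|V_t|$ around their means and comparing to $u_\ell>\ell$, and reduce overlapping $L$ to the disjoint case via the inequality $u_{\ell,h}^\ast>u_{\ell-h}+h$ (the paper's Lemma~\ref{lem:overlap}). The one point to tighten is your justification for the overlap lemma: it is \emph{not} literally true that ``the rest of the utility is distributed over $W\setminus L$ exactly as in the disjoint case with parameter $\ell-h$'' --- in the overlap case the residual utility is $\mathrm{Binom}(k-h,p)$ while in the disjoint $(\ell-h)$ case the utility is $\mathrm{Binom}(k,p)$, and the group sizes ($\ell n/k$ versus $(\ell-h)n/k$) and pool sizes ($np^\ell$ versus $np^{\ell-h}$) also differ, so the strict inequality $u_{\ell,h}^\ast>u_{\ell-h}+h$ needs an actual argument (the paper does it by interposing an auxiliary quantity $u_\ell'$); but you already flag that an auxiliary lemma is required here, and the rest of your outline matches the paper step for step.
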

\begin{proof}
    We fix the winning committee $W = \{1, 2, \ldots, k\}$, and we will prove that $W$ satisfies AJR with high probability.
    The event that $W$ fails AJR can be divided into three scenarios: (1) when $p < \sqrt[\ell]{\frac{\ell}{k}}$ (in which case $p^\ell<\frac\ell k$ and we are not guaranteed $u_\ell>\ell$), an $\ell$-cohesive group is underrepresented, (2) when $p \ge \sqrt[\ell]{\frac{\ell}{k}}$, an $\ell$-cohesive group towards $L$ with $L\cap W = \emptyset$ is underrepresented, and (3) when $p \ge \sqrt[\ell]{\frac{\ell}{k}}$, an $\ell$-cohesive group towards $L$ with $L\cap W \neq \emptyset$ is underrepresented. We show that each scenario has a likelihood $o(1)$
    to occur. Therefore, according to the union bound, the probability that $W$ fails to provide AJR is $o(1)$.

    \textbf{Case 1.} 
    Lemma~\ref{lemma: p < 1/k} directly implies that the probability that such an $\ell$-cohesive group exists when $p < \sqrt[\ell]{\frac{\ell}{k}}$ is $o(1)$. 

    \textbf{Case 2.} For each $t = 0, 1,\ldots, k$, let $V_t$ be the group of voters who approve every candidate in $L$ as well as exactly $t$ candidates in $W$. Since $L\cap W = \emptyset$, the expectation of $|V_t|$ is $\binom{k}{t}\cdot p^{\ell + t}\cdot (1 - p)^{k - t}\cdot n$.  
    For each $t$ and any constant $\varepsilon > 0$, by applying the Hoeffding's inequality, the probability that $||V_t| - \mathbb{E}[|V_t|]| \ge \varepsilon n$ is at most $\Theta(\exp(-n))$. That is, the size of $V_t$ concentrates to its expectation with high probability. Applying the union bound on all $t = 0,1, \ldots, k$, the probability that there exists a $t$ such that $|V_t|$ deviates from its expectation by at least $\varepsilon n$ is also at most $\Theta(\exp(-n))$. 

    Conditioned on that all $|V_t|$ concentrates to their expectations respectively, we can lower bound the average utility of a group $V$ with high probability. A group $V$ with the minimum average utility can be generated by greedily fetching all voters in $V_0$, then $V_1$, $V_2$, $\dots$ until it has at least $\ell\cdot \frac{n}{k}$ voters. For a sufficiently small constant $\varepsilon$ (which does not depend on $n$), since each $|V_t|$ does not deviate from expectation for more than $\varepsilon n$, the average utility of $V$ will be at least $u_\ell - \frac{k}{\ell\cdot n}\cdot k\cdot \varepsilon n\cdot k = u_\ell - \frac{k^3}{\ell}\cdot \varepsilon$ (at most $k$ groups of $V_t$, each group has at most $\varepsilon\cdot n$ more voters than expectation, substituting voters with an utility of at most $k$). Note that in case 2, we have $u_\ell > \ell$. This is because (1)    $p^{\ell} \ge \frac\ell k$, (2) $\ell \le k$ by definition, and (3) $L\cap W = \emptyset$ implies that $\ell \le m - k$, which together guarantees that $u_\ell > \ell$ by the statement of Proposition~\ref{prop: u>1}.
    Therefore, by taking an $\varepsilon < (u_{\ell} - \ell) \cdot \frac{\ell}{k^3}$, the probability that there exists an $\ell$-cohesive group $V$ towards $L$ such that the average utility of $V$ is strictly less than $\ell$ is bounded by the likelihood where some $|V_t|$ deviates from the expectation more than $\varepsilon\cdot n$, leading to $\Theta(\exp(-n))$. 

    \textbf{Case 3.} 
    For $\ell$-cohesive groups towards $L$ with $L\cap W \neq \emptyset$, the following lemma guarantees that the average utilities of those groups are strictly larger than $\ell$ with high probability. 

    \begin{lemma}
    \label{lem:overlap} 
    For any constant $m, k, p$, for every $1\le h <  \ell \le k$ such that $p^\ell \ge \frac{\ell}{k}$,
    we have $u_{\ell,h}^\ast > u_{\ell-h} + h$ 
    where $u_{\ell,h}^\ast$ is the minimum expected average utility among all $\ell$-cohesive groups towards $L$ such that $|L\cap W| = h$.
\end{lemma}
    \begin{proof}



Consider the $\ell$-cohesive group $V_1$ with $\ell \cdot n / k$ voters. The expected number of voters that approve $L$ (with $|L\cap W| = h$) as well as exactly $t$ winners in $W$ is $\binom{k-h}{t - h}\cdot p^{\ell + t - h}\cdot (1 - p)^{k - t}\cdot n$.
Let $\topp^\ast$ be the smallest integer such that
\begin{equation*}
\ntopp^\ast := \sum_{\topp^\ast = 0}^{\topp} \binom{k-h}{t - h}\cdot p^{\ell + t - h}\cdot (1 - p)^{k - t}\cdot n \ge \ell \cdot \frac{n}{k}.
\end{equation*}
Then we have
\begin{align*}
    u_{\ell,h}^\ast & = \frac{k}{\ell\cdot n}\cdot \left( \sum_{t = h}^{\topp^\ast} t \cdot \binom{k - h}{t - h}\cdot p^{\ell + t - h}\cdot (1 - p)^{k - t}\cdot n - \topp^\ast\cdot (\ntopp^\ast - \ell\cdot n/k)\right).
\end{align*}
On the other hand, $u_{\ell - h}$ is the minimum expected average utility among all $(\ell - h)$-cohesive group supporting some $L'$ such that $L' \cap W = \emptyset$, 
\begin{align*}
    u_{\ell-h} \!=\! \frac{k}{(\ell - h)\cdot n}  \!\cdot\! \left( \sum_{t = 0}^{t_{\ell-h}} t \!\cdot\! \binom{k}{t} \!\cdot\! p^{\ell + t - h}\cdot (1 - p)^{k - t}\cdot n - t_{\ell-h}\cdot \left(n_{\ell-h} - (\ell - h)\cdot n/k\right)\right). 
\end{align*}
Next, we construct $u_\ell^\prime$ satisfying $u_{\ell-h}^\ast + h < \utopp^\prime < u_\ell^\ast $. Let $\topp^\prime$ be the smallest integer such that 
\begin{equation*}
    \ntopp^\prime = \sum_{t = h}^{\topp^\prime} \binom{k}{t - h}\cdot p^{\ell + t - 2h}\cdot (1 - p)^{k - t - h}\cdot n \ge \frac{\ell\cdot n}{k}.
\end{equation*}
First, we show that 
\begin{align*}
    \utopp^\prime \!=\! \frac{k}{\ell\cdot n}\cdot \left( \sum_{t = h}^{\topp^\prime} t \cdot \binom{k}{t - h}\cdot p^{\ell + t - 2h}\cdot (1 - p)^{k - t - h}\cdot n - \topp^\prime\cdot (\ntopp^\prime - \ell\cdot n/k)\right) < u_{\ell,h}. 
\end{align*}
To verify this, $\utopp^\prime$ can be viewed as the minimum expected average utility of an $\ell$-cohesive group $V_3$ such that the expected number of voters that approve $t$ winners is $\binom{k}{t - h}\cdot p^{\ell + t - 2h}\cdot (1 - p)^{k - t - h}\cdot n$. For each $h\le t\le t_\ell^\prime$, the number of voters that approve  $t$ winners in $V_3$ is more than that in $V_1$. In addition, both $u_{\ell,h}$ and $\utopp^\prime$ consider $\ell\cdot\frac{n}{k}$ voters. Hence, we have $\utopp^\prime < u_{\ell,h}$.

Then, we show that $\utopp^\prime > u_{\ell-h}^\ast + h$. Note that $\utopp^\prime$ can be reformulated as
\begin{align*}
    \utopp^\prime =& \frac{k}{\ell\cdot n}\cdot \left( \sum_{t = 0}^{\topp^\prime - h} (t + h) \cdot \binom{k}{t}\cdot p^{\ell + t - h}\cdot (1 - p)^{k - t}\cdot n - \topp^\prime\cdot (\ntopp^\prime - \ell\cdot n/k)\right)\\
    =& \frac{k}{\ell\cdot n}\cdot \left( \sum_{t = 0}^{\topp^\prime - h} t \cdot \binom{k}{t}\cdot p^{\ell + t - h}\cdot (1 - p)^{k - t}\cdot n - (\topp^\prime - h)\cdot (\ntopp^\prime - \ell\cdot n/k)\right) \!+\! h,
\end{align*}
where the first term can be seen as the average utility of $\ell \frac{n}{k}$ voters with lower utilities in which there are $\binom{k}{t}\cdot p^{\ell + t - h}\cdot (1 - p)^{k - t}\cdot n$ voters approving $t$ winners, which is same as $V_2$. However, $u_{\ell-h}$ only considers $(\ell-h)\cdot \frac{n}{k}$ voters with lower utilities, which implies that $u_{\ell,h}^\ast > u_{\ell-h} + h$. 
\end{proof}

    Consider a set $L$ of $\ell$ candidates with $|L\cap W| = h$. If $h = \ell$, every voter approving $L$ will have the utility of at least $\ell$, and then AJR is satisfied. When $h < \ell$, we show that the likelihood that AJR fails on an $\ell$-cohesive group approving $L$ will be at most the likelihood of AJR fails on an $(\ell-h)$-cohesive group approving $L\setminus W$, which implies $o(1)$. From $|L\cap W| = h$, since $p \ge \sqrt[\ell]{\frac{\ell}{k}} > \sqrt[\ell-h]{\frac{\ell-h}{k}}$, we have that the minimum expected average utility $u_{\ell-h}$ among all $(\ell-h)$-cohesive groups towards $L\setminus W$ is strictly larger than $\ell - h$ (by the condition in Proposition~\ref{prop: u>1} and noticing $\ell-h\leq m-k$). Then by Lemma~\ref{lem:overlap}, the minimal expected average utility of an $\ell$-cohesive group towards $L$ is strictly larger than $\ell$. Next, following a similar analysis with Case 2, we have that the likelihood that AJR fails on an $\ell$-cohesive group towards $L$ is at most $o(1)$.

    Now we apply the union bound on all the cases. Since $m, k, \ell, p$ are either constants or bounded by constants, the likelihood that AJR fails is $o(1)$. 
\end{proof}

\subsection{Proof for the Second Part: the Case $\utopp<\ell$}
\label{sect:prop2}
For the second part of the lemma, we will prove the following proposition.
\begin{proposition}
    \label{prop: u < 1} 
    For any constant $m, k, p$, if there exists $\ell$ with $\ell\leq\min\{k,m-k\}$ such that $p^\ell \ge \frac{\ell}{k}$ and $\utopp < \ell$, then the likelihood that an AJR committee exists is $o(1)$. 
\end{proposition}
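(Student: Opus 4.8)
The plan is to prove the stronger statement that, with probability $1-\Theta(\exp(-n))$, \emph{every} committee fails AJR, each witnessed by an explicit underrepresented $\ell$-cohesive group. Since there are only $\binom{m}{k}$ committees — a constant — this reduces to analyzing a single fixed committee and then taking a union bound. Fix $\ell\le\min\{k,m-k\}$ with $p^\ell\ge\frac\ell k$ and $\utopp<\ell$, and set $\delta:=\tfrac12(\ell-\utopp)>0$. Note that $0<p<1$, since $p=0$ violates $p^\ell\ge\frac\ell k$ and $p=1$ gives $\utopp=k\ge\ell$. Moreover we may assume $p^\ell>\frac\ell k$ \emph{strictly}: if $p^\ell=\frac\ell k$ with $0<p<1$, then all of $\mu_0,\dots,\mu_k$ (defined below) are positive, which forces $\topp=k$, and a direct computation then gives $\utopp=k\bigl(\tfrac\ell k\bigr)^{1/\ell}\ge\ell$, contradicting $\utopp<\ell$; alternatively one may invoke Proposition~\ref{prop:ell>=2} together with the last part of Proposition~\ref{prop:ell=1}, which show that the boundary $p^\ell=\frac\ell k$ never yields $\utopp<\ell$.

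Now fix a committee $W$ and choose any $\ell$-subset $L=L_W\subseteq M\setminus W$ (possible since $\ell\le m-k$). For $t=0,1,\dots,k$, let $V_t$ be the set of voters who approve all of $L$ and exactly $t$ candidates of $W$; because $L\cap W=\emptyset$, $|V_t|$ is a sum of $n$ i.i.d.\ Bernoullis with mean $\mu_t:=\binom{k}{t}p^{\ell+t}(1-p)^{k-t}$, and $\sum_{t=0}^{k}\mu_t=p^\ell$. By Hoeffding's inequality and a union bound over the constantly many pairs $(W,t)$, there is a constant $\varepsilon>0$ — fixed below — such that the event $E$ that $\bigl||V_t|-\mu_t n\bigr|<\varepsilon n$ holds for every committee $W$ and every $t$ simultaneously has probability $1-\Theta(\exp(-n))$.

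On $E$, for each $W$ I would build the witness group greedily: take all of $V_0$, then all of $V_1$, and so on, stopping once exactly $\lceil\ell n/k\rceil$ voters have been collected (retaining only part of the final block if needed). Since $p^\ell>\frac\ell k$, for $\varepsilon$ small and $n$ large the voters approving $L$ number more than $\lceil\ell n/k\rceil$, so the construction succeeds and yields an $\ell$-cohesive group $V$ towards $L$. Comparing to the scenario in which every block is exactly at its mean — where, by the definition of $\utopp$, the $\ell n/k$ lowest-utility voters approving $L$ have average utility exactly $\utopp$ — the slack of at most $\varepsilon n$ per block changes the total utility of the selected voters by at most $(k+1)k\varepsilon n$, and passing from $\ell n/k$ to $\lceil\ell n/k\rceil$ voters adds at most one extra voter, of utility at most $k$. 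Hence the average utility of $V$ is at most $\utopp+\frac{k}{\ell n}\bigl((k+1)k\varepsilon n+k\bigr)$, which is below $\utopp+\delta<\ell$ once $\varepsilon$ satisfies $\tfrac{(k+1)k^2}{\ell}\varepsilon<\delta$ and $n$ is large. Thus $V$ is underrepresented and $W$ fails AJR; as this holds for every $W$ on $E$, the probability that an AJR committee exists is at most $1-\Pr[E]=\Theta(\exp(-n))=o(1)$.

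The routine-but-delicate step is the bookkeeping of the third paragraph: bounding the average utility of the greedily assembled group in terms of $\utopp$ while absorbing the $\varepsilon n$ deviations in the $|V_t|$'s and the ceiling in the group size. This is precisely the mirror image — with the inequality reversed — of Case~2 in the proof of Proposition~\ref{prop: u>1}, so I would reuse that computation. The one genuinely new point is the observation that the hypotheses force $p^\ell>\frac\ell k$ strictly, which is exactly what guarantees that an $\ell$-cohesive group towards $L$ exists with overwhelming (rather than merely constant) probability; the borderline $p^\ell=\frac\ell k$ is precisely the phase-transition regime, which is handled separately by the polyhedron approach.
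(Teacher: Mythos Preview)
Your proposal is correct and follows essentially the same approach as the paper's proof: first rule out the boundary $p^\ell=\frac{\ell}{k}$ by showing $\utopp\ge\ell$ there (your direct computation $\utopp=k(\ell/k)^{1/\ell}\ge\ell$ is exactly the content of the paper's Lemma~\ref{lem:lowerp}), then for each $W$ pick $L_W\subseteq M\setminus W$, apply Hoeffding to the block sizes $|V_t|$, greedily assemble the witnessing $\ell$-cohesive group, and finish by a union bound over the constantly many committees. The only cosmetic differences are that you track the ceiling in $\lceil\ell n/k\rceil$ explicitly and have $(k+1)$ rather than $k$ blocks in your slack constant, neither of which affects the argument.
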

\begin{proof}
    When $p = \sqrt[\ell]{\frac{\ell}{k}}$ for some $\ell$, $\utopp$ is guaranteed to be at least $\ell$, which is proved in the lemma below.
    
    \begin{lemma}
    \label{lem:lowerp} For any $k, \ell$, when $p = \sqrt[\ell]{\frac{\ell}{k}}$, we have $u_\ell \ge \ell$. 
\end{lemma}
\begin{proof}
    When $\ell = k$, we have $p=1$, and the average utility for any subset of voters is exactly $k=\ell$, implying that $u_\ell = \ell$. 
    We assume $\ell<k$ from now on.
    Since $\sum_{t = 0}^{\topp} \binom{k}{t}\cdot p^{t}\cdot (1 - p)^{k - t} \leq 1$ and the equality is met if and only if $\topp=k$, we have that $\ntopp = \sum_{t = 0}^{\topp} \binom{k}{t}\cdot p^{\ell + t}\cdot (1 - p)^{k - t}\cdot n \ge p^\ell \cdot n =  \ell \cdot \frac{n}{k}$ if and only if $\topp=k$. By our definitions of $\topp$ and $n_\ell$ in Equation~(\ref{eqn:tlandnl}), we have $\topp=k$ and $n_\ell = \ell\cdot\frac{n}{k}$, and 
    \begin{align*}
        \utopp &= \frac{k}{\ell \cdot n}\cdot \left( \sum_{t = 0}^{k} t \cdot \binom{k}{t}\cdot p^{\ell + t}\cdot (1 - p)^{k - t}\cdot n\right)\\
        &= \frac{k}{\ell}\cdot p^\ell\left( \sum_{t = 1}^{k} t \cdot \binom{k}{t}\cdot p^{t}\cdot (1 - p)^{k - t}\right)\tag{the first term is $0$}\\
        &=\frac{k}\ell\cdot p^\ell\left(\sum_{t=1}^{k}k\cdot\binom{k-1}{t-1}p^t(1-p)^{k-t}\right)\tag{rewrite the combinatoral number}\\
        &=\frac{k^2}\ell\cdot p^{\ell+1}\left(\sum_{t=0}^{k-1}\binom{k-1}{t}p^t(1-p)^{k-1-t}\right)\\
        &=\sqrt[\ell]{\ell k^{\ell-1}}\tag{since $p=\sqrt[\ell]{\frac\ell k}$ and the summation is $(p+(1-p))^{k-1}=1$}\\
        &\geq \ell, \tag{since $k>\ell$ and $\ell\geq1$}
    \end{align*}
    
    
which concludes the lemma.
\end{proof}
    
    To conclude Proposition~\ref{prop: u < 1}, we only need to consider the case where $p^\ell > \frac{\ell}{k}$.
    For each winning committee $W$, we fix an $\ell$-candidate set $L_W$ such that $L_W\cap W = \emptyset$.
    Then we show that $W$ is likely to fail AJR on an $\ell$-cohesive group towards $L_W$. 
    Finally, we apply the union bound on all $W$ and show that the likelihood that every $W$ fails AJR on $L_W$ is also $1 - o(1)$.

    This proof is similar to Case 2 in the proof of Proposition~\ref{prop: u>1}. For each $t = 0, 1,\ldots, k$, let $V_t$ be the group of voters who approve all candidates in $L$ as well as exactly $t$ candidates in $W$. For some sufficiently small $\varepsilon$, the probability where there exists a $t \in [\topp]$ such that $|V_t|$ deviates from its expectation by at least $\varepsilon \cdot n$ is at most $\Theta(\exp(-n))$.  
    
    Conditioned on that no $|V_t|$ deviates much, we consider the average utility of the group by picking voters from $V_0, V_1, \dots$ until it has at least $\ell\cdot \frac{n}{k}$ voters. For sufficiently small $\varepsilon$, since each $|V_t|$ does not deviate by more than $\varepsilon n$, the average utility of $V$ will be at most $u_\ell+\frac{k}{\ell\cdot n}\cdot k\cdot \varepsilon n\cdot k = u_\ell + \frac{k^3}{\ell}\cdot \varepsilon$ (at most $k$ groups of $V_t$, each group has at most $\varepsilon\cdot n$ less voters that are substituted by voters with a utility of at most $k$).
    Therefore, for sufficiently small $\varepsilon$, the probability that there exists an $\ell$-cohesive group $V$ towards $L_W$ with an average utility strictly less than $\ell$ is at least $1 - o(1)$. 
    Thus, the probability that an arbitrary given winning committee $W$ satisfies AJR is $o(1)$.
    By a union bound on all $\binom{m}{k}$ possible winning committees (and notice that $\binom{m}{k}$ is a constant), the probability that there exists an AJR committee is $o(1)$.
\end{proof}

\section{Missing Proofs in Sect.~\ref{sect:proofofmaintheorem}}
\label{append:proofofmaintheorem}
\subsection{Proof of Proposition~\ref{prop:U}}
\label{append:propU}
    Note that $u_\ell=U(t_\ell)$. Let $C_t=\binom{k}{t} p^{\ell+t}(1-p)^{k-t}$. For all $T<t_\ell$,
    \begin{align*}
        U(T) &= \frac{k}{\ell}\left( \sum_{t=0}^{T} tC_t-T\left( \sum_{t=0}^{T} C_t -\frac{\ell}{k}\right) \right)\\
        &= \frac{k}{\ell}\left( \sum_{t=0}^{T} tC_t+\sum_{t=T+1}^{t_l}tC_t-\sum_{t=T+1}^{t_\ell}tC_t-T\left( \sum_{t=0}^{T} C_t -\frac{\ell}{k}\right) \right)\\
        &< \frac{k}{\ell}\left( \sum_{t=0}^{T} tC_t+\sum_{t=T+1}^{t_l}tC_t-T\sum_{t=T+1}^{t_\ell}C_t-T\left( \sum_{t=0}^{T} C_t -\frac{\ell}{k}\right) \right)\\
        &=U(t_\ell).
    \end{align*}
    For all $T>t_\ell$,
    \begin{align*}
        U(T) &= \frac{k}{\ell}\left( \sum_{t=0}^{T} tC_t-T\left( \sum_{t=0}^{T} C_t -\frac{\ell}{k}\right) \right)\\
        &= \frac{k}{\ell}\left( \sum_{t=0}^{T} tC_t-\sum_{t=t_\ell+1}^{T}tC_t+\sum_{t=t_\ell+1}^{T}tC_t-T\left( \sum_{t=0}^{T} C_t -\frac{\ell}{k}\right) \right)\\
        &< \frac{k}{\ell}\left( \sum_{t=0}^{T} tC_t-\sum_{t=t_\ell+1}^{T}tC_t+T\sum_{t=t_\ell+1}^{T}C_t-T\left( \sum_{t=0}^{T} C_t -\frac{\ell}{k}\right) \right)\\
        &=U(t_\ell).
    \end{align*}
    We conclude the proposition.

\subsection{Proof of Proposition~\ref{conjec:complex_ineq}}
\label{append:conjecture2}
The proof consists of four parts:
\begin{enumerate}
    \item we first prove that the inequality holds for small $k$, with $k\leq\frac{\sqrt5+1}2\ell$,
    \item next, we prove that the inequality holds for large $k$, with $k\geq29\ell$,
    \item for the middle regime $\frac{\sqrt5+1}2\ell\leq k\leq29\ell$, we show that the inequality holds for sufficiently large $\ell$, with $\ell\geq3947$,
    \item finally, the remaining values of $(k,\ell)$ with $\ell<3947$ and $\frac{\sqrt5+1}2\ell\leq k\leq29\ell$ are checked by computer programs.
\end{enumerate}

\begin{proposition} \label{prop::conjecture2_smallK}
    $\binom{k}{\ell}\left(1-\sqrt[\ell]{\frac{\ell}{k}}\right)^{k-\ell}<1$ holds when $k\leq \frac{\sqrt{5}+1}{2}\ell$.
\end{proposition}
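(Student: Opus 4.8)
The plan is to reduce the whole inequality to the elementary fact that $r\ln r<1$ for $r=k/\ell$ in the allowed range. Write $d=k-\ell\ge 1$ and $r=k/\ell$, so that the hypothesis $k\le\tfrac{\sqrt5+1}{2}\ell$ becomes $1<r\le\varphi:=\tfrac{1+\sqrt5}{2}$, and the quantity to be bounded is $\binom{k}{\ell}\bigl(1-\sqrt[\ell]{\ell/k}\bigr)^{d}=\binom{k}{d}\bigl(1-\sqrt[\ell]{\ell/k}\bigr)^{d}$ after using $\binom{k}{\ell}=\binom{k}{d}$.

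First I would bound the base $1-\sqrt[\ell]{\ell/k}$ from above. Since $\sqrt[\ell]{\ell/k}=\exp\!\bigl(\tfrac1\ell\ln(\ell/k)\bigr)$ and $e^{t}\ge 1+t$ for every real $t$, we get $\sqrt[\ell]{\ell/k}\ge 1+\tfrac1\ell\ln(\ell/k)$, hence
$$1-\sqrt[\ell]{\tfrac{\ell}{k}}\ \le\ \tfrac1\ell\ln\tfrac{k}{\ell}.$$
Combining this with the standard estimate $\binom{k}{d}\le k^{d}/d!$ yields
$$\binom{k}{d}\Bigl(1-\sqrt[\ell]{\tfrac{\ell}{k}}\Bigr)^{d}\ \le\ \frac{k^{d}}{d!}\Bigl(\tfrac{\ln(k/\ell)}{\ell}\Bigr)^{d}\ =\ \frac{(r\ln r)^{d}}{d!}.$$
It then remains to observe that $r\ln r<1$ on $(1,\varphi]$: the function $h(r)=r\ln r$ satisfies $h'(r)=\ln r+1>0$ for $r>1/e$, so $h$ is increasing on $(1,\varphi]$ and $r\ln r\le\varphi\ln\varphi\approx 0.779<1$ (equivalently, the root of $r\ln r=1$ is near $1.763$, comfortably above $\varphi$). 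Since $d\ge 1$ this gives $(r\ln r)^{d}<1$, and $1/d!\le 1$, so the product is $<1$, which is exactly the claim.

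I do not expect a real obstacle in this regime — the statement is restricted precisely to where the crude bounds above still close. The one point worth flagging in the write-up is the direction of the estimate on $1-\sqrt[\ell]{\ell/k}$: the naive Bernoulli bound $(1-x)^{1/\ell}\le 1-\tfrac{x}{\ell}$ points the wrong way, so one must invoke $e^{t}\ge 1+t$ instead. It is also worth remarking that the threshold $\tfrac{\sqrt5+1}{2}$ is chosen so that $r\ln r$ stays below $1$; this is exactly why the argument breaks for larger $k$ and why the remaining ranges in Proposition~\ref{conjec:complex_ineq} require the separate, finer analysis.
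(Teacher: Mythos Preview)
Your argument is correct. It shares the paper's opening move---bounding $1-\sqrt[\ell]{\ell/k}\le \frac{1}{\ell}\ln\frac{k}{\ell}$ via $e^t\ge 1+t$ (the paper phrases this as $\ln(1-x)<-x$)---but then diverges slightly. The paper applies a second logarithm bound $\ln r<r-1$ to reduce everything to the polynomial inequality $r(r-1)\le 1$, whose root is \emph{exactly} the golden ratio; this is why the threshold $\frac{\sqrt5+1}{2}$ appears. You instead keep the logarithm and check $r\ln r<1$ numerically at $r=\varphi$, which is sharper (your bound actually holds up to $r\approx 1.763$) and lets you retain the extra $1/d!$ factor from the tighter binomial estimate $\binom{k}{d}\le k^d/d!$. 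So your remark that ``the threshold $\varphi$ is chosen so that $r\ln r$ stays below $1$'' is slightly off: $\varphi$ is the natural endpoint of the \emph{paper's} cruder route, not yours. Both approaches are elementary and essentially equivalent in this regime; the paper's has the aesthetic advantage of explaining the golden ratio, yours the practical advantage of a little slack.
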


\begin{proof}
    Since $\ln(1-x)< -x$ for $x\in (0,1)$, substituting $x=\frac{1}{\ell}\ln\frac{k}{\ell}$ into the expression, we have
    \begin{equation*}
        \ln\left(1-\frac{1}{\ell}\ln\frac{k}{\ell}\right) < -\frac{1}{\ell}\ln\frac{k}{\ell},
    \end{equation*}
    where the inequality holds when $k\leq \ell e^\ell$. Exponentiating both sides, we have
    \begin{equation*}
        1-\frac{1}{\ell}\ln\frac{k}{\ell} < e^{-\frac{1}{\ell}\ln\frac{k}{\ell}} = \sqrt[\ell]{\frac{\ell}{k}}.
    \end{equation*}
    Combining with $\binom{k}{\ell}\leq k^{k-\ell}$, we further have
    \begin{equation*}
        \binom{k}{\ell}\left(1-\sqrt[\ell]{\frac{\ell}{k}}\right)^{k-l} < k^{k-\ell} \cdot \left(\frac{1}{\ell}\ln\frac{k}{\ell}\right)^{k-\ell}.
    \end{equation*}
    Since $\ln(1+x) < x$ for $x > 0$, substituting $x=\frac{k}{\ell}-1$ into the expression, we have
    \begin{equation*}
        k^{k-\ell} \cdot \left(\frac{1}{\ell}\ln\frac{k}{\ell}\right)^{k-\ell} < k^{k-\ell} \cdot \left(\frac{1}{\ell}\cdot \frac{k-\ell}{\ell}\right)^{k-\ell} = \left(\frac{k(k-\ell)}{\ell^2} \right)^{k-\ell}.
    \end{equation*}
    When $k\leq \frac{\sqrt{5}+1}{2}\ell$, we have
    \begin{equation*}
        \frac{k(k-\ell)}{\ell^2} \le \frac{\sqrt{5}+1}{2}\left(\frac{\sqrt{5}+1}{2}-1\right) = 1,
    \end{equation*}
    which completes the proof.
\end{proof}

\begin{proposition}\label{prop::conjecture2_largeK}
    $\binom{k}{\ell}\left(1-\sqrt[\ell]{\frac{\ell}{k}}\right)^{k-l}<1$ holds when $k\geq 29\ell$.
\end{proposition}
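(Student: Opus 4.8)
The plan is to reduce the inequality to a one‑variable estimate in the ratio $r := k/\ell \ge 29$ and then verify that estimate by a short monotonicity argument.

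First I would apply two standard elementary bounds. From $\binom{k}{\ell} \le \left(\tfrac{ek}{\ell}\right)^{\ell} = (er)^{\ell}$ and from $1 - y \le e^{-y}$ applied with $y = \sqrt[\ell]{\ell/k} = r^{-1/\ell}$, we obtain
\[
\binom{k}{\ell}\left(1 - \sqrt[\ell]{\tfrac{\ell}{k}}\right)^{k-\ell} \;\le\; (er)^{\ell}\cdot e^{-(k-\ell)\,r^{-1/\ell}} \;=\; \exp\!\Big(\ell(1+\ln r) - (k-\ell)\,r^{-1/\ell}\Big).
\]
So it suffices to make the exponent negative, i.e.\ to show $\ell(1+\ln r) < (k-\ell)\,r^{-1/\ell}$. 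Since $k-\ell = (r-1)\ell$, dividing by $\ell$ reduces the whole statement to
\[
1 + \ln r \;<\; (r-1)\,r^{-1/\ell}.
\]

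Next I would invoke $\ell \ge 2$ — this is exactly where the hypothesis $\ell\ge 2$ of Proposition~\ref{conjec:complex_ineq} is used. Since $r>1$, the map $\ell \mapsto r^{-1/\ell}$ is increasing, so $r^{-1/\ell} \ge r^{-1/2}$, and it is enough to prove the $\ell$‑free inequality
\[
1 + \ln r \;<\; (r-1)\,r^{-1/2} \;=\; \sqrt{r} - \frac{1}{\sqrt{r}} \qquad\text{for all real } r \ge 29 .
\]
Set $g(r) = \sqrt{r} - r^{-1/2} - 1 - \ln r$. A direct evaluation gives $g(29) = \sqrt{29} - 1/\sqrt{29} - 1 - \ln 29 > 0$ (numerically $\approx 5.199 - 4.367 > 0$). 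For the derivative, $g'(r) = \tfrac12 r^{-1/2} + \tfrac12 r^{-3/2} - r^{-1}$, and $\tfrac12 r^{-1/2} - r^{-1} = r^{-1}\big(\tfrac12\sqrt{r}-1\big) > 0$ whenever $r > 4$; adding $\tfrac12 r^{-3/2} > 0$ gives $g'(r) > 0$ for all $r \ge 29$. Hence $g$ is increasing on $[29,\infty)$ and $g(r) \ge g(29) > 0$ there, which establishes the displayed inequality and therefore the proposition.

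I expect no serious obstacle here: the argument is essentially routine, and the only points that need care are the directions of the two elementary inequalities ($\binom{k}{\ell}\le(ek/\ell)^{\ell}$ and $1-y\le e^{-y}$) and the explicit use of $\ell\ge 2$ to replace $r^{-1/\ell}$ by the $\ell$‑free lower bound $r^{-1/2}$. The threshold $29$ is comfortably more than this chain of estimates needs (the inequality $1+\ln r<\sqrt r-1/\sqrt r$ already holds well below $r=29$), so no delicate numerical tuning is required.
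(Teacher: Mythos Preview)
Your proof is correct and follows essentially the same approach as the paper: both use the bounds $\binom{k}{\ell}\le(ek/\ell)^\ell$ and $1-y\le e^{-y}$, reduce to a one-variable inequality in $r=k/\ell$, invoke $\ell\ge 2$ to replace $r^{-1/\ell}$ by $r^{-1/2}$, and finish with a monotonicity check at $r=29$. Your reduction is marginally cleaner---keeping the exact factor $(r-1)r^{-1/\ell}$ instead of the paper's cruder step $\ell(\ell/k)^{1/\ell}<\ell$ yields the slightly easier target $1+\ln r<\sqrt{r}-1/\sqrt{r}$ rather than the paper's $2+\ln t<\sqrt{t}$---but the argument is otherwise identical.
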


\begin{proof}
    Since $\ln(1-x)< -x$ for $x\in (0,1)$, substituting $x=\left(\frac{\ell}{k}\right)^\frac{1}{\ell}$ into the expression, we have
    \begin{equation*}
        \ln\left(1-\left(\frac{\ell}{k}\right)^\frac{1}{\ell}\right)< - \left(\frac{\ell}{k}\right)^\frac{1}{\ell}.
    \end{equation*}
    Multiplying both sides by $(k-\ell)$ and exponentiating both sides, we further have
    \begin{equation*}
        \left(1-\left(\frac{\ell}{k}\right)^\frac{1}{\ell}\right)^{k-\ell}< \exp\left(-(k-\ell)\left(\frac{\ell}{k}\right)^\frac{1}{\ell}\right).
    \end{equation*}
    Since $\binom{k}{\ell} \le \left(\frac{ek}{\ell}\right)^\ell$, we only need to show
    \begin{equation*}
        \left(\frac{ek}{\ell}\right)^\ell \cdot\exp\left(-(k-\ell)\left(\frac{\ell}{k}\right)^\frac{1}{\ell}\right)\le 1,
    \end{equation*}
    which is equivalent to
    \begin{equation*}
        \ell(1+\ln k -\ln \ell) -k\left(\frac{\ell}{k}\right)^\frac{1}{\ell}+\ell\left(\frac{\ell}{k}\right)^\frac{1}{\ell}\le 0.
    \end{equation*}
    Since $\ell\left(\frac{\ell}{k}\right)^\frac{1}{\ell} < \ell$ and let $t=\frac{k}{\ell}$, we have
    \begin{equation*}
        \ell(1+\ln k -\ln \ell) -k\left(\frac{\ell}{k}\right)^\frac{1}{\ell}+\ell\left(\frac{\ell}{k}\right)^\frac{1}{\ell} < 2\ell +\ell \ln t - \ell\cdot t^{1-\frac{1}{\ell}}\le (2+\ln t - t^{\frac{1}{2}})\ell,
    \end{equation*}
    and it suffices to show $(2+\ln t - t^{\frac{1}{2}})\ell<0$.
    
    Since $\left(2+\ln t - t^{\frac{1}{2}}\right)$ is monotonically decreasing with respect to $t\ge 4$ and $2+\ln 29 - 29^{\frac{1}{2}}<0$, we have $\left(2+\ln t - t^{\frac{1}{2}}\right)\ell < 0$ when $t \ge 29$, i.e., $k\ge 29\ell$.
\end{proof}

\begin{proposition}
    $\binom{k}{\ell}\left(1-\sqrt[\ell]{\frac{\ell}{k}}\right)^{k-\ell}<1$ holds when $\frac{\sqrt{5}+1}{2}\ell< k< 29\ell$ and $\ell \ge 3947$.
\end{proposition}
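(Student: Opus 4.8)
The plan is to reduce the claim, by taking logarithms, to a one‑variable inequality in the ratio $t:=k/\ell\in(\tfrac{\sqrt5+1}2,29)$, following the same scheme as the proofs of Propositions~\ref{prop::conjecture2_smallK} and~\ref{prop::conjecture2_largeK}, but with a sharper linearization of $1-\sqrt[\ell]{\ell/k}$. First I would use the standard estimate $\binom{k}{\ell}\le(ek/\ell)^\ell=(et)^\ell$. The decisive choice is the bound on $1-\sqrt[\ell]{\ell/k}$: writing $\sqrt[\ell]{\ell/k}=t^{-1/\ell}=e^{-(\ln t)/\ell}$ and applying $e^{-x}\ge 1-x$ gives $0<1-t^{-1/\ell}\le\tfrac{\ln t}{\ell}$. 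Since the exponent $k-\ell=\ell(t-1)$ is positive and $x\mapsto x^{\ell(t-1)}$ is increasing on $(0,\infty)$, this yields
\[
\binom{k}{\ell}\Bigl(1-\sqrt[\ell]{\tfrac{\ell}{k}}\Bigr)^{k-\ell}\le(et)^\ell\Bigl(\tfrac{\ln t}{\ell}\Bigr)^{\ell(t-1)}=\Bigl[(et)\bigl(\tfrac{\ln t}{\ell}\bigr)^{t-1}\Bigr]^{\ell},
\]
so it suffices to prove $(et)\bigl(\tfrac{\ln t}{\ell}\bigr)^{t-1}<1$, i.e.\ $\ln\ell>g(t)$ where $g(t):=\tfrac{1+\ln t}{t-1}+\ln\ln t$.

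Next I would bound $g$ uniformly on $(\tfrac{\sqrt5+1}2,29)$ using monotonicity of its two pieces. The function $h(t):=\tfrac{1+\ln t}{t-1}$ has derivative $\tfrac{-1/t-\ln t}{(t-1)^2}<0$ for $t>1$, hence is decreasing, so $h(t)<h(\tfrac{\sqrt5+1}2)=\tfrac{1+\ln\frac{\sqrt5+1}2}{(\sqrt5-1)/2}<2.4$; and $\ln\ln t$ is increasing for $t>1$, so $\ln\ln t<\ln\ln29<1.22$. Therefore $g(t)<3.62$ for every admissible $t$, while $\ln\ell\ge\ln3947>8>3.62$. This proves $\ln\ell>g(t)$, and hence the desired inequality, for all pairs $(k,\ell)$ in the stated range.

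The argument is essentially routine once the pieces are arranged; the only point demanding care is the second step. The linearization used in Proposition~\ref{prop::conjecture2_largeK}, namely $\ln(1-x)<-x$ applied with $x=t^{-1/\ell}$, produces the quantity $2+\ln t-t^{1-1/\ell}$, which is negative only for $t\gtrsim29$ and so is useless throughout the present regime; replacing it by $1-t^{-1/\ell}\le(\ln t)/\ell$ is exactly what makes the whole interval $t\in(\tfrac{\sqrt5+1}2,29)$ accessible, at the price of requiring $\ell$ to be large. The threshold $\ell\ge3947$ is far from tight — the estimates above already give the inequality for, say, $\ell\ge40$ — but any explicit bound is acceptable here, since the finitely many remaining pairs $(k,\ell)$ with $\ell<3947$ in this range are then handled by direct computation, as in the last part of the proof of Proposition~\ref{conjec:complex_ineq}.
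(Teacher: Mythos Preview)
Your proof is correct and follows essentially the same approach as the paper: both use the bounds $\binom{k}{\ell}\le(ek/\ell)^\ell$ and $1-(\ell/k)^{1/\ell}\le\tfrac{1}{\ell}\ln(k/\ell)$, substitute $t=k/\ell$, and reduce to showing $(et)(\tfrac{\ln t}{\ell})^{t-1}<1$. The only difference is in the last numerical step: the paper crudely upper-bounds this product by $29e\,(\tfrac{\ln 29}{3947})^{(\sqrt5-1)/2}$ (worst-casing each factor), which is just barely below $1$ and is in fact the origin of the constant $3947$; your logarithmic decomposition into $h(t)+\ln\ln t$ and separate monotonicity bounds is sharper and, as you observe, already works for $\ell\gtrsim 40$.
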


\begin{proof}
    First, we can use a similar analysis as Proposition~\ref{prop::conjecture2_smallK} and Proposition~\ref{prop::conjecture2_largeK}, we have $\binom{k}{\ell}\le \left(\frac{ek}{\ell}\right)^\ell$ and $\left(1-\sqrt[\ell]{\frac{\ell}{k}}\right)^{k-\ell} < \left(\frac{1}{\ell}\ln\frac{k}{\ell}\right)^{k-\ell}$. Thus,
    \begin{equation*}
        \binom{k}{\ell}\left(1-\sqrt[\ell]{\frac{\ell}{k}}\right)^{k-l} < \left(\frac{ek}{\ell}\right)^\ell \cdot \left(\frac{1}{\ell}\ln\frac{k}{\ell}\right)^{k-\ell} = \left(\left(\frac{ek}{\ell}\right) \cdot \left(\frac{1}{\ell}\ln\frac{k}{\ell}\right)^{\frac{k}{\ell}-1}\right)^\ell.
    \end{equation*}
    Letting $t=\frac{k}{\ell}$, we only need to show
    \begin{equation*}
        et \left(\frac{1}{\ell}\ln t\right)^{t-1} < 1.
    \end{equation*}
    Since $\ell\ge 3947$ and $t\le 29$, we have $\frac{1}{\ell}\ln t < 1$. Hence, 
    \begin{equation*}
        et \left(\frac{1}{\ell}\ln t\right)^{t-1} < 29e \left(\frac{1}{3947}\ln 29\right)^{\frac{\sqrt{5}-1}{2}}<1.
    \end{equation*}
\end{proof}

For the remaining case, we can use numerical analysis to show that $\binom{k}{\ell}\left(1-\sqrt[\ell]{\frac{\ell}{k}}\right)^{k-\ell}<1$ holds for every $\ell < 3947$ and $\frac{\sqrt{5}+1}{2}\ell< k< 29\ell$.
The codes (in C++) for this checking are presented below.

\begin{lstlisting}[style=cppstyle]
#include<iostream>
#include<algorithm>
#include<cmath>

using namespace std;

int main()
{
   for (int ell = 2; ell < 3948; ell++)
   {
      for (int k = (1.5 * ell); k < ell * 29 + 1; k++)
      {
         double result = 1;
         for (int i = 0; i < k - ell; i++)
         {
            // one term in the combinatorial number
            result *= 1.0 * (k - i) / (i + 1); 
            // one term for (1 - sqrt[ell]{ell / k})^{k - ell}
            result *= 1 - pow(1.0 * ell / k , 1.0 / ell); 
         }
         if (result >= 1)
         {
            cout << "Error: " << ell << " " << k << " " << result << endl;
            return 0;
         }
      }
      cout << ell << endl;
   }
   return 0;
}
\end{lstlisting}

\subsection{Detailed Calculations of $\frac{\partial U(T)}{\partial p}$ in Claim~\ref{prop:decreasing}}
\label{append:partialcalculation}
\begin{align*}
            \allowdisplaybreaks
            \frac{\partial U(T)}{\partial p} =& -k\left( \sum_{t=0}^{T} (T-t)\binom{k}{t} p^{t}(1-p)^{k-t}\right)\\
            &-k\left(\sum_{t=1}^{T} (T-t)k\binom{k-1}{t-1}p^{t}(1-p)^{k-t} -\sum_{t=0}^{T} (T-t)k\binom{k-1}{t}p^{1+t}(1-p)^{k-t-1} \right)\tag{differentiate and rewrite the combinatorial numbers}\\
            =& -k\left( \sum_{t=0}^{T} (T-t)\binom{k}{t} p^{t}(1-p)^{k-t}\right)\\
            &-k\left(\sum_{t=0}^{T-1} (T+1-t)k\binom{k-1}{t}p^{t+1}(1-p)^{k-t-1} -\sum_{t=0}^{T} (T-t)k\binom{k-1}{t}p^{1+t}(1-p)^{k-t-1} \right)\tag{rewrite the second summation from $0$ to $T-1$}\\
            =& -k\left( \sum_{t=0}^{T-1} (T-t)\binom{k}{t} p^{t}(1-p)^{k-t}-k\sum_{t=0}^{T-1} \binom{k-1}{t}p^{1+t}(1-p)^{k-t-1}  \right)\tag{combine the 2nd and 3rd summations; noting that the last terms in the 1st and 3rd summations are $0$}\\
            =& -k\left( \sum_{t=0}^{T-1} (T-t)\binom{k}{t} p^{t}(1-p)^{k-t}-\sum_{t=0}^{T-1} (k-t)\binom{k}{t}p^{1+t}(1-p)^{k-t-1}  \right)\\
            =& -k\left( \sum_{t=0}^{T-1} \binom{k}{t} p^{t}(1-p)^{k-t-1}\left[(T-t)(1-p)-(k-t)p\right] \right).
        \end{align*}

\subsection{The Remaining Part for the Proof of Claim~\ref{prop: F(p_1)<0}}
\label{append:propF(p_1)<0}
    The formal proof for 
    $$\phi(k,p)=k^2p^2(1-p)^{k-2}(1+(k-2)p)>1$$
    goes into two parts.
    For the first part, we have used a computer program to check that $\phi(k,p)>1$ holds for all $k=3,4,\ldots,1000$ and $p$ given by (\ref{eqn:partial0-p}). The codes are attached to the end of this proof.
    Now, it remains to consider the second part with $k>1000$.

    To characterize the approximation $p\approx\frac{\sqrt{2}}{k}$, we compute
    $$\frac{p}{\frac{\sqrt{2}}{k}}=\frac{1}{\frac{\sqrt2}k+\sqrt{1-\frac1k}}\in\left(\frac1{\frac{\sqrt2}{k}+1},\frac1{\sqrt{1-\frac1k}}\right)\subseteq\left(\frac1{\frac{\sqrt2}{1000}+1},\frac1{\sqrt{1-\frac1{1000}}}\right)\subseteq(0.998,1.001).$$
    Therefore,
    $$0.998\cdot \frac{\sqrt{2}}{k}\leq p\leq 1.001\cdot\frac{\sqrt{2}}{k}. $$
    Substituting this into $\phi$, we have
    \begin{align*}
        \phi(k,p)&=k^2p^2(1-p)^{k-2}(1+(k-2)p)\\
        &\geq k^2\left(0.998\cdot\frac{\sqrt{2}}{k}\right)^2\left(1-1.001\cdot\frac{\sqrt{2}}{k}\right)^{k-2}\left(1+(k-2)\cdot0.998\cdot \frac{\sqrt{2}}{k}\right)\\
        &>\left(0.998\cdot\sqrt2\right)^2\left(1-1.001\cdot\frac{\sqrt{2}}{k}\right)^{k}\left(1+0.998\cdot\sqrt2-2\cdot0.998\cdot \frac{\sqrt{2}}{k}\right)\tag{since $1-1.001\cdot\frac{\sqrt{2}}{k}<1$}\\
        &>\left(0.998\cdot\sqrt2\right)^2\left(1-1.001\cdot\frac{\sqrt{2}}{1000}\right)^{1000}\left(1+0.998\cdot\sqrt2-2\cdot0.998\cdot \frac{\sqrt{2}}{1000}\right)\tag{since $k>1000$ and the function $f(k)=(1-1/k)^k$ is increasing in $k$}\\
        &>1.16,\tag{computed by a calculator}
    \end{align*}
    which concludes the claim.

The codes (in C++) below check that $\phi(k,p)>1$ holds for all $k=3,4,\ldots,1000$ and $p$ given by (\ref{eqn:partial0-p}).

\begin{lstlisting}[style=cppstyle] 
#include<iostream>
#include<algorithm>
#include<cmath>

using namespace std;

int main()
{
   for (int k = 3; k <= 1000; k++)
   {
      double p = 1.0 / (1 + pow(k * (k - 1) / 2, 0.5));
      double E = k * k * p * p * pow(1 - p, k - 2) * (1 + (k - 2) * p);
      cout << E << endl;
      if (E <= 1)
      {
         cout << "Error: " << k << " " << E << endl;
         return 0;	
      }	
   }
   return 0;
}
\end{lstlisting} 

\end{document}